\documentclass[a4paper,11pt]{article}
\usepackage[margin=1in]{geometry}
\usepackage{amsmath,amsfonts, amsthm, mathtools, amssymb}
\usepackage{bm}
\usepackage{bbm}
\usepackage{algorithm}
\usepackage{algpseudocode}
\usepackage{silence}
\usepackage{booktabs}
\usepackage{makecell}
\usepackage[hang,small,bf]{caption}
\usepackage[subrefformat=parens]{subcaption}
\usepackage{graphicx}
\usepackage{thm-restate}

\DeclarePairedDelimiter{\pare}{(}{)}
\DeclarePairedDelimiter{\set}{\{}{\}}
\DeclarePairedDelimiter{\brac}{[}{]}
\DeclarePairedDelimiter{\nor}{\|}{\|}
\DeclarePairedDelimiter{\abs}{|}{|}
\DeclarePairedDelimiter{\ceil}{\lceil}{\rceil}
\DeclarePairedDelimiter{\floor}{\lfloor}{\rfloor}

\newcommand{\OPT}{\mathrm{OPT}}
\newcommand{\ROPT}{R_\mathrm{OPT}}
\newcommand{\LOPT}{L_\mathrm{OPT}}
\newcommand{\NC}{\mathrm{NC}}
\newcommand{\E}{\mathbb{E}}
\newcommand{\mE}{\mathrm{E}}
\newcommand{\cP}{\mathcal{P}}

\newcommand{\vol}{\mathrm{vol}}
\newcommand{\bbm}{\mathbbm{1}}

\theoremstyle{plain}
\newtheorem{dfn}{Definition}[section]

\newtheorem{lem}[dfn]{Lemma}
\newtheorem{clm}[dfn]{Claim}
\newtheorem{thm}[dfn]{Theorem}
\newtheorem{cor}[dfn]{Corollary}
\newtheorem{rem}[dfn]{Remark}

\newtheorem{prob}[dfn]{Problem}

\begin{document}
\title{{\bf EPTAS for Hard Graph Cut Problems for Dense Graphs}}
\author{Kaisei Deguchi\thanks{The University of Tokyo, Tokyo, Japan, \texttt{deguchi-kaisei919@g.ecc.u-tokyo.ac.jp}.\\ 
    Supported by JSPS JP25K24465, 26K21777 and JST ASPIRE JPMJAP2302.}
    \and 
    Ken-ichi Kawarabayashi\thanks{National Institute of Informatics \& The University of Tokyo, Tokyo, \texttt{k\_keniti@nii.ac.jp}.\\ 
    Supported by JSPS Kakenhi JP25K24465, 26K21777 and by JST ASPIRE JPMJAP2302.}
    \and     
    Hiroaki Mori\thanks{The University of Tokyo, Tokyo, Japan, \texttt{mori-hiroaki0102@g.ecc.u-tokyo.ac.jp}.\\ 
    Supported by JSPS JP25K24465, 26K21777 and JST ASPIRE JPMJAP2302.
    }
}

\date{\today}
\maketitle

\begin{abstract}
  Everywhere-$\delta$-dense graphs are defined as graphs on $n$ vertices in which every vertex has degree at least $\delta n$ for some constant $\delta > 0$.
  Approximation schemes are vital for handling NP-hard optimization problems, but for many graph cut problems, existing PTAS algorithms often suffer from running times of $n^{f(1/\varepsilon)}$.
  In this paper, we bring PTASs down to EPTASs for several fundamental minimization problems on everywhere-$\Omega(1)$-dense graphs.
  Specifically, we present the first Efficient Polynomial-Time Approximation Scheme (EPTAS), running in time $f(1/\varepsilon)n^{O(1)}$, for the \textsc{ConstrainedMinCut} problem under a global constraint on vertex weights, a problem that captures \textsc{BalancedSeparator} and \textsc{SmallSetExpansion}.
  Moreover, we give the first EPTASs for \textsc{MinQuotientCut} and \textsc{ProductSparsestCut} on everywhere-$\delta$-dense graphs with integer-valued dense vertex weights;
  these problems generalize the four well-known problems \textsc{UniformSparsestCut}, \textsc{EdgeExpansion}, \textsc{Conductance}, and \textsc{NormalizedCut}.

  Our main technical contribution is an EPTAS for \textsc{ConstrainedMinCut}, based on the weak regularity lemma and sampling and estimation techniques.
  We then obtain EPTASs for \textsc{MinQuotientCut} and \textsc{ProductSparsestCut} via a unified reduction that invokes this algorithm as a subroutine.
  In contrast, previous works giving PTASs for these problems on everywhere-$\delta$-dense graphs typically rely on powerful tools such as the Lasserre hierarchy or specific integer programming techniques \cite{ARORA1999193,guruswami2011lasserre}, which we avoid.
\end{abstract}

\newpage

\section{Introduction}

\subsection{Approximation Schemes}

It is known that many optimization problems cannot be solved in polynomial time, under the assumption that P $\ne$ NP.
Approximation algorithms are among the most important approaches for handling such hard problems.
To be more precise, approximation algorithms are algorithms that, in polynomial time, produce a solution that is guaranteed to be close to the optimal solution within a particular factor.

In this paper, we are particularly interested in approximation schemes.
Let $f$ be a function that depends only on $\varepsilon$.
A \emph{polynomial-time approximation scheme (PTAS)} is a deterministic algorithm that takes an instance of an optimization problem and a parameter $\varepsilon > 0$ and, in $n^{f(1/\varepsilon)}$ time, produces a solution that is within a factor of $1 + \varepsilon$ of the optimal solution for minimization problems (or within a factor of $1 - \varepsilon$ for maximization problems).
A PTAS allows us to obtain a solution arbitrarily close to optimal by choosing a sufficiently small $\varepsilon$, although the running time may increase rapidly as $\varepsilon$ decreases.
To overcome this difficulty, let us consider an \emph{efficient polynomial-time approximation scheme (EPTAS)}, which is a PTAS with running time $f(1/\varepsilon) n^{O(1)}$, allowing us to obtain a solution arbitrarily close to optimal more efficiently than a PTAS.
The randomized analogue of an EPTAS is called an EPRAS, i.e., an algorithm that runs in time $f(1/\varepsilon)n^{O(1)}$ and that returns a $(1+\varepsilon)$-approximate solution for minimization problems (or a $(1-\varepsilon)$-approximate solution for maximization problems) with constant probability.

Unfortunately, many NP-hard problems do not even admit a PTAS unless P $=$ NP \cite{arora1998proof}. However, this result does not refute the existence of a PTAS for some restricted classes of instances.
For example, Arora, Karger, and Karpinski \cite{ARORA1999193} presented the first PTAS in $n^{O(1/\varepsilon^2)}$ time for several NP-hard problems for dense graphs.
Indeed, there has been extensive work on PTASs for maximization problems for dense graphs \cite{de2000polynomial,frieze1999quick,goldreich1998property,karpinski2001polynomial,mathieu2008yet,yoshida2014approximation}.
In addition, Goldreich, Goldwasser, and Ron presented an EPRAS for \textsc{MaxCut} and additive-error approximation algorithms for some cut problems by applying property testing techniques \cite{goldreich1998property}.
As stated above, our focus is on NP-hard problems for dense graphs.

\subsection{Minimization Problems for Dense Graphs}
To the best of our knowledge, there are few known approximation results for NP-hard minimization problems on dense graphs. Indeed, the density required for PTASs for minimization problems is, in a sense, often stronger than that for maximization problems.
One reason is that maximization problems on dense graphs also have large optimal solution values, which allow large additive error (i.e., small relative error), thereby yielding a PTAS or even an EPTAS.
On the other hand, since density alone in the minimization setting does not, in general, ensure a large optimal solution value, even when the given graph instances are dense, it is difficult to apply techniques from maximization problems to minimization problems directly.
This difference makes it substantially more challenging to design approximation algorithms that achieve small relative error for minimization problems.

Nevertheless, some PTASs for minimization problems for dense graphs are known.
For example, several PTASs for CSPs on dense instances have been developed:
Bazgan, Fernandez de la Vega, and Karpinski gave PTASs for specific classes of minimum CSPs \cite{bazgan2003polynomial,bazgan2002approximability},
Karpinski and Schudy gave EPRASs for minimum CSPs on specific (\textit{fragile} or \textit{rigid}, i.e., if we independently round or guess the assignments of vertices, the objective value only shifts locally) classes of dense instances \cite{karpinski2009linear}, and
Giotis and Guruswami gave a PTAS for correlation clustering with a fixed number of clusters \cite{giotis2005correlation}.
Guruswami and Sinop \cite{guruswami2011lasserre} presented a unified approximation scheme for optimization problems that are formulated as specific quadratic integer programs with positive semidefinite objectives and linear constraints.
Specifically, they gave a $\tfrac{1 + \varepsilon}{\min\set{1, \lambda_r}}$-approximation algorithm in $n^{O(r/\varepsilon^2)}$ time for several graph cut problems, including \textsc{UniformSparsestCut}, \textsc{EdgeExpansion}, \textsc{BalancedSeparator}, \textsc{Conductance}, \textsc{NormalizedCut}, and \textsc{SmallSetExpansion} (for the precise definitions of these problems, see below) by using the Lasserre hierarchy, where $\lambda_r$ is the $r$-th smallest eigenvalue of the normalized Laplacian of the graph.
Their running time $n^{O(r/\varepsilon^2)}$ depends on the $O(r/\varepsilon^2)$ rounds of the Lasserre hierarchy.
If the graph has the property known as low threshold rank, then these algorithms imply PTASs for these problems.

One of the most powerful tools for designing approximation algorithms for dense graphs is the regularity lemma.
Szemerédi's regularity lemma is a fundamental result in graph theory that states that every sufficiently large graph can be approximated by a union of a bounded number of random-like bipartite graphs called regular pairs.
Building on this perspective, Frieze and Kannan \cite{frieze1999quick} presented an algorithmic version of the regularity lemma and gave a constant-time approximation scheme.
Subsequently, many researchers have improved the error bounds of the partition \cite{oveis2013new}, and a deterministic version of the regularity lemma was presented \cite{fox2019fast}.

In this paper, we focus on graph cut problems, including those with global constraints and those minimizing cut ratios.
We obtain an EPTAS for \textsc{ConstrainedMinCut} on the so-called everywhere-$\delta$-dense graphs by using the weak regularity lemma \cite{frieze1999quick,fox2019fast} and sampling and estimation techniques, while avoiding heavier tools such as the Lasserre hierarchy used in previous PTAS results.
Moreover, we obtain EPTASs for \textsc{MinQuotientCut} and \textsc{ProductSparsestCut} through a unified reduction from the EPTAS for \textsc{ConstrainedMinCut}.

More details are given below.

\subsection{Graph Cut Problems}
For two sets $S, T$, let $\mE(S, T)$ denote the set of edges with one endpoint in $S$ and the other in $T$, i.e., $\mE(S, T) = \set{(u, v) \in E \mid u \in S, v \in T}$.
Note that an edge with both endpoints in $S \cap T$ is counted twice.
For $S \subseteq V$, let $\overline{S} := V \setminus S$.
In particular, we call $\mE(S, \overline{S})$ the cut induced by $S$, and $\abs{\mE(S, \overline{S})}$ the cut size of $S$.
Let $d_v$ be the degree of vertex $v \in V$.
We define the volume of $S \subseteq V$ as $\vol(S) := \sum_{v \in S} d_v$.
For $S \subseteq V$ and a function $c$ on $V$, let $c(S)$ denote $\sum_{v \in S} c_v$.

We now define several graph cut problems.

\begin{prob}[\textsc{BalancedSeparator}]
  Given an unweighted graph $G=(V, E)$, a parameter $t \ (1 \le t \le n/2)$, and a constant $\zeta \ (\zeta > 0)$,
  find a cut $S$ such that $t(1 - \zeta) \le |S| \le t(1 + \zeta)$ and $\abs{\mE(S, \overline{S})}$ is minimized.
\end{prob}
When $t = n/2$ and $\zeta = 0$, the problem coincides with the \textsc{MinBisection} problem, although this case is outside the scope of the above definition.

\begin{prob}[\textsc{UniformSparsestCut}]
  Given an unweighted graph $G=(V, E)$, find a cut $S$ such that the
  sparsity for uniform demand $\Phi(S) = \frac{\abs{\mE(S, \overline{S})}} {\abs{S}\abs{\overline{S}}}$ is minimized.
  We define $\Phi(G) := \min_{S \subseteq V} \Phi(S)$.
\end{prob}

\begin{prob}[\textsc{EdgeExpansion}]
  Given an unweighted graph $G=(V, E)$, find a cut $S$ such that the
  edge expansion $h(S) = \frac{\abs{\mE(S, \overline{S})}} {\min(|S|, |\overline{S}|)}$ is minimized.
  We define $h(G) := \min_{S \subseteq V} h(S)$.
\end{prob}

\begin{prob}[\textsc{SmallSetExpansion}]
  Given an unweighted graph $G=(V, E)$, a parameter $\rho \ (1 \le \rho \le \vol(V)/2)$, and a constant $\zeta \ (\zeta > 0)$, find a cut $S$ such that $\rho(1-\zeta) \le \vol(S) \le \rho(1+\zeta)$ and $\abs{\mE(S, \overline{S})}$ is minimized.
\end{prob}

\begin{prob}[\textsc{Conductance}]
  Given an unweighted graph $G=(V, E)$, find a cut $S$ such that the
  conductance $\phi(S) = \frac{\abs{\mE(S, \overline{S})}} {\min(\vol(S), \vol(\overline{S}))}$ is minimized.
  We define $\phi(G) := \min_{S \subseteq V} \phi(S)$.
\end{prob}

\begin{prob}[\textsc{NormalizedCut}]
  Given an unweighted graph $G=(V, E)$, find a cut $S$ such that the
  normalized cut $\NC(S) = \frac{\abs{\mE(S, \overline{S})}} {\vol(S) \vol(\overline{S})}$ is minimized.
  We define $\NC(G) := \min_{S \subseteq V} \NC(S)$.
\end{prob}

\begin{prob}[\textsc{ConstrainedMinCut}]
  Given an unweighted graph $G=(V, E)$ with an integer vertex-weight function $c: V \rightarrow \mathbb{Z}_{\ge 0}$ such that $c_v \le n$ for every $v \in V$,
  a parameter $\rho \ (1 \le \rho \le c(V)/2)$, and a constant $\zeta \ (\zeta > 0)$,
  find a cut $S$ such that $\rho(1-\zeta) \le c(S) \le \rho(1+\zeta)$ and $\abs{\mE(S, \overline{S})}$ is minimized.
  We refer to the constraint $\rho(1 - \zeta) \le c(S) \le \rho(1 + \zeta)$ as the cost constraint.
\end{prob}

\begin{prob}[\textsc{MinQuotientCut}]
  Given an unweighted graph $G=(V, E)$ with an integer vertex-weight function $c: V \rightarrow \mathbb{Z}_{\ge 0}$ such that $c_v \le n$ for every $v \in V$,
  find a cut $S$ such that the
  quotient cut $q(S) = \frac{\abs{\mE(S, \overline{S})}} {\min(c(S), c(\overline{S}))}$ is minimized.
  We define $q(G) := \min_{S \subseteq V} q(S)$.
\end{prob}

\begin{prob}[\textsc{ProductSparsestCut}]
  Given an unweighted graph $G=(V, E)$ with an integer vertex-weight function $c: V \rightarrow \mathbb{Z}_{\ge 0}$ such that $c_v \le n$ for every $v \in V$,
  find a cut $S$ such that the
  sparsity for product demand $\Phi^\times(S) = \frac{\abs{\mE(S, \overline{S})}} {c(S) c(\overline{S})}$ is minimized.
  We define $\Phi^\times(G) := \min_{S \subseteq V} \Phi^\times(S)$.
\end{prob}

The \textsc{SmallSetExpansion} problem \cite{raghavendra2010graph} plays an essential role in complexity theory and hardness of approximation.
Under the Small Set Expansion Hypothesis (SSEH), strong inapproximability results are known for many problems \cite{austrin2012inapproximability,gandhi2014set,manurangsi2018inapproximability}.
Moreover, Raghavendra and Steurer \cite{raghavendra2010graph} showed that SSEH implies the Unique Games Conjecture (UGC), and UGC yields tight (often optimal) inapproximability results for a wide range of problems \cite{khot2007optimal, khot2008vertex, raghavendra2008optimal}.

\subsection{Our Contributions}

Below, we present our main results. We use $\tilde{O}(\cdot)$ to hide polylogarithmic factors.

\begin{dfn}[everywhere-$\delta$-dense graph]
  An unweighted graph $G=(V, E)$ is everywhere-$\delta$-dense if each vertex has degree at least $\delta n$.
\end{dfn}

\begin{dfn}[dense instance of \textsc{ConstrainedMinCut}]
  A \textsc{ConstrainedMinCut} instance $\mathcal{I} = (G, c, \rho, \zeta)$ is called dense if $G$ is an everywhere-$\Omega(1)$-dense graph and there exists a constant $c_0 > 0$ such that $c_0 n \le c_v$ for every vertex $v \in V$.
\end{dfn}

Throughout this paper, we treat the density parameters $\delta$ and $c_0$ as fixed positive constants.
Accordingly, the hidden constants may depend on $\delta$ and $c_0$.
We also treat the interval parameter $\zeta$ as a constant; however, since the EPTAS for \textsc{ConstrainedMinCut} is used as a subroutine in later reductions, we make its dependence on $\zeta$ explicit whenever it is relevant.

\paragraph{Algorithmic Improvements:} Our main result is an EPTAS for \textsc{ConstrainedMinCut}, which yields EPTASs for several problems on everywhere-$\delta$-dense graphs, either by direct implication or by a short unified reduction.
This improves upon prior PTAS results, whose running times are of the form $n^{f(1/\varepsilon)}$.

Table~\ref{tab:prior} summarizes prior results and our contributions.
The heavy $2^{\tilde{O}(1/\varepsilon^c)}$ dependence in the running time stems from our use of the deterministic weak regularity lemma \cite{fox2019fast}, which yields a decomposition of width $O(\varepsilon^{-16})$.
If we allow randomization, we may instead use the randomized weak regularity lemma \cite{frieze1999quick} with width $O(\varepsilon^{-2})$, thereby improving the exponent of $1/\varepsilon$ in the $2^{\tilde{O}(\cdot)}$ term (with high probability).

\begin{restatable}[Main]{thm}{maintheorem}
  \label{thm:eptas_con}
  There exists an EPTAS for the
  \textsc{ConstrainedMinCut} problem
  on dense instances.
\end{restatable}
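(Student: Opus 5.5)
The plan is a case split on the optimal value. Let $\OPT$ be the minimum cut size among feasible $S$, i.e.\ those with $\rho(1-\zeta)\le c(S)\le \rho(1+\zeta)$. Because $c_v\ge c_0 n$ and $c_v\le n$, any feasible $S$ satisfies $|S| \ge \rho(1-\zeta)/n$ and $|\overline{S}|\ge (c(V)-\rho(1+\zeta))/n$.

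\emph{Case 1: $\OPT \ge \eta n^2$, where $\eta=\eta(\varepsilon,\delta,c_0,\zeta)$.} Here an additive error of $\varepsilon\eta n^2$ suffices. Apply the deterministic weak regularity lemma \cite{fox2019fast} with accuracy $\gamma=\mathrm{poly}(\varepsilon\eta)$. This gives a partition $V_1,\dots,V_k$ with $k\le 2^{\mathrm{poly}(1/\gamma)}$ whose cut norm error is at most $\gamma n^2$. For each part, guess the fraction $x_i\in\{0,\gamma,2\gamma,\dots,1\}$ of $V_i$ that lies in $S$. There are $(1/\gamma)^k$ guesses, which is $f(1/\varepsilon)$. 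The cut norm bound implies that the cut size of any $S$ matching the fractions $x_i$ is $\sum_{i,j} x_i(1-x_j)d_{ij}|V_i||V_j| \pm O(\gamma n^2)$. This estimate does not depend on which vertices of $V_i$ are chosen.

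The cost constraint is handled inside each part. Given $x_i$, choose the $x_i|V_i|$ vertices of $V_i$ so that $c(S)$ is steered into the window. The total $c(S)$ can range continuously, in steps of at most $n$, between the sum of the smallest weights and the sum of the largest weights. For a target in that range, a greedy choice achieves it within $n\ll \zeta\rho$ (since $\rho\ge c_0 n\cdot\Omega(1)$ when the set is nonempty). The optimal $S$ is witnessed by some guess, so it shows that the range for that guess contains a feasible value. This yields an additive $O(\gamma n^2)\le \varepsilon\,\OPT$ approximation.

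\emph{Case 2: $\OPT<\eta n^2$.} This is the main obstacle. Take $\eta\ll\delta^2$. In an everywhere-$\delta$-dense graph, a small cut forces structure. If $S$ is feasible with $|\mE(S,\overline S)|<\eta n^2$, then both sides are large: a vertex set of size $<\delta n/2$ sends at least $\delta n/2$ edges per vertex across. So both $|S|$ and $|\overline S|$ are $\Omega(\delta n)$, and all but $O(\sqrt\eta n)$ vertices have most of their neighbors on their own side.

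Following the Karpinski--Schudy style ("rigid"/fragile instances), sample a constant-size set $R$ of $s=\mathrm{poly}(1/\varepsilon)\log(1/\varepsilon)$ vertices. Guess the side of each sampled vertex in the optimum, at a cost of $2^s$ guesses. Then classify every vertex $v$ by comparing its estimated numbers of neighbors on each side, i.e.\ by $|N(v)\cap R\cap S^*|$ versus $|N(v)\cap R\cap\overline{S^*}|$. Sampling estimates these within $\varepsilon\delta n$, so every vertex whose placement is "clear" is placed correctly. The misplaced or ambiguous vertices are the ones with balanced neighborhoods. Each such vertex already contributes $\Omega(\delta n)$ to $\OPT$, so there are at most $O(\OPT/(\delta n))$ of them.

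To turn additive estimation error into relative error, I would run a second round. Using the first-round assignment $S_0$, which is correct up to $O(\OPT/(\delta n))$ vertices, recompute exact neighbor counts $|N(v)\cap S_0|$ for every $v$. Place each vertex by these exact counts. Then fix the cost constraint by moving the cheapest vertices, i.e.\ those with the smallest gain $|N(v)\cap \overline{S}|-|N(v)\cap S|$. Since weights lie in $[c_0n,n]$, restoring the cost only requires moving $O(\zeta\rho/n)$-scale adjustments. More precisely, it requires moving at most about $|\Delta c|/(c_0 n)$ vertices, each costing at most $\OPT/(\delta n)+\varepsilon\delta n$-type amounts beyond optimal.

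The delicate part is bounding the resulting extra cut by $\varepsilon\,\OPT$ rather than by $\varepsilon n^2$. I would charge each wrongly placed vertex to its $\Omega(\delta n)$ optimal cut edges, and use that the number of vertices moved for cost repair is itself $O(\OPT/(\delta n))$. If that fails, I would fall back on the local-rigidity argument of \cite{karpinski2009linear}.

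Running time: $f(1/\varepsilon)\cdot n^{O(1)}$ in both cases. The algorithm outputs the better of the two candidates, which gives Theorem~\ref{thm:eptas_con}.
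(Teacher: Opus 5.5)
\textbf{Case 1.} This case is workable up to two small fixes.
\begin{itemize}
\item The bound $n\ll\zeta\rho$ does not follow from $\rho\ge c_0n$. It follows from $\OPT\ge\eta n^2\Rightarrow\abs{S^*}\ge\eta n\Rightarrow\rho=\Omega(\eta c_0n^2)$.
\item The witnessing guess matches $S^*$'s part counts only up to $\gamma\abs{V_i}$. So the cost range for that guess need not meet the window, and you must allow moving $O(\gamma n)$ further vertices (additive $O(\gamma n^2)$), as the paper's final greedy step does.
\end{itemize}
Guessing per-part fractions instead of the paper's LP over discretized $\bar f_t,\bar g_t$ makes $f(1/\varepsilon)$ doubly exponential, which is still acceptable for an EPTAS.

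\textbf{Gap 1: the small-side regime.} In Case 2, the claim that both sides of a cut of size $<\eta n^2$ have $\Omega(\delta n)$ vertices is false. Your bound $\abs{\mE(S,\overline S)}\ge\abs{S}\delta n/2$ only gives $\abs{S^*}<2\eta n/\delta$ \emph{or} $\abs{S^*}\ge\delta n/2$. Small $\rho$ (e.g.\ $\rho=\Theta(n)$, so $\abs{S^*}=O(1)$) lands in the first alternative. There every vertex, including those of $S^*$, has most neighbours in $\overline{S^*}$, so your classification puts everything on one side. The whole answer then comes from the repair, which is a genuine knapsack: minimize essentially $\vol(S)$ subject to $c(S)$ lying in a window possibly narrower than one weight. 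Adding smallest-gain vertices can overshoot that window. It also loses a constant factor when degree-to-weight ratios vary: if degree-$\delta n$ vertices have weight $c_0n$ and degree-$2\delta n$ vertices have weight $n$, filling the window with the former costs a factor $1/(2c_0)$ more.

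The paper removes this regime first. If $\rho(1+\zeta)\le c_0\frac{\varepsilon}{2}\delta^2n^2$, every feasible $S$ has $\abs{\mE(S,\overline S)}\le\vol(S)\le(1+\varepsilon)\abs{\mE(S,\overline S)}$, so an exact pseudo-polynomial knapsack DP minimizing $\vol(S)$ over the window suffices. Afterwards $\rho=\Omega(\varepsilon n^2)$, which is what gives the $\Omega(n^2)$ window width used later.

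\textbf{Gap 2: the rebalancing step.} In the balanced small-$\OPT$ regime, the step you call delicate is the heart of the theorem and is not proved. Karpinski--Schudy is no fallback, since it handles local (fragile) constraints, not a global interval constraint. The paper's missing idea is an \emph{exact} rebalancing compared against the optimum's own correction:
\begin{itemize}
\item Classify with a constant margin $a$ into $L$, $R$ and an explicit uncertain set $X$.
\item By knapsack DP (weights are integers at most $n$), choose $S\subseteq L$ with $\vol(S)\le\alpha n^2/(1/2+a)$ and $T\subseteq X$. Minimize the linear surrogate (per-vertex cut changes relative to $(L,R)$) subject to the upper cost bound only.
\item Moving $L\cap\ROPT$ out and putting $X\cap\LOPT$ in is feasible for this problem, so the DP does at least as well.
\item The choice $\alpha\le\frac{6\zeta c_0\delta}{35}\cdot\frac{\rho}{c(V)}$ makes the lower cost bound automatic.
\item The leftover error is $O(\alpha/\delta^2)\OPT\le\varepsilon\OPT$.
\end{itemize}
A greedy misplacement of a vertex of $X$ can cost up to $6a\,d_v$, a constant fraction of its share of $\OPT$. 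Greedy would therefore need an $\varepsilon$-scale margin plus an accounting argument you do not give.

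\textbf{Gap 3: determinism.} Your Case 2 is randomized with constant success probability, but the theorem asks for a deterministic EPTAS, and enumerating constant-size samples costs $n^{s(\varepsilon)}$. The paper samples $k\log n$ vertices with $k$ depending only on $\delta$. It derandomizes by enumerating all $n^{O(1)}$ walks of that length on an explicit expander (via Gillman's Chernoff bound), together with all consistent labelings.
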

In particular, by setting $\rho = tn$ and $c_{v}= n$ for every vertex $v \in V$, we may take $c_0 = 1$, and the cost constraint coincides with the cardinality constraint in the instance \textsc{BalancedSeparator}$(G, t, \zeta)$.
Hence, Theorem~\ref{thm:eptas_con} yields an EPTAS for \textsc{BalancedSeparator}$(G, t, \zeta)$.
Similarly, by setting $c_{v}= d_{v}$ for every vertex $v \in V$, the cost constraint coincides with the volume constraint in \textsc{SmallSetExpansion};
moreover, since $G$ is everywhere-$\delta$-dense, we may take $c_0 = \delta$.
Thus, Theorem~\ref{thm:eptas_con} also yields an EPTAS for \textsc{SmallSetExpansion} and hence the following corollary holds.
\begin{cor}
  \label{cor:bs_sse_eptas}
  There exist EPTASs for the problems
  \textsc{BalancedSeparator} and \textsc{SmallSetExpansion}
  on everywhere-$\delta$-dense graphs, and their running times are $\varepsilon^{-O(1)}n^{O(1)} + 2^{\tilde{O}(1/(\zeta^{16}\varepsilon^{32}))}$.
\end{cor}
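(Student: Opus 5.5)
The corollary should follow by reducing each problem to a dense instance of \textsc{ConstrainedMinCut} and invoking Theorem~\ref{thm:eptas_con}. The runtime bound then comes from tracking how that theorem's running time depends on $\varepsilon$ and $\zeta$.

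\textbf{\textsc{BalancedSeparator}.} Given an instance $(G,t,\zeta)$ with $G$ everywhere-$\delta$-dense, I would set $c_v := n$ for every $v$ and $\rho := tn$.

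The weights are valid. They are integers with $c_v \le n$, and $c_0 n \le c_v$ holds with $c_0 = 1$. Also, $1 \le t \le n/2$ gives $1 \le \rho \le n^2/2 = c(V)/2$. For every $S$ we have $c(S) = n|S|$, so the cost constraint $\rho(1-\zeta) \le c(S) \le \rho(1+\zeta)$ is exactly $t(1-\zeta) \le |S| \le t(1+\zeta)$. Hence the feasible cuts and their objective values coincide, and a $(1+\varepsilon)$-approximate solution of the constructed instance is one for the original. The reduction costs $O(n)$ time.

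\textbf{\textsc{SmallSetExpansion}.} I would set $c_v := d_v$ and keep $\rho$ unchanged.

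Again the weights are valid. They are integers with $d_v \le n-1 \le n$, and density gives $d_v \ge \delta n$, so $c_0 = \delta$ works. Moreover $c(S) = \vol(S)$, and the range $1\le \rho\le \vol(V)/2 = c(V)/2$ matches. So the two problems coincide, and the instance is dense as required.

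\textbf{Running time.} This is the only step needing care, because the corollary claims the explicit form $\varepsilon^{-O(1)}n^{O(1)} + 2^{\tilde O(1/(\zeta^{16}\varepsilon^{32}))}$. Theorem~\ref{thm:eptas_con} as stated only guarantees some $f(1/\varepsilon)n^{O(1)}$. I would therefore rely on the quantitative bound established in the proof of Theorem~\ref{thm:eptas_con}.

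That proof uses the deterministic weak regularity lemma~\cite{fox2019fast} with accuracy parameter polynomial in $\zeta\varepsilon$, roughly $\zeta\varepsilon^2$. This gives a partition into $O((\zeta\varepsilon^2)^{-16})$ parts. Computing the partition costs $\varepsilon^{-O(1)}n^{O(1)}$, and enumerating the guessed part profiles costs $2^{\tilde O(1/(\zeta^{16}\varepsilon^{32}))}$.

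Since $\delta$ and $c_0$ ($=1$ or $=\delta$) are constants, both are absorbed into the hidden constants. Adding the linear reduction cost then gives the stated bound.
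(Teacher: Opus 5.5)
Your proposal is correct and follows the paper's own argument. The paper likewise sets $c_v = n$, $\rho = tn$ (so $c_0 = 1$) for \textsc{BalancedSeparator} and $c_v = d_v$ (so $c_0 = \delta$) for \textsc{SmallSetExpansion}, notes that the cost constraint then coincides with the original constraint, and takes the running time from Theorem~\ref{thm:EPTAS}, whose $\zeta^{16}\varepsilon^{32}$ dependence comes from the deterministic weak regularity lemma at accuracy $\varepsilon_0 = \alpha\varepsilon = \Theta(\zeta\varepsilon^2)$, exactly as you reconstructed (your explicit check that $1 \le \rho \le c(V)/2$ is a detail the paper leaves implicit).
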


Moreover, dense instances of \textsc{MinQuotientCut} and \textsc{ProductSparsestCut} admit a short reduction to \textsc{ConstrainedMinCut}.
\begin{dfn}[dense instance of \textsc{MinQuotientCut} and \textsc{ProductSparsestCut}]
  A \textsc{MinQuotientCut} instance $\mathcal{I} = (G, c)$ is called dense if $G$ is an everywhere-$\Omega(1)$-dense graph and there exists a constant $c_0 > 0$ such that $c_0 n \le c_v$ for every vertex $v \in V$.
\end{dfn}
Therefore, the EPTAS for \textsc{ConstrainedMinCut} yields EPTASs for dense instances of \textsc{MinQuotientCut} and \textsc{ProductSparsestCut}.
\begin{restatable}{thm}{reduction}
  \label{thm:mqc_psc}
  There exist EPTASs for the problems
  \textsc{MinQuotientCut} and \textsc{ProductSparsestCut}
  on dense instances, and their running times are $\varepsilon^{-O(1)}n^{O(1)} + 2^{\tilde{O}(1/\varepsilon^{48})}$.
\end{restatable}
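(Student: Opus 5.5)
We reduce both \textsc{MinQuotientCut} and \textsc{ProductSparsestCut} to a bounded number of calls to the EPTAS for \textsc{ConstrainedMinCut} of Theorem~\ref{thm:eptas_con}, by guessing the weight $c(S^*)$ of an optimal cut $S^*$ up to a factor $1\pm\zeta$.
Fix the instance $(G,c)$ and let $S^*$ be optimal. By symmetry we may assume $c(S^*)\le c(V)/2$; otherwise replace $S^*$ by $\overline{S^*}$, which changes neither objective.
Since $c_v\ge c_0 n$ and $c_v\le n$, we have $c_0 n\le c(S^*)\le c(V)/2\le n^2/2$. In fact $c(S^*)$ is bounded below by a constant fraction of $c(V)$, as follows.
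Density gives $|\mE(S,\overline S)|\ge |S|(\delta n-|S|)$. Hence if $|S|\le \delta n/2$, then $q(S)\ge \delta n|S|/(2|S|n)=\delta/2$.
On the other hand, a balanced cut $T$ (say $|T|=\lfloor n/2\rfloor$) has $q(T)\le |E|/(c_0 n^2/2)$, which is only $O(1)$ and can be comparable to $\delta/2$. So instead of arguing that $|S^*|$ is large, we sweep over all scales.

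Concretely, fix $\zeta=\varepsilon/4$ and consider the geometric grid $\rho_i=c_0 n(1+\zeta)^{2i}$, for $i=0,1,\dots$ up to $c(V)/2$. This grid has $O(\zeta^{-1}\log n)$ values, so the overhead is only polynomial.
For each $\rho_i$, run the \textsc{ConstrainedMinCut} EPTAS on $(G,c,\rho_i,\zeta)$ with accuracy $\varepsilon/4$ to obtain $S_i$. Then output the $S_i$ with the best objective value, and compare it also against trivially small sets, which are handled directly.
The instance is dense by the hypotheses, so Theorem~\ref{thm:eptas_con} applies to each call. The grid is chosen so that the intervals $[\rho_i(1-\zeta),\rho_i(1+\zeta)]$ cover $[c_0n,c(V)/2]$ (adjusting the ratio to $(1+\zeta)/(1-\zeta)$ if needed).

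For the analysis, take the index $i$ whose interval contains $c(S^*)$. Then $S^*$ is feasible for that call, so $|\mE(S_i,\overline{S_i})|\le(1+\varepsilon/4)|\mE(S^*,\overline{S^*})|$.
Both $c(S_i)$ and $c(S^*)$ lie in the same interval, so $c(S_i)\ge \frac{1-\zeta}{1+\zeta}c(S^*)$. We also need $\min(c(S_i),c(\overline{S_i}))$ to be comparable to $c(S^*)$. This holds because $c(S_i)\le (1+\zeta)\rho_i$, and we may restrict to $\rho_i\le c(V)/(2(1+\zeta))$, treating the top scale symmetrically.
For \textsc{ProductSparsestCut}, note that $c(\overline{S_i})=c(V)-c(S_i)$ and $c(\overline{S^*})=c(V)-c(S^*)$ differ by a factor of at most $1+O(\zeta)$, since $c(S^*)\le c(V)/2$.
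Combining these bounds gives $q(S_i)\le(1+O(\varepsilon))q(S^*)$ and $\Phi^\times(S_i)\le(1+O(\varepsilon))\Phi^\times(S^*)$. Rescaling $\varepsilon$ finishes the approximation guarantee.

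The main obstacle is the running time. A naive count gives $O(\zeta^{-1}\log n)$ calls, each costing $\varepsilon^{-O(1)}n^{O(1)}+2^{\tilde O(1/(\zeta^{16}\varepsilon^{32}))}$. With $\zeta=\Theta(\varepsilon)$ this is $2^{\tilde O(\varepsilon^{-48})}$, matching the stated bound.
However, the $\log n$ factor multiplies the exponential term. We resolve this with $\log n\cdot 2^{f}\le n+2^{2f}$, or better, by computing the weak regularity partition once and reusing it across all $\rho_i$, so that the exponential enumeration is not repeated per $\rho_i$.

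A second subtlety is the bottom end of the grid, where $c(S^*)$ is near $c_0 n$. This corresponds to $|S|=O(1)$ vertices, and such cuts have $q\ge\delta/2$ as shown above. They are covered by the grid anyway, since the theorem's constraint $1\le\rho$ allows it.
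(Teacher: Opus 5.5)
Your proposal is correct, and its core is the paper's reduction. You use a geometric grid of targets $\rho_j$ with $\zeta=\Theta(\varepsilon)$ and make one black-box call to the \textsc{ConstrainedMinCut} EPTAS per target. You clamp the top interval at $c(V)/2$, so that $\min\{c(S_j),c(\overline{S_j})\}=c(S_j)\ge\frac{1-\zeta}{1+\zeta}c(S^*)$. For \textsc{ProductSparsestCut} you use the estimate $\frac{c(V)-\rho_j(1+\zeta)}{c(V)-\rho_j(1-\zeta)}\ge 1-O(\zeta)$.

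Where you differ is the bottom of the range. The paper splits on $|S^*|$:
\begin{itemize}
\item If $|S^*|$ is below a constant times $\varepsilon n$, it shows the objective is within $1\pm\varepsilon$ of $\vol(S)/c(S)$ (resp.\ $\vol(S)/(c(V)c(S))$), which a single vertex minimizes.
\item Otherwise $c(S^*)\ge c_0\varepsilon\delta n^2$, so its grid has only $O(\frac{1}{\varepsilon}\log\frac{1}{\varepsilon})$ points.
\end{itemize}
You instead let the grid cover all of $[c_0 n,c(V)/2]$. This costs $O(\varepsilon^{-1}\log n)$ calls but needs no separate lemma, because Theorem~\ref{thm:EPTAS} is stated for every $1\le\rho\le c(V)/2$. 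Its small-$\rho$ branch (the knapsack DP minimizing $\vol(S)$) is in effect the same small-set observation the paper uses.

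Your bound $\log n\cdot 2^{f}\le n+2^{2f}$ absorbs the extra factor correctly. In fact, only $O(\varepsilon^{-1}\log\varepsilon^{-1})$ of your grid points have $\rho$ large enough to reach the exponential part at all.

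Your alternative fix, reusing the regularity decomposition across targets, would not by itself stop the enumeration from being repeated. Both the required precision $\alpha\varepsilon$ and the LP depend on $\rho$. You do not need that fix, though.
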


\textsc{EdgeExpansion} and \textsc{Conductance} (resp. \textsc{UniformSparsestCut} and \textsc{NormalizedCut}) are special cases of \textsc{MinQuotientCut} (resp. \textsc{ProductSparsestCut}).
Hence, we have the following corollary.

\begin{restatable}{cor}{corollaryred}
  \label{cor:cor4}
  There exist EPTASs for the problems
  \textsc{UniformSparsestCut}, \textsc{EdgeExpansion}, \textsc{Conductance}, and \textsc{NormalizedCut}
  on everywhere-$\delta$-dense graphs, and their running times are $\varepsilon^{-O(1)}n^{O(1)} + 2^{\tilde{O}(1/\varepsilon^{48})}$.
\end{restatable}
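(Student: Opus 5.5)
The plan is to reduce each of the four problems to a dense instance of \textsc{MinQuotientCut} or \textsc{ProductSparsestCut} by choosing the vertex weights $c$ appropriately, and then apply Theorem~\ref{thm:mqc_psc}. Throughout, let $G=(V,E)$ be everywhere-$\delta$-dense with $n$ vertices. In each case I will check three things: $c$ is integer-valued, $c_v\le n$, and $c_v\ge c_0 n$ for a constant $c_0>0$. These ensure the instance is dense in the sense of the definition preceding Theorem~\ref{thm:mqc_psc}.

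\emph{The cardinality-based problems.} For \textsc{EdgeExpansion} and \textsc{UniformSparsestCut} I set $c_v = n$ for every $v\in V$, so $c_0=1$ works. Then $c(S)=n|S|$ for every $S\subseteq V$. Hence
\[
q(S)=\frac{h(S)}{n}
\qquad\text{and}\qquad
\Phi^\times(S)=\frac{\Phi(S)}{n^2}.
\]
Both objectives are a fixed positive multiple of the original ones. Therefore a $(1+\varepsilon)$-approximate solution for the weighted instance is a $(1+\varepsilon)$-approximate solution for the original problem.

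\emph{The volume-based problems.} For \textsc{Conductance} and \textsc{NormalizedCut} I set $c_v = d_v$. This is an integer with $c_v\le n-1\le n$. Everywhere-$\delta$-density gives $c_v\ge \delta n$, so $c_0=\delta$ works. With this choice $c(S)=\vol(S)$, so $q(S)=\phi(S)$ and $\Phi^\times(S)=\NC(S)$ exactly.

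The running time is inherited directly from Theorem~\ref{thm:mqc_psc}, namely $\varepsilon^{-O(1)}n^{O(1)}+2^{\tilde O(1/\varepsilon^{48})}$. The hidden constants depend on $\delta$ and $c_0\in\{1,\delta\}$, which are fixed.

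The only delicate point is the degenerate cut $S\in\{\emptyset,V\}$, where the objectives are undefined or zero. All weights are strictly positive, so $c(S)>0$ exactly when $S\neq\emptyset$. The nonempty proper subsets are therefore the same in the original and weighted formulations, and the correspondence above holds over the common feasible set. I expect no real obstacle here beyond this bookkeeping.
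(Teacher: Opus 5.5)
Your proposal is correct and is the same argument the paper gives: set $c_v = n$ for \textsc{EdgeExpansion} and \textsc{UniformSparsestCut}, so the objectives rescale by the fixed factors $1/n$ and $1/n^2$. Set $c_v = d_v$ with $c_0=\delta$ for \textsc{Conductance} and \textsc{NormalizedCut}, invoke Theorem~\ref{thm:mqc_psc}, and note that your explicit checks of integrality, $c_v \le n$, and the density constant are slightly more careful than the paper's one-line remark.
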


Previous works giving PTASs for these problems on everywhere-$\delta$-dense graphs typically rely on powerful tools such as the Lasserre hierarchy \cite{guruswami2011lasserre} or specific integer programming techniques \cite{ARORA1999193}.
As a consequence, the resulting running time is of the form $n^{f(1/\varepsilon)}$, which yields only PTASs but not EPTASs.
In contrast, using the weak regularity lemma as the main tool \cite{frieze1999quick,fox2019fast}, we obtain EPTASs running in $f(1/\varepsilon)n^{O(1)}$.

\paragraph{Hardness:} Unlike problems such as \textsc{MaxCut}, for \textsc{ConstrainedMinCut}, the assumption that the input graph is everywhere-$\delta$-dense cannot be weakened to the condition $\abs{E} \ge \Omega(n^2)$, often used to define so-called dense graphs.
To see this, we first observe that by an argument similar to that in \cite{ARORA1999193}, if there exists a PTAS for \textsc{BalancedSeparator} on dense graphs, then there exists a PTAS for \textsc{BalancedSeparator} on general graphs, which clearly contradicts the known hardness result.

Given any (not necessarily dense) instance of \textsc{BalancedSeparator} $(G = (V, E), t (= \Theta(n)), \zeta (= \Theta(1)))$ with $n$ vertices,
we construct an instance $G'$ for \textsc{BalancedSeparator} with $\abs{E(G')} \ge \Omega(\abs{V(G')}^2)$ and show that we can recover a $(1 + \varepsilon)$-approximate solution to the original \textsc{BalancedSeparator} instance from a nearly optimal solution to the constructed instance.

We add a clique $K$ with $t(1+\zeta)+n$ vertices to $G$, and let $G'$ denote the resulting graph, which is dense.
We show that if we run the PTAS on $(G', t, \zeta)$, we can recover a $(1+\varepsilon)$-approximate solution to the original \textsc{BalancedSeparator} instance.
Let $S$ be a feasible solution to the constructed instance, where $X = S \cap V$ and $Y = S \cap K$.

Now, our goal is to show that we can recover a cut $T$ from $S$ such that $\abs{\mE_{G}(T, \overline{T})} \le \abs{\mE_{G'}(S, \overline{S})}$ holds.
The cut size of $S$ is
\begin{equation*}
  \abs{\mE_{G'}(S, \overline{S})} = \abs{\mE_G(X, \overline{X})} + \abs{Y}(\abs{K}-\abs{Y}).
\end{equation*}
When $X$ is a feasible solution to the original instance, the condition $\abs{\mE_{G}(T, \overline{T})} \le \abs{\mE_{G'}(S, \overline{S})}$ holds.

When $\abs{X} < t(1-\zeta)$ holds, we construct a feasible cut $T = X\cup A$ where $A \subseteq V\setminus X$ is an arbitrary vertex set with $\abs{X \cup A} = t(1-\zeta)$.
By the feasibility of $S$, $\abs{A}$ is at most $\abs{Y}$.
Since $\abs{K}-\abs{Y} = t(1+\zeta) + n - \abs{Y} \ge n$ holds,
$\abs{\mE_G(T, \overline{T})} \le \abs{\mE_G(X, \overline{X})} + \abs{A}n \le  \abs{\mE_G(X, \overline{X})} + \abs{Y}n \le \abs{\mE_G(X, \overline{X})} + \abs{Y}(\abs{K}-\abs{Y}) = \abs{\mE_{G'}(S, \overline{S})}$.
The above argument shows that every feasible solution $S$ of $G'$ can be converted
into a feasible solution $T$ of $G$ such that
\begin{equation*}
  \abs{\mE_G(T,\overline T)} \le \abs{\mE_{G'}(S,\overline S)}.
\end{equation*}
Hence $\OPT_G\le \OPT_{G'}$.
Conversely, every feasible solution of $G$ remains feasible for $G'$ when we take no vertices from $K$.
Therefore, $\OPT_{G'}\le \OPT_G$, and hence $\OPT_G = \OPT_{G'}$ holds.

When we run the PTAS for the constructed instance and obtain a $(1+\varepsilon)$-approximate solution $S$, we can construct $T$ such that $\abs{\mE_{G}(T, \overline{T})} \le \abs{\mE_{G'}(S, \overline{S})} \le (1+\varepsilon)\OPT_{G'} = (1+\varepsilon)\OPT_{G}$ holds.
Therefore, if there exists a PTAS for \textsc{BalancedSeparator} on dense graphs, it yields a PTAS for \textsc{BalancedSeparator} on general graphs.

Thus, it is unlikely that there exist PTASs for either \textsc{BalancedSeparator} or \textsc{ConstrainedMinCut} on dense graphs.

\paragraph{Comparison with prior PTASs:}
Our algorithm builds on the same sampling and estimation paradigm as prior PTASs for dense graph problems.
Most prior PTASs for dense CSPs \cite{bazgan2003polynomial,bazgan2002approximability,giotis2005correlation} treat unconstrained problems, and a few \cite{karpinski2009linear} handle hard \emph{local} constraints, such as those in the \textsc{MultiwayCut} problem.
On the other hand, \textsc{BalancedSeparator} and \textsc{SmallSetExpansion} require global interval cardinality or volume constraints, and hence these problems involve non-fragile global constraints.
Therefore, these previous works do not provide PTASs for problems that require non-fragile global constraints.
We give more details in the next subsection.

The most relevant prior work is the PTAS for \textsc{BalancedSeparator} on dense graphs due to Arora, Karger, and Karpinski \cite{ARORA1999193}, which runs in $n^{O(1/\varepsilon^3)}$ time and is therefore not an EPTAS.
Our results improve upon this in two directions: we obtain an EPTAS running in $f(1/\varepsilon)n^{O(1)}$ time, and we extend the framework to the more general \textsc{ConstrainedMinCut} problem, which captures both \textsc{BalancedSeparator} and \textsc{SmallSetExpansion}.
A naive sampling and estimation framework does not produce solutions satisfying the global constraint, and we introduce an additional rebalancing step that adjusts the cut to satisfy the constraint without significantly increasing the cut size.

\begin{table}[t]
  \centering
  \scriptsize
  \setlength{\tabcolsep}{1pt}
  \begin{tabular}{@{} |c|c|c|c|c|c| @{}}
    \toprule
    Problems & Reference                                                                        & Guarantee & Time & Type & Graph class \\

    \midrule
    \makecell[c]{\textsc{UniformSparsestCut}                                                                                            \\
    \textsc{EdgeExpansion}                                                                                                              \\
    \textsc{BalancedSeparator}                                                                                                          \\
    \textsc{Conductance}                                                                                                                \\
    \textsc{NormalizedCut}                                                                                                              \\
      \textsc{SmallSetExpansion}}
             & \cite{guruswami2011lasserre}
             & $\tfrac{1 + \varepsilon}{\min\set{1, \lambda_r}}$
             & $n^{O(r/\varepsilon^2)}$
             & \makecell[c]{PRAS                                                                                                        \\(if $^\exists r = O(1)$, \\$\lambda_r \ge 1 -\varepsilon$)}
             & \makecell[c]{any                                                                                                         \\(where $\lambda_r$ denotes the $r$-th smallest eigenvalue \\of the normalized Laplacian)}\\

    \midrule
    \textsc{BalancedSeparator}
             & \cite{ARORA1999193}
             & $1 + \varepsilon$
             & $n^{O(1/\varepsilon^3)}$
             & PTAS
             & everywhere-$\delta$-dense                                                                                                \\

    \midrule
    \textsc{BalancedSeparator}
             & \cite{frieze1996regularity}
             & $1 + \varepsilon$
             & $\tilde{O}(\varepsilon^{-2})n^{O(1)} + 2^{2^{(1/\varepsilon)^{O(1)}}}$
             & EPRAS
             & everywhere-$\delta$-dense                                                                                                \\

    \midrule
    \textsc{UniformSparsestCut}
             & \cite{frieze1999quick}
             & additive $\varepsilon$
             & $n + 2^{\tilde{O}(1/\varepsilon^2)}$
             & -
             & any                                                                                                                      \\

    \midrule
    \makecell[c]{
    \textsc{BalancedSeparator}                                                                                                          \\
      \textsc{SmallSetExpansion}
    }
             & \textbf{Ours}
             & $1 + \varepsilon$
             & $\varepsilon^{-O(1)}n^{O(1)} + 2^{\tilde{O}(1 / (\zeta^{16} \varepsilon^{32}))}$
             & \textbf{EPTAS}
             & everywhere-$\delta$-dense                                                                                                \\

    \midrule
    \makecell[c]{
    \textsc{UniformSparsestCut}                                                                                                         \\
    \textsc{EdgeExpansion}                                                                                                              \\
    \textsc{Conductance}                                                                                                                \\
      \textsc{NormalizedCut}
    }
             & \textbf{Ours}
             & $1 + \varepsilon$
             & $\varepsilon^{-O(1)}n^{O(1)} + 2^{\tilde{O}(1 / \varepsilon^{48})}$
             & \textbf{EPTAS}
             & everywhere-$\delta$-dense                                                                                                \\

    \bottomrule
  \end{tabular}
  \caption{Summary of results.}
  \label{tab:prior}
\end{table}

\subsection{Technical Contributions}
As mentioned above, we leverage weak regularity and specialized sampling and rebalancing procedures for cost-constrained problems.
To this end, technically speaking, our main tool is to generalize the EPRAS for the \textsc{MinBisection} problem in everywhere-$\delta$-dense graphs with small optimal value \cite{ARORA1999193} to an EPRAS for the \textsc{ConstrainedMinCut} problem in the same setting.
For the \textsc{MinBisection} problem, there is a strong structural characterization of the optimal solution, and this characterization is used to design the EPRAS.
\begin{lem}[Lemma 5.1 in \cite{ARORA1999193}]
  Let $(L, R)$ be an optimal cut for the \textsc{MinBisection} problem.
  Then at least one of $L$ and $R$ has the property that every vertex on this side has more than half of its neighbors on the same side.
\end{lem}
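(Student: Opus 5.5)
The plan is an exchange argument on an optimal bisection $(L,R)$ with $|L|=|R|=n/2$. Call a vertex \emph{unhappy} if at most half of its neighbors lie on its own side. Let $g(v)$ be the change in cut size when $v$ moves to the other side. For $v$ on the side $X$ with complement $Y$,
\[
g(v)=d_X(v)-d_Y(v),
\]
where $d_X(v)$ and $d_Y(v)$ count the neighbors of $v$ in $X$ and $Y$. So $v$ is unhappy exactly when $g(v)\le 0$. I will argue by contradiction: assume both $L$ and $R$ contain an unhappy vertex, say $u\in L$ and $w\in R$.

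The key step is to swap $u$ and $w$, which keeps the bisection balanced. Counting which edges change status, the new cut size is
\[
|\mE(L,R)| + g(u) + g(w) + 2\cdot\mathbbm{1}[uw\in E].
\]
The extra $+2$ appears because the edge $uw$ was counted as becoming uncut in both $g(u)$ and $g(w)$, yet it stays cut after the swap. Without that edge, $g(u)+g(w)\le 0$ already gives a bisection no worse than the optimum.

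To turn this into a contradiction I need strictness, and this is where the main obstacle lies. The paper says ``more than half'', so unhappy only means $g\le 0$, and an edge $uw$ could cost $+2$. I would handle this as follows. Among all optimal bisections, choose one that additionally lexicographically optimizes a secondary potential. Alternatively, read the lemma (as in \cite{ARORA1999193}) as asserting that on one side every vertex has at least half of its neighbors on its own side, up to this $O(1)$ slack.

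Concretely, I would first show there are no unhappy $u\in L$ and $w\in R$ with $g(u)+g(w)+2\,\mathbbm{1}[uw\in E]<0$, since otherwise swapping them contradicts optimality. Assuming unhappiness is strict, $g\le -1$ on both vertices. If $uw\notin E$, then $g(u)+g(w)\le -2<0$, a contradiction. If $uw\in E$, the bound only yields $g(u)+g(w)\ge -2$, so both have $g=-1$ exactly. For that tight case I would pick a different pair. If one side has several unhappy vertices, take a non-adjacent pair across the sides, which exists unless the unhappy sets form a complete bipartite graph between them. If they do form one, move a pair inside it together in a single step and recompute, or appeal to the degree/parity tie-breaking above.

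I expect this adjacency and tie-breaking bookkeeping to be the only delicate part. The core of the proof is the one-line swap inequality.
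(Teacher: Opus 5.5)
Your swap identity is correct: exchanging $u\in L$ and $w\in R$ changes the cut by $g(u)+g(w)+2\,\bbm[uw\in E]$. The trouble is that the obstacle you flag is not bookkeeping that can be removed. The statement as written is false, so none of your proposed completions (a secondary potential, a non-adjacent cross pair, moving pairs, parity) can succeed.

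\textbf{Counterexample to the stated version.} Take $C_4$ with edges $12,23,34,41$. Its only optimal bisections are $(\{1,2\},\{3,4\})$ and $(\{1,4\},\{2,3\})$. In each of them every vertex has exactly one neighbor on each side, so neither side has the ``more than half'' property.

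\textbf{Counterexample to your strengthened reading.} Reading ``unhappy'' as $g\le -1$ (``at least half'') fails too. In $K_n$ with $n$ even, which is everywhere-dense, every bisection is optimal, and every vertex has only $n/2-1$ of its $n-1$ neighbors on its own side. Every cross pair is adjacent with $g(u)=g(w)=-1$, so every swap changes the cut by exactly $0$. This is precisely the tight case you hoped to rule out by choosing a different pair. Separately, singling out a special optimal bisection could never prove a claim about \emph{every} optimal cut.

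\textbf{What your inequality does prove.} It gives the version with slack one. Optimality yields $g(u)+g(w)\ge -2\,\bbm[uw\in E]\ge -2$ for all $u\in L$, $w\in R$. Hence $\min_{u\in L}g(u)+\min_{w\in R}g(w)\ge -2$, and so on at least one side every vertex satisfies $d_{\mathrm{own}}(v)\ge d_{\mathrm{other}}(v)-1$. Equivalently, each such vertex has at least $(d_v-1)/2$ of its neighbors on its own side, or at least half of its closed neighborhood $N(v)\cup\{v\}$. In an everywhere-$\delta$-dense graph this fraction falls short of $\tfrac12$ by at most $1/(2\delta n)$, which is negligible for sampling-based arguments. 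You should state and prove this corrected lemma rather than the strict one.

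The paper itself gives no proof of this lemma. It only quotes it from \cite{ARORA1999193} as motivation, and its own small-OPT analysis (via $U_L$, $U_R$, $X$ and the knapsack rebalancing) never invokes it. So the defect in the statement does not affect the paper's results.
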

Unfortunately, such a strong structural characterization does not extend to the \textsc{ConstrainedMinCut} problem because of the cost constraint.
Indeed, this problem requires an interval cost constraint (this is also the case for \textsc{BalancedSeparator} and \textsc{SmallSetExpansion}).
If, for example, we apply the independent sampling framework in \cite{karpinski2009linear} to the \textsc{MinBisection} problem, we may miss the exact bisection threshold by $O(\varepsilon n)$ vertices.
We may blindly try to ``fix'' the resulting cut by greedily moving $O(\varepsilon n)$ vertices back across the cut to balance the bisection.
But then we may cut up to $O(\varepsilon \delta n^2)$ new edges.
However, when the ``Small OPT'' case happens ($\OPT \ll n^2$), an error of $O(\varepsilon \delta n^2)$ completely violates the $(1+\varepsilon)$ relative approximation.
This is exactly why our Algorithm~\ref{alg:adjust} (the exact knapsack DP) is more than an incremental contribution.
By using an exact DP to carefully resolve the uncertain set $X$ to meet the cost constraint without incurring the $O(\varepsilon \delta n^2)$ greedy penalty, we can successfully overcome this difficulty.

In the end, we show that our sampling procedure correctly classifies vertices with high probability, and that the rebalancing procedure can adjust the cut cost to satisfy the cost constraint without increasing the cut size too much.
This is a key lemma, which enables us to obtain an EPRAS for the \textsc{ConstrainedMinCut} problem on everywhere-$\delta$-dense graphs when the optimal value is small.
We then derandomize the two randomized steps in the algorithms for \textsc{ConstrainedMinCut}.

By using the EPTAS for \textsc{ConstrainedMinCut}, we obtain EPTASs for \textsc{MinQuotientCut} and \textsc{ProductSparsestCut} for dense instances in a unified manner.

\subsection{Organization of This Paper}
In the remainder of the paper, we introduce some notation and the weak regularity lemma in Section~\ref{sec:pre}.
In Section~\ref{sec:main}, for simplicity, we first present an EPRAS for the \textsc{ConstrainedMinCut} problem on everywhere-$\delta$-dense graphs, and then we derandomize the randomized steps of the algorithm.
In Section~\ref{sec:red}, we present EPTASs for \textsc{MinQuotientCut} and \textsc{ProductSparsestCut} on everywhere-$\delta$-dense graphs by using the EPTAS for the \textsc{ConstrainedMinCut} problem as a subroutine.

\section{Preliminaries}
\label{sec:pre}
\subsection{Notation}
In this section, we provide some notation used in this paper.

Let $G=(V, E)$ be an undirected unweighted graph with $n$ vertices.
We denote by $N(v)$ the set of neighbors of vertex $v$.
$\bbm[P]$ is the indicator function that equals 1 if the predicate $P$ is true and 0 otherwise.

\subsection{Weak Regularity Lemma}

We first introduce notation and then state the Frieze-Kannan Weak Regularity Lemma \cite{frieze1999quick}.
\begin{dfn}[Frobenius Norm]
  For an $n \times n$ matrix $A$, its Frobenius norm $\|A\|_{F}$ is defined as
  \begin{equation*}
    \|A\|_{F} = \sqrt{\sum_{i=1}^n \sum_{j=1}^n A_{i,j}^2}.
  \end{equation*}
\end{dfn}

\begin{dfn}[Cut Norm]
  For an $n \times n$ matrix $A$, its cut norm $\|A\|_{C}$ is defined as
  \begin{equation*}
    \|A\|_{C} = \max_{S, T \subseteq [n]} \left| \sum_{i \in S, j \in T} A_{i,j} \right|.
  \end{equation*}
\end{dfn}

\begin{dfn}[Cut Decomposition]
  A cut decomposition of an $n \times n$ matrix $A$ is a decomposition of the form
  \begin{equation*}
    A = d_1\chi_{S_1} \chi_{T_1}^T + d_2\chi_{S_2} \chi_{T_2}^T + \cdots + d_w\chi_{S_w} \chi_{T_w}^T + W
  \end{equation*}
  where for each $t$, $S_t, T_t \subseteq [n]$, $d_t \in \mathbb{R}$, $\chi_{S_t}$ is the indicator vector of set $S_t$, and $W$ is an $n \times n$ matrix called the error matrix.
  We call $w$ the width of the cut decomposition, $\sqrt{\sum_{t=1}^w d_t^2}$ the coefficient length, and $\|W\|_{C}$ the error of the cut decomposition.
\end{dfn}

We are now ready to present the main lemma.
\begin{thm}[Frieze-Kannan Weak Regularity Lemma \cite{frieze1999quick}]
  \label{thm:weak_regularity}
  Suppose $A$ is an $n \times n$ matrix and $\varepsilon, \gamma \in (0, 1)$.
  There exists an algorithm that finds a cut decomposition of $A$ of width $O(\varepsilon^{-2})$, coefficient length at most $\sqrt{27}\|A\|_{F}/n$, and error at most $\varepsilon n \|A\|_{F}$ in time $2^{\tilde{O}(\varepsilon^{-2})}/\gamma^2$ with probability at least $1 - \gamma$.
\end{thm}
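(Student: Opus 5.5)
The plan is the standard energy-increment argument. Maintain an error matrix $W$, initialised to $W_0 = A$. While $\|W\|_C > \varepsilon n \|A\|_F$, find sets $S,T$ whose sum $W(S,T) := \sum_{i\in S, j\in T} W_{i,j}$ is large, subtract a suitable multiple of $\chi_S\chi_T^T$, and repeat. The output is the list of subtracted terms together with the final $W$. The algorithmic search for $S,T$ will be approximate, so the loop only guarantees $\|W\|_C \le \varepsilon n\|A\|_F$ up to a constant factor; this is absorbed by running with a constant multiple of $\varepsilon$.

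\emph{Key step 1 (potential drop).} Suppose we have $S,T$ with $|W(S,T)| \ge \alpha \varepsilon n\|A\|_F$ for an absolute constant $\alpha>0$ (say $\alpha = 1/3$). Set $d = W(S,T)/n^2$; rather than the usual $W(S,T)/(|S||T|)$, we use this smaller coefficient because it makes the coefficient-length bound immediate. Then
\begin{equation*}
\|W - d\chi_S\chi_T^T\|_F^2 = \|W\|_F^2 - 2dW(S,T) + d^2|S||T| \le \|W\|_F^2 - \frac{W(S,T)^2}{n^2}.
\end{equation*}
So each step lowers $\|W\|_F^2$ by at least $\alpha^2\varepsilon^2\|A\|_F^2$. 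Since the potential starts at $\|A\|_F^2$ and stays nonnegative, there are at most $\alpha^{-2}\varepsilon^{-2} = O(\varepsilon^{-2})$ steps, which gives the width bound.

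\emph{Key step 2 (coefficient length).} In each step,
\begin{equation*}
d_t^2 = \frac{W(S,T)^2}{n^4} \le \frac{\text{drop}_t}{n^2},
\end{equation*}
where $\text{drop}_t$ is the decrease of the potential in step $t$. The drops telescope to at most $\|A\|_F^2$, so $\sum_t d_t^2 \le \|A\|_F^2/n^2$. This is well within the required $\sqrt{27}\|A\|_F/n$; the slack also covers the version of the algorithm with the classical coefficient, where one must first argue $|S||T| \ge \Omega(n^2)$.

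\emph{Key step 3 (the main obstacle): finding $S,T$ efficiently.} The difficulty is that computing the cut norm exactly is hard, so we need a constant-factor approximation in time $2^{\tilde O(\varepsilon^{-2})}$ up to polynomial factors, together with the failure probability $\gamma$.

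I would use the sampling approach of Frieze and Kannan.
\begin{itemize}
\item Suppose a near-optimal pair $(S^*,T^*)$ exists. For fixed $T$, the best $S$ is $\{i : W(i,T) > 0\}$ (or the negative side).
\item Sample $p = \tilde O(\varepsilon^{-2})$ columns $J$ uniformly at random.
\item Enumerate all $2^{p}$ guesses of $T^*\cap J$. For each guess, estimate $W(i,T^*)$ by the rescaled sum over $J$, set $S$ to the rows whose estimate is positive, and then set $T$ to be the best response to $S$, computed exactly.
\item Keep the best pair found over all guesses.
\end{itemize}
Because $\|W\|_F \le \|A\|_F$ throughout, a second-moment/Chernoff bound shows that the correct guess yields $|W(S,T)| \ge \|W\|_C - O(\varepsilon n\|A\|_F)$ with constant probability. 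This gives the constant $\alpha$ once the constants in $p$ are tuned.

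To reach overall success probability $1-\gamma$ across the $O(\varepsilon^{-2})$ iterations, repeat each search $O(\log(1/\gamma))$ times and take a union bound. Alternatively, use a Markov-type argument on the expected estimation error; this accounts for the $1/\gamma^2$ factor in the stated time. The total running time is then $O(\varepsilon^{-2})\cdot 2^{\tilde O(\varepsilon^{-2})}\cdot \mathrm{poly}(n)/\gamma^2$, matching the claim up to the polynomial dependence on $n$ that the statement suppresses.
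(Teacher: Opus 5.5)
The gap is the running time. The bound $2^{\tilde O(\varepsilon^{-2})}/\gamma^2$ in the statement has no factor depending on $n$, and no polynomial factor is being suppressed. The paper gives no proof of its own; it cites Frieze--Kannan, whose result is a sublinear algorithm in the entry-query model. That algorithm outputs the sets $S_t,T_t$ \emph{implicitly}. An explicit output could never be $n$-independent, since writing down a single $S_t$ already costs $\Omega(n)$.

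Your algorithm keeps $W$ as an explicit $n\times n$ matrix and computes exact best responses and exact values $W(S,T)$, which costs $\Theta(n^2)$ per guess. What you actually prove is a running time of $2^{O(\varepsilon^{-2})}\,n^{2}\log(2/\gamma)$: better in $\gamma$, worse in $n$. To reach the stated bound you would need ingredients your proposal does not have:
\begin{itemize}
\item sampling rows as well as columns;
\item estimating $W(S,T)$, and entries of the current error matrix, from samples instead of computing them;
\item keeping $S_t,T_t$ implicit, so that membership of a queried index is evaluated on demand from the stored samples and guesses.
\end{itemize}
For this paper your polynomial-time version is enough: an EPRAS only needs $f(1/\varepsilon)n^{O(1)}$, and the large-OPT algorithm materialises the partition $\cP$ and evaluates cuts explicitly anyway. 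It does, however, turn the additive bounds $n^{O(1)}+2^{\tilde O(\cdot)}$ claimed downstream into products, and it is not a proof of the statement as written.

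Apart from that, the argument is correct, and the choice $d=W(S,T)/n^2$ is a real simplification. Expanding $\|W\|_F^2-2W(S,T)^2/n^2+W(S,T)^2|S||T|/n^4$ gives a potential drop of at least $W(S,T)^2/n^2$. This yields width at most $\alpha^{-2}\varepsilon^{-2}$ and coefficient length at most $\|A\|_F/n$, better than $\sqrt{27}\|A\|_F/n$, without any lower bound on $|S||T|$.

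In the search step you should use the second moment plus Markov, not Chernoff, because the entries of $W$ are unbounded. For the correct guess, the loss relative to the best response to $T^*$ is at most $\sum_i|\hat w_i-W(i,T^*)|$. Its expectation is at most $\sum_i\sqrt{n/p}\,\|W_{i\cdot}\|\le n\|W\|_F/\sqrt p\le n\|A\|_F/\sqrt p$, so $p=O(\varepsilon^{-2})$ suffices. Every accepted pair has value at least $\alpha\varepsilon n\|A\|_F$, so the number of iterations is bounded deterministically. Boosting each search by independent repetition (keeping the pair with the largest exactly computed $|W(S,T)|$) and a union bound then give success probability $1-\gamma$.
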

\begin{rem}
  For the adjacency matrix of an unweighted graph, we have $\|A\|_F \le n$. Hence, the coefficient length is at most $\sqrt{27}$ and the error is at most $\varepsilon n^2$.
\end{rem}

\section{EPTAS for ConstrainedMinCut}
\label{sec:main}
Throughout this section, we work with dense instances of \textsc{ConstrainedMinCut}.
Since $d_v \le n$ and $c_0 n \le c_v$, $d_v \le \tfrac{1}{c_0}c_v$ holds.

Our first step is to show that when the parameter of the problem is unbalanced, we can find a $(1 + \varepsilon)$-approximate solution for the \textsc{ConstrainedMinCut} problem.

We may assume $\varepsilon \le 1$ without loss of generality.
For any $\zeta > 0$, if $\rho(1 + \zeta) \le c_0 \tfrac{\varepsilon}{2} \delta^2 n^2$, then $c_0 \abs{S}\delta n \le c_0 \vol(S) \le c(S) \le \rho(1+\zeta) \le c_0 \tfrac{\varepsilon}{2} \delta^2 n^2$.
Thus, we have $\abs{S} \le \tfrac{\varepsilon}{2} \delta n$, and hence $\abs{S}^2 \le \abs{S}\tfrac{\varepsilon}{2} \delta n \le \tfrac{\varepsilon}{2} \vol(S)$.
Under this condition, we can conclude the following inequalities hold:
\begin{equation*}
  (1 - \tfrac{\varepsilon}{2})\vol(S) \le \vol(S) -\abs{S}^2 \le \abs{\mE(S, \overline{S})} \le \vol(S).
\end{equation*}
Since $\tfrac{1}{1-\varepsilon/2} \le 1+\varepsilon$ for $\varepsilon \le 1$, this gives $\abs{\mE(S, \overline{S})} \le \vol(S) \le (1+\varepsilon)\abs{\mE(S, \overline{S})}$ for any feasible $S$.
Therefore, for the cut $S'$ that minimizes $\vol(S)$ subject to $c(S) \in [\rho(1-\zeta), \rho(1+\zeta)]$, we have
\begin{equation*}
  \abs{\mE(S', \overline{S'})} \le \vol(S') \le \vol(S^*) \le (1+\varepsilon)\abs{\mE(S^*, \overline{S^*})},
\end{equation*}
where $S^*$ is the optimal cut for \textsc{ConstrainedMinCut}. Hence we obtain a $(1+\varepsilon)$-approximate solution.
This can be done by simple knapsack dynamic programming in $O(n^3)$ time.

Below, we focus on the case where the input parameters are balanced.
Let $\rho' := \rho / c(V)$ be the normalized parameter, and we may assume that $\rho' > \tfrac{c_0 \varepsilon \delta^2}{2(1 + \zeta)}$, since $c(V) \le n^2$.

In \cite{frieze1996regularity}, Frieze and Kannan presented an EPRAS for the \textsc{BalancedSeparator} problem in the large-optimal-value case on everywhere-$\delta$-dense graphs by using the algorithmic regularity lemma.
Combining the result in \cite{ARORA1999193} for the small-optimal-value case, they obtained an EPRAS for the \textsc{BalancedSeparator} problem for everywhere-$\delta$-dense graphs.
In this section, we present an EPRAS for the \textsc{ConstrainedMinCut} problem on everywhere-$\delta$-dense graphs by extending their technique.
We denote the optimal cut by $(\LOPT, \ROPT)$ and its size by $\OPT := \abs{\mE(\LOPT, \ROPT)}$. We divide the algorithm into two cases, based on the size of $\OPT$.

\begin{restatable}{thm}{eprasthm}
  \label{thm:EPRAS}
  There exists an EPRAS for the \textsc{ConstrainedMinCut} problem on dense instances, and its running time is $n^{O(1)} + 2^{\tilde{O}(1 / (\zeta^2 \varepsilon^4))}$.
\end{restatable}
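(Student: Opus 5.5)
We split into two regimes according to $\OPT$, run both algorithms, and return the better feasible cut. By the preceding reduction we may assume $\rho' > \tfrac{c_0\varepsilon\delta^2}{2(1+\zeta)}$, i.e.\ $\rho = \Theta(n^2)$, so any feasible $S$ has $c(S)=\Theta(n^2)$ and therefore $\abs{S}=\Theta(n)$.

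\emph{Large-OPT case ($\OPT \ge \eta n^2$ for a constant $\eta = \eta(\varepsilon,\zeta,\delta,c_0)$).} Apply Theorem~\ref{thm:weak_regularity} with accuracy $\varepsilon' = \Theta(\varepsilon\eta)$, giving width $w = O(\varepsilon'^{-2})$. For every $S$, $\abs{\mE(S,\overline S)} = \sum_t d_t \abs{S\cap S_t}\abs{\overline S\cap T_t} \pm \varepsilon' n^2$. The sets $S_t,T_t$ generate an atom partition of $V$ into at most $2^{2w}$ classes, and the cut value of $S$ is determined up to error by the fractions of each atom contained in $S$. We discretize these fractions to multiples of $\Theta(\varepsilon'/2^{2w})$. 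We also handle the cost constraint at atom level: within an atom we choose the vertices greedily by weight, which lets us hit a target $c$-value in $[\rho(1-\zeta/2),\rho(1+\zeta/2)]$. This costs at most $O(n)$ in cut size per vertex moved, and the number of moved vertices is small relative to the discretization. Enumerating all discretized vectors takes $2^{\tilde O(2^{w})}$ time in the naive form. To reach the claimed $2^{\tilde O(1/(\zeta^2\varepsilon^4))}$, we instead enumerate only the $w$ numbers $\abs{S\cap S_t}$ and $\abs{S\cap T_t}$, rounded to precision $\varepsilon' n$. Feasibility of a guess, together with the cost constraint, is then a small LP/flow over the atoms, solvable in time polynomial in the number of atoms. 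The additive error $O(\varepsilon' n^2) \le \varepsilon\,\OPT$ then gives relative error $1+\varepsilon$.

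\emph{Small-OPT case ($\OPT < \eta n^2$).} We follow the Arora--Karger--Karpinski sampling and estimation approach. Sample a set $U$ of $k = O(\log(1/\varepsilon)/(\eta'^2))$ vertices and enumerate all $2^k$ labelings of $U$; one of them agrees with $(\LOPT,\ROPT)$. For each $v$, estimate the fraction of its neighbours lying in $\LOPT$ from $U$. Since $\OPT$ is small and degrees are at least $\delta n$, all but $O(\OPT/(\delta n))$ vertices have a strong majority of neighbours on their own side. These are classified correctly with high probability, and the mistakes on them contribute at most $\varepsilon\OPT$ in expectation via Markov's inequality. The remaining ambiguous vertices form a set $X$ with $\abs{X} = O(\OPT/(\eta'\delta n))$. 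The \textbf{main obstacle} is the rebalancing: after placing the confident vertices, we must assign $X$ so that $c(S)$ lands in the cost interval, without the $O(\varepsilon\delta n^2)$ penalty incurred by greedy moves. We resolve this with the exact knapsack DP (Algorithm~\ref{alg:adjust}) over $X$. Each vertex is given its estimated marginal cost of being put on either side, which is accurate up to $\varepsilon\cdot\mathrm{deg}$ because the confident part is fixed, and the DP minimizes the total marginal cost subject to the constraint on $c$. Since $c_v\le n$ and $c(V)\le n^2$, the DP table has polynomial size. The optimal assignment of $X$ restricted to $\LOPT$ is feasible for the DP, so the cost of the DP output is at most $\OPT$ plus the estimation errors. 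Edges inside $X$ are counted at most $\abs{X}^2 = O(\OPT^2/n^2) \le \eta\,\OPT$ times. Finally, we choose $\eta$ and $\eta'$ as functions of $\varepsilon$ and $\zeta$ so that both cases run within the stated bound. The total running time is $n^{O(1)}$ for the DP and sampling, plus $2^{\tilde O(1/(\zeta^2\varepsilon^4))}$ for the regularity-based enumeration, which yields the EPRAS claimed in Theorem~\ref{thm:EPRAS}.
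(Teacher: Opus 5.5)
There is a genuine gap in your small-OPT rebalancing. You identify the vertices that lack a strong majority on their own side with the vertices whose estimate is ambiguous. You then assert that $T=X\cap\LOPT$ is feasible for a DP over $X$. Under a global cost constraint this fails. The optimum may place a vertex on the side \emph{opposite} to its neighbourhood majority. Such a vertex receives an unambiguous estimate pointing the wrong way, so it lands in $L$ or $R$, not in $X$; these are the paper's $U_L=L\cap\ROPT$ and $U_R=R\cap\LOPT$. Consequently $c(L)+c(X\cap\LOPT)=c(\LOPT)+c(U_L)-c(U_R)$, which can lie outside $[C_m,C_M]$.

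A concrete instance: take \textsc{BalancedSeparator} ($c_v=n$) with $t=n/3$ and $\zeta=1/10$. Let $G$ be the disjoint union of a clique $A$ on $t(1+\zeta)+s$ vertices, with $s$ a constant, and a clique $B$ on the remaining vertices. The optimum cuts $s$ vertices off $A$, so $\OPT=s\,t(1+\zeta)=\Theta(n)$, well inside the small-OPT regime. Yet for the sample labelling that agrees with the optimum, every vertex of $A$ has $p_L\approx 1$ and is confidently put into $L$. Hence $X=\emptyset$ and $c(L)>C_M$. Your DP has no feasible solution, so there is no benchmark against which to compare its value with $\OPT$ plus estimation errors.

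The missing idea is that rebalancing must also be allowed to move \emph{confidently classified} vertices, with $(U_L,X_L)$ as the benchmark. Algorithm~\ref{alg:adjust} is not a DP over $X$ alone. When $c(L)+c(X)\ge C_M$, it chooses $S\subseteq L$ and $T\subseteq X$ minimising $\sum_{u\in S}(|N(u)\cap L|-|N(u)\cap R|)+\sum_{u\in T}(|N(u)\cap R|-|N(u)\cap L|)$. The constraints are a volume cap $\sum_{u\in S}d_u\le \alpha n^2/(1/2+a)$ and \emph{only} the upper bound $c(L)-c(S)+c(T)\le C_M$, so that $(U_L,X_L)$ is feasible (Lemma~\ref{lem:adjust}). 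The lower bound then holds for every feasible pair, because $\alpha\le 6\rho'\zeta c_0\delta/35$ forces $c(S)+c(X)\le C_M-C_m$; the case $c(L)<C_m$ is symmetric.

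This is exactly where $\zeta$ enters the threshold. It gives $\alpha\varepsilon=\Omega(\zeta\varepsilon^2)$ and hence the $2^{\tilde O(1/(\zeta^2\varepsilon^4))}$ term. Your proposal never explains why $\eta$ must depend on $\zeta$, which reflects the same omission.

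There are also smaller points:
\begin{itemize}
\item In the large-OPT case, the guessed intersection sizes must be rounded to precision $\Theta(\varepsilon' n/w)$ rather than $\varepsilon' n$. Only $\sum_t d_t^2\le 27$ is controlled, so $\sum_t|d_t|$ can be as large as $\sqrt{27w}=\Theta(1/\varepsilon')$, and precision $\varepsilon' n$ would give additive error $\Theta(n^2)$.
\item The cost constraint must be enforced on the full interval; the paper refines the atoms into weight classes of width $\kappa$. Targeting $[\rho(1-\zeta/2),\rho(1+\zeta/2)]$ can miss an optimum whose cost sits near $\rho(1\pm\zeta)$.
\item With a constant-size sample, bounding the expected cost of confident mistakes by $\varepsilon\OPT$ needs a per-vertex estimate. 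You would have to show that a vertex is misclassified with probability at most $\varepsilon$ times its own fraction of cut edges, and you do not give this. The paper instead samples $\Theta(\log n)$ vertices and takes a union bound.
\end{itemize}
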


\begin{thm}
  \label{thm:EPTAS}
  There exists an EPTAS for the \textsc{ConstrainedMinCut} problem on dense instances, and its running time is $\varepsilon^{-O(1)}n^{O(1)}+ 2^{\tilde{O}(1 / (\zeta^{16} \varepsilon^{32}))}$.
\end{thm}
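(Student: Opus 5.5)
The plan is to derandomize the algorithm behind Theorem~\ref{thm:EPRAS}. As in that algorithm, split into two regimes by the size of $\OPT$, with threshold $\OPT \ge \eta n^2$ versus $\OPT < \eta n^2$ for a constant $\eta = \mathrm{poly}(\varepsilon,\zeta,\delta,c_0)$. Each regime has exactly one randomized step, and each will be replaced by a deterministic counterpart. We run both branches and return the better feasible cut, so the threshold never has to be known.

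\emph{Large-$\OPT$ regime.} The only randomness here is the cut decomposition of Theorem~\ref{thm:weak_regularity}. Replace it by the deterministic weak regularity lemma of \cite{fox2019fast}, which gives width $O(\varepsilon'^{-16})$ instead of $O(\varepsilon'^{-2})$ and error at most $\varepsilon' n^2$, with coefficient length $O(1)$.
\begin{itemize}
\item Set $\varepsilon' = \Theta(\zeta\varepsilon\eta)$, so that the additive error $O(\varepsilon' n^2)$ is at most $\varepsilon\OPT$.
\item For every cut $S$, $|\mE(S,\overline S)|$ equals $\sum_t d_t|S\cap S_t||\overline S\cap T_t|$ up to error $O(\varepsilon' n^2)$.
\item Enumerate, over a grid of mesh $\varepsilon' n$, the intersection sizes $|S\cap P|$ for every atom $P$ of the Venn partition of $\{S_t,T_t\}$, together with the approximate cost $c(S\cap P)$ on each atom.
\item For each guess, realize the intersections by greedy or knapsack filling inside atoms so that the cost constraint holds exactly. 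The slack in the cost interval is $\zeta\rho$, which is $\Omega(\zeta\varepsilon n^2)$ in the balanced case; this is where the $\zeta$-dependence of the grid enters.
\end{itemize}
The number of guesses is $2^{\tilde O(w)}$ with $w=O(\varepsilon'^{-16})$. This yields the additive term $2^{\tilde O(1/(\zeta^{16}\varepsilon^{32}))}$ once $\eta$ is tuned so that $\varepsilon'$ is polynomial in $\zeta\varepsilon^2$.

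\emph{Small-$\OPT$ regime.} The randomized step is sampling a set $U$ of $k=O(\varepsilon^{-2}\log(1/\varepsilon))$ vertices and guessing their sides in the optimum. The sample is used to estimate, for each vertex $v$, the fraction of its neighbors in $\LOPT$, and so classify $v$ as clearly-left, clearly-right, or uncertain (the set $X$). The uncertain set $X$ is then resolved by the exact knapsack DP of Algorithm~\ref{alg:adjust}.

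To derandomize, replace independent sampling by walks on a constant-degree expander, or by a $k$-wise independent / averaging sampler of size $\varepsilon^{-O(1)}n^{O(1)}$ (for example, Chernoff bounds for expander walks). This yields a polynomial family of candidate samples such that, for a fixed partition $(\LOPT,\ROPT)$, most members estimate every vertex's neighborhood fraction in $\LOPT$ within $\pm\varepsilon\delta$ simultaneously. To get the simultaneous guarantee, take a union bound after driving the per-vertex failure below $1/(2n)$, which costs $k=O(\varepsilon^{-2}\log n)$. That makes the side-guessing cost $2^k=n^{O(\varepsilon^{-2})}$, which is not an EPTAS.

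Fixing this is the main obstacle. The analysis should not need every vertex to be classified correctly: it only needs that the misclassified vertices contribute at most $\varepsilon\OPT$ extra cut. Vertices misclassified outside $X$ are far from threshold, so in expectation their number is $O(\varepsilon)$ times the number of vertices that have many neighbors across the cut. By density, that number is $O(\OPT/(\delta n))$. An averaging argument therefore exists over an explicit sampler family of size $\varepsilon^{-O(1)}n^{O(1)}$ whose seeds have length $O(\log n)$ but whose sample size stays $k=\varepsilon^{-O(1)}$.

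The plan is then to enumerate all seeds and all $2^k$ side-labelings, run the classification and the DP for each, and keep the best feasible cut. The running time is $\varepsilon^{-O(1)}n^{O(1)}\cdot 2^{\varepsilon^{-O(1)}}$, which is absorbed into the stated bound.

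The delicate part is checking that the guarantee of the rebalancing lemma (the DP on $X$ costs at most $\varepsilon\OPT$ extra) depends only on this averaged error bound and not on full independence. I would first try to verify it via a pairwise-independence (Chebyshev) version of the sampling lemma.
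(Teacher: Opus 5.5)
Your overall plan is the paper's: replace the randomized weak regularity lemma by the deterministic one of \cite{fox2019fast}, and replace vertex sampling by expander walks. As you have set up the two branches, however, each contains a genuine gap.

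\textbf{Large-$\OPT$ branch.} You guess a grid value of $\abs{S\cap P}$ (and of $c(S\cap P)$) for every atom $P$ of the Venn partition of $\set{S_t,T_t}_{t\le w}$. That gives up to $(1/\varepsilon')^{2^{2w}}$ guesses, which is doubly exponential in $w$, not $2^{\tilde O(w)}$ as you claim. With $w=O(\varepsilon'^{-16})$, this is the doubly exponential dependence of \cite{frieze1996regularity}, not the stated bound.

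The paper instead guesses only the $2w$ aggregate quantities $\bar f_t\approx\abs{L\cap S_t}$ and $\bar g_t\approx\abs{\overline{L}\cap T_t}$ on a grid of mesh $\nu=\Theta(\varepsilon_0 n/w)$. That is $(O(w/\varepsilon_0))^{2w}=2^{\tilde O(w)}$ guesses. For each guess it solves an LP whose variables $x_P$ count the chosen vertices in each atom, with atoms refined into cost buckets of width $\kappa$. This LP has $2^{O(w)}/\varepsilon_0$ variables. Rounding $x_P$ down and then adding or removing arbitrary vertices restores the cost constraint at additive cost $O(\varepsilon_0 n^2)$ (Algorithm~\ref{alg:largeOPT}).

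Separately, the exponent $\zeta^{-16}\varepsilon^{-32}$ needs $\varepsilon'=\Theta(\zeta\varepsilon^2)$, i.e.\ $\varepsilon_0=\alpha\varepsilon$. Your $\varepsilon'=\Theta(\zeta\varepsilon\eta)$ picks up an extra factor $\zeta$, because the threshold already carries one. In the paper, $\alpha\le 6\rho'\zeta c_0\delta/35$, which is what prevents the DP from undershooting $C_m$.

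\textbf{Small-$\OPT$ branch.} The obstacle you spend most of the proposal on does not arise, and the averaging-sampler fix you propose is left unproven, as you acknowledge. The estimator accuracy never needs to depend on $\varepsilon$. The paper fixes $a=1/10$ and puts all $\varepsilon$-dependence into the threshold $\alpha\le\tfrac{\varepsilon}{22}(1/2-3a)^2\delta^2$.

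\begin{itemize}
\item Every vertex of $U_L$, $U_R$ or $X$ has at least a $(1/2-3a)$-fraction of its neighbors across the optimal cut.
\item So these sets have volume $O(\OPT)$ and size $O(\OPT/(\delta n))$.
\item The residual terms in the cut accounting are products such as $\abs{U_L}^2$, $\abs{X}^2$, $\abs{X}\abs{S}$. These are bounded by $O(\OPT^2/(\delta^2n^2)+\alpha\OPT/\delta^2)\le\varepsilon\OPT$ once $\OPT\le\alpha n^2$.
\end{itemize}

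Hence $k\log n$ samples suffice, with $k=\max\set{2/(a^2\delta),4/\delta^2}$ depending only on $\delta$. The $2^{k\log n}$ labelings then number $n^{O(1)}$ with an $\varepsilon$-free exponent.

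Derandomization is then direct:
\begin{itemize}
\item Enumerate all $n\,d^{k\log n}=n^{O(1)}$ walks of length $k\log n$ on an explicit constant-degree expander.
\item Apply Gillman's bound \cite{gillman1998chernoff} to both $\abs{N(v)\cap S}/\abs{S}$ and $\abs{N(v)\cap S_L}/\abs{S}$ with $u=a\delta/(2+a)$. This gives $\abs{\hat p_L(v)-p_L(v)}\le a$ except with probability $O(n^{1-\lambda u^2k/20})$.
\item A union bound over $v$ shows that some walk works for all vertices simultaneously.
\end{itemize}

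Your route with $\varepsilon^{-O(1)}$ samples gives only average-case control of misclassification. Lemma~\ref{lem:adjust}, however, uses two deterministic bounds:
\begin{itemize}
\item $\vol(U_L)\le\OPT/(1/2+a)$, so that $(U_L,X_L)$ fits the DP's degree budget;
\item $c(X)=O(\alpha n^2/\delta)$, so that every DP solution meets the lower bound $C_m$.
\end{itemize}
That analysis would have to be redone under your sampler, and the proposal does not do it.
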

The difference in runtime between Theorem~\ref{thm:EPRAS} and Theorem~\ref{thm:EPTAS} is derived from the width of the cut decomposition.

The most important difference from the \textsc{BalancedSeparator} problem is the existence of a cost constraint $c(S) \in [\rho(1-\zeta), \rho(1 + \zeta)]$.

This applies to both the large-optimal-value and small-optimal-value cases of the \textsc{ConstrainedMinCut} problem.
We use the parameter $\alpha$, which will be determined in the proof of the small-optimal-value case.

\subsection{Large OPT Case for ConstrainedMinCut Problem}

If $n \le N(1/\varepsilon)$, the problem can be solved exactly by brute force in $O_\varepsilon(1)$ time.
Thus, in what follows, we assume that $n$ is sufficiently large.

We apply the weak regularity lemma to handle the case in which the optimal cut size $\OPT \ge \alpha n^2$ for the \textsc{ConstrainedMinCut} problem (we refer to this as the ``Large OPT'' case).

Our main result for this case is as follows:

\begin{thm}
  \label{thm:largeOPT}
  For dense instances of \textsc{ConstrainedMinCut} with optimal cut size $\OPT \ge \alpha n^{2}$,
  we can find a cut $\mE(L, \overline{L})$ such that $c(L) \in [\rho(1-\zeta), \rho(1+\zeta)]$ and its cut size is at most $(1 + \varepsilon)\OPT$ in time $2^{\tilde{O}(1/(\alpha\varepsilon)^2)} + n^{O(1)}$ with probability $9/10$.
\end{thm}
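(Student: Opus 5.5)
We apply the Frieze–Kannan weak regularity lemma (Theorem~\ref{thm:weak_regularity}) to the adjacency matrix $A$ with accuracy parameter $\varepsilon' := \alpha\varepsilon/8$ and failure probability $\gamma = 1/10$. This yields a cut decomposition $A = \sum_{t=1}^w d_t \chi_{S_t}\chi_{T_t}^T + W$ with $w = O(1/(\alpha\varepsilon)^2)$, $\sum_t d_t^2 \le 27$, and $\|W\|_C \le \varepsilon' n^2$. For any $L$ we have $\abs{\mE(L,\overline L)} = \chi_L^T A \chi_{\overline L}$. Replacing $A$ by the low-width part $D := \sum_t d_t\chi_{S_t}\chi_{T_t}^T$ therefore changes every cut value by at most $\varepsilon' n^2 \le \tfrac{\varepsilon}{8}\OPT$, using $\OPT \ge \alpha n^2$. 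So it suffices to minimize $\chi_L^T D \chi_{\overline L}$ subject to the cost constraint, up to additive error $\tfrac{\varepsilon}{4}\alpha n^2$, since then the true cut is at most $(1+\varepsilon)\OPT$.

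Next we refine by the atoms of the $2w$ sets $S_1,\dots,S_w,T_1,\dots,T_w$. This gives at most $2^{2w}$ atoms $P_1,\dots,P_k$. The quantity $\chi_L^T D\chi_{\overline L} = \sum_t d_t\,\abs{L\cap S_t}\,\abs{\overline L\cap T_t}$ depends only on the vector $x_j := \abs{L\cap P_j}$. Within an atom, however, different vertices carry different costs $c_v$, so the cost $c(L)$ is not determined by $x$ alone. To handle this, I would further split each atom into $O(1/(\zeta c_0))$ cost classes. Since $c_v\in[c_0 n, n]$, rounding each $c_v$ to a power of $(1+\zeta/4)$ gives $O(\log(1/c_0)/\zeta)$ classes. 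Within a class, costs agree up to a factor $1+\zeta/4$. Choosing the cheapest or the most expensive vertices of a class lets us realize any target count $x_{j}$ with cost ranging over an interval, and the exact extremes are computable by sorting.

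We then enumerate count vectors on a grid. The number of cells is $K = 2^{O(w)}\cdot O(1/\zeta)$. For each cell we guess $x_j$ rounded to multiples of $\eta n$ with $\eta = \Theta(\alpha\varepsilon/K)$. Changing any $x_j$ by $\eta n$ changes the $D$-value by at most $\eta n\cdot n\sum_t\abs{d_t}\le \eta n^2\sqrt{27 w}$, so the total rounding error is at most $K\eta\sqrt{27w}\,n^2 \le \tfrac{\varepsilon}{8}\alpha n^2$ for suitable constants. The number of guesses is $(1/\eta)^K$, which is formally $2^{2^{O(w)}}$. That is too big, so the main obstacle is getting the claimed $2^{\tilde O(1/(\alpha\varepsilon)^2)}$ bound.

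To fix this, I would not enumerate over atoms. Instead I would enumerate the $2w$ numbers $\abs{L\cap S_t}$ and $\abs{L\cap T_t}$ rounded to multiples of $\eta n$ with $\eta = \Theta(\alpha\varepsilon/w^{3/2})$, together with a guessed cost $c(L)$ rounded to a $(1\pm\zeta/2)$ window. That gives $(w/\alpha\varepsilon)^{O(w)}\cdot O(1/\zeta) = 2^{\tilde O(1/(\alpha\varepsilon)^2)}$ guesses. For each guess we must decide whether some $L$ approximately realizes these intersections and a cost in the window. This is a feasibility question: a linear program in the $2^{2w}$ cell variables $y_{j,\ell}\in[0,\abs{P_{j,\ell}}]$ with $O(w)$ constraints. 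A basic solution has at most $O(w)$ fractional coordinates, so rounding it costs only $O(w)$ vertices. That loss is negligible for large $n$, and it also keeps the cost within the window because $\zeta\rho\gtrsim \zeta c_0\varepsilon\delta^2 n^2\gg wn$. We would solve this LP over the implicit atom structure in time polynomial in $n$ by working directly on the vertices, with variables $z_v\in[0,1]$.

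Finally, we take the best guess and output the rounded $L$, and we fall back to brute force when $n\le N(1/\varepsilon)$. The success probability $9/10$ comes only from the regularity lemma.
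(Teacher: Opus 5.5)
Your skeleton matches the paper's and is sound. You apply weak regularity with error $\Theta(\alpha\varepsilon)n^2$. You enumerate only the $2w$ rounded aggregates $\abs{L\cap S_t},\abs{L\cap T_t}$, since per-cell counts would cost $2^{2^{O(w)}}$. You test realizability with an LP and then round; your basic-solution argument, with $O(w)$ fractional coordinates, is even sharper than the paper's rounding of every $x_P$ down.

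The gap is the cost constraint. Geometric cost classes of ratio $1+\zeta/4$ determine $c(L)$ from cell counts only up to an additive error of about $\frac{\zeta}{4}\rho=\Theta(\zeta n^2)$. That is a constant fraction of the entire slack $2\zeta\rho$, and it does not shrink with $\varepsilon$. Whatever cost constraint you write in terms of class costs, one of two things breaks.
\begin{itemize}
\item If the constraint is loose enough that the cell-count vector of $\LOPT$ is always LP-feasible, the realized $L$ can miss $[\rho(1-\zeta),\rho(1+\zeta)]$ by $\Theta(\zeta\rho)$. This happens, e.g., when all vertices of the used cells sit at the top of their class, so choosing cheapest versus most expensive vertices gives no room. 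Repairing this means moving vertices of total cost $\Theta(\zeta\rho)$, which can add $\Theta(\zeta n^2)$ cut edges. Your budget is $\varepsilon\OPT$, which can be as small as $\varepsilon\alpha n^2\ll\zeta n^2$.
\item If the constraint is conservative enough that every LP solution yields a cut meeting the cost constraint, then $\LOPT$ is excluded whenever $c(\LOPT)$ lies within about $\zeta\rho/4$ of an endpoint. The best surviving guess then need not be near $\OPT$.
\end{itemize}
Guessing a $(1\pm\zeta/2)$ window for $c(L)$ changes neither case.

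What is needed is \emph{additive} cost resolution tied to $\alpha\varepsilon$. The paper buckets $c_v$ into intervals of width $\kappa=\Theta(c_0\alpha\varepsilon n)$ and relaxes one-sidedly: $\sum_P(\Delta_P+\kappa)x_P\ge\rho(1-\zeta)$ and $\sum_P\Delta_Px_P\le\rho(1+\zeta)$. Then $\LOPT$ is feasible, and the rounded cut violates the interval by only $O(\kappa n)=O(c_0\alpha\varepsilon n^2)$. Moving vertices to fix this costs $O(\alpha\varepsilon n^2)$ edges, since each moved $v$ adds at most $d_v\le c_v/c_0$.

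Your final remark points to a simpler fix. Make the LP genuinely vertex-level, with $z_v\in[0,1]$, the $2w$ aggregate constraints, and the \emph{exact} constraint $\rho(1-\zeta)\le\sum_v c_vz_v\le\rho(1+\zeta)$. Then $\chi_{\LOPT}$ is feasible for the right guess, and the cost classes and cost guess become unnecessary. A basic solution has at most $2w+1$ fractional coordinates.

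You still need an explicit repair step, which the proposal omits. Rounding those coordinates can push $c(L)$ outside the interval by up to $(2w+1)n$ if the LP solution sits at an endpoint; ``$\zeta\rho\gg wn$'' does not prevent this. It only ensures that moving $O(w/c_0)$ vertices restores feasibility without overshooting the other end, at a cost of $O(wn/c_0)=o(\alpha\varepsilon n^2)$ edges.

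With $n$-variable LPs per guess, the running time is $2^{\tilde O(1/(\alpha\varepsilon)^2)}\cdot n^{O(1)}$. This is still within the claimed $2^{\tilde O(1/(\alpha\varepsilon)^2)}+n^{O(1)}$, because $ab\le a^2+b^2$.
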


Frieze and Kannan \cite{frieze1999quick} applied the weak regularity lemma to obtain additive-error approximation algorithms for many graph problems, including \textsc{MaxCut} and \textsc{UniformSparsestCut}.
In this subsection, building on their framework, we impose a cost constraint to derive an algorithm for the \textsc{ConstrainedMinCut} problem and prove Theorem~\ref{thm:largeOPT}.

The algorithm consists of the following four steps:
\begin{enumerate}
  \item Compute a cut decomposition.
  \item Check feasibility via an LP.
  \item Round down the fractional solution.
  \item Move vertices to satisfy the cost constraint.
\end{enumerate}
Following the algorithm above, we first obtain a cut decomposition with constant width and small error, according to the given precision parameter $\varepsilon$, by applying the weak regularity lemma (Theorem~\ref{thm:weak_regularity}) to the adjacency matrix of the given dense graph.
Then, for each part of the cut decomposition, we discretize its possible contribution to the cut size. For each assignment, we check the feasibility of the assignment by solving a linear program.
Finally, we round the fractional solution of the LP to obtain a feasible cut satisfying the cost constraint.
If we take the best candidate among all possible assignments, we can obtain a cut of size at most $(1 + \varepsilon)\OPT$.
Since the optimal cut size is large, the additive errors introduced by the weak regularity lemma and by the discretization are small compared to $\OPT$.

The pseudocode of the algorithm is shown in Algorithm~\ref{alg:largeOPT}. $\alpha$ is a constant to be determined in the small-optimal-value case analysis.

\begin{algorithm}[H]
  \caption{ConstrainedMinCutLarge($G = (V, E), c, \rho, \zeta, \varepsilon$)}
  \label{alg:largeOPT}
  \begin{algorithmic}[1]
    \State $A \gets$ adjacency matrix of $G$, $\delta \gets$ the minimum degree of $G$ divided by $n$
    \State $c_0 \gets \min_{v \in V} c_v / n$
    \State $L\gets \emptyset$
    \State $\eta \gets 1/10$
    \State $\varepsilon_0 \gets \alpha\varepsilon$
    \State $\kappa \gets c_0 \tfrac{\min\set{\varepsilon_0, \delta}}{30}n$
    \For {\_ in 0..$\log{(1/\eta)}$}
    \State $(d_t, S_t, T_t)_{t=1..w}$ $\gets$ WeakRegularityLemma$(A)$ with parameter $\varepsilon = \varepsilon_0/10, \gamma = \eta$
    \State $\nu \gets \tfrac{\varepsilon_0 n}{70\sqrt{27}w}$
    \For {all $(\bar{f}_1, \ldots, \bar{f}_w, \bar{g}_1, \ldots, \bar{g}_w) \in \nu K^{2w}$}
    \If {the linear relaxation of the integer problem is infeasible}
    \State continue
    \Else
    \State $x_P \gets$ a solution to the linear relaxation of the integer problem
    \State $y_P \gets \floor{x_P}$
    \State $L^* \gets$ cut corresponding to $y_P$
    \If {$c(L^*) < \rho(1 - \zeta)$}
    \State $\hat{L} \gets L^*$
    \State Add vertices to $\hat{L}$ in an arbitrary order until $c(\hat{L}) \ge \rho(1 - \zeta)$
    \ElsIf {$c(L^*) > \rho(1 + \zeta)$}
    \State $\hat{L} \gets L^*$
    \State Remove vertices from $\hat{L}$ in an arbitrary order until $c(\hat{L}) \le \rho(1 + \zeta)$
    \Else
    \State $\hat{L} \gets L^*$
    \EndIf
    \If {$\abs{\mE(\hat{L}, \overline{\hat{L}})} < \abs{\mE(L, \overline{L})}$ or $L = \emptyset$}
    \State $L \gets \hat{L}$
    \EndIf
    \EndIf
    \EndFor
    \EndFor
    \State $R \gets V \setminus L$
    \State \Return $(L, R)$
  \end{algorithmic}
\end{algorithm}

\paragraph{Proof of Theorem~\ref{thm:largeOPT}:}
By Theorem~\ref{thm:weak_regularity}, we can obtain a cut decomposition $(d_t, S_t, T_t)_{t = 1, \ldots, w}$ of the adjacency matrix $A$ with error at most $\tfrac{\varepsilon_0}{10} n^2$, width $w = O(\varepsilon_0^{-2})$ and coefficient length at most $\sqrt{27}$ in time $2^{\tilde{O}(\varepsilon_0^{-2})}$ with probability $9/10$ (we fix error parameter $\gamma$ in Theorem~\ref{thm:weak_regularity} to be $1/10$).

First, we split $V$ into at most $2^{2w}$ parts $\cP$.
Recall that $S_t, T_t \subseteq V$.
We can construct vectors $x_v = (\bbm[v \in S_1], \bbm[v \in T_1], \ldots, \bbm[v \in S_w], \bbm[v \in T_w])$ for each vertex $v \in V$.
For each $\sigma \in \set{0, 1}^{2w}$, define $P_\sigma = \set{v \in V \mid x_v = \sigma}$. Then $\cP$ forms a partition of $V$.
In the cut decomposition $(d_t, S_t, T_t)$, the contribution of each term to the cut $\mE(L, \overline{L})$ is expressed by $d_tf_tg_t$ where $f_t = \abs{L \cap S_t}$ and $g_t = \abs{\overline{L} \cap T_t}$.
Discretizing $f_t$ and $g_t$ into $\bar{f}_t := \floor{\tfrac{f_t}{\nu}} \nu$ and $\bar{g}_t := \floor{\tfrac{g_t}{\nu}} \nu$ respectively, we can approximate $\sum_{t=1}^w d_tf_tg_t$ by $\sum_{t=1}^w d_t\bar{f}_t\bar{g}_t$ with error at most $3n\nu\sum_{t=1}^w\abs{d_t}$ where $\nu = \tfrac{\varepsilon_0 n}{70\sqrt{27}w}$ (after appropriate scaling).
Now, $\bar{f}_t$ and $\bar{g}_t$ only take discrete values in $\nu K := \set{0, \nu, 2\nu, \ldots, \ceil{\frac{n}{\nu}}\nu}$, so we can enumerate all $O(1/\varepsilon_0^3)$ values for each $\bar{f}_t$ and $\bar{g}_t$.

We further split each of these $2^{2w}$ parts into $\ceil{\frac{n}{\kappa}}$ parts based on their cost $c_v$.
We fix a parameter $\kappa = c_0 \tfrac{\min\set{\varepsilon_0, \delta} n}{30}$.
We define $V_m = \set{v \in V \mid c_v \in [\kappa m, \kappa(m + 1))}$ for $m = 0, 1, \ldots, \ceil{\frac{n}{\kappa}} - 1$.
Note that $V_m$ forms a partition of $V$ and $\set{P \cap V_m \mid P \in \cP, m = 0, 1, \ldots, \ceil{\frac{n}{\kappa}} - 1}$ is also a partition of $V$.
Let $\cP' = \set{P \cap V_m \mid P \in \cP, m = 0, 1, \ldots, \ceil{\frac{n}{\kappa}} - 1}$.
Thus, each part $P \in \cP'$ is characterized by $S_t$ and $T_t$ for $(t = 1, \ldots, w)$ and its cost $\Delta_P := \kappa m$.
The number of parts in $\cP'$ is at most $2^{2w}\ceil{\frac{n}{\kappa}} = 2^{2w}\ceil{\tfrac{30}{c_0 \min\set{\varepsilon_0, \delta}}}$, which does not depend on $n$.

For each $(\bar{f}_1, \ldots, \bar{f}_w, \bar{g}_1, \ldots, \bar{g}_w) \in \nu K^{2w}$, we consider the following integer problem:
\begin{equation*}
  \begin{aligned}
    0             & \le {x_P}                                                                           & \le \abs{P} \quad (P \in \cP')                  \\
    \bar{f}_t     & \le {\sum_{P \in \cP', P \subseteq S_t} x_P}                                        & \le \bar{f}_t + \nu \quad (t = 1, 2, \ldots, w) \\
    \bar{g}_t     & \le {\sum_{P \in \cP', P \subseteq T_t} (|P| - x_P)}                                & \le \bar{g}_t + \nu \quad (t = 1, 2, \ldots, w) \\
    \rho(1-\zeta) & \le \sum_{P \in \cP'} (\Delta_P + \kappa) x_P, \quad \sum_{P \in \cP'} \Delta_P x_P & \le \rho(1+\zeta),                              \\
    x_P           & \in \mathbb{Z} \quad (P \in \cP')
  \end{aligned}
\end{equation*}
where $x_P$ is an integer variable representing the number of vertices in part $P$ assigned to cut $\mE(L, \overline{L})$.
For any cut $\mE(L, \overline{L})$ corresponding to $\bar{f}_t = \floor{\frac{\abs{L \cap S_t}}{\nu}}\nu$ and $\bar{g}_t = \floor{\frac{\abs{\overline{L} \cap T_t}}{\nu}}\nu$ for all $t = 1, \ldots, w$, there exists a feasible solution to this integer problem.
In other words, there exist parameters $(\bar{f}_1, \ldots, \bar{f}_w, \bar{g}_1, \ldots, \bar{g}_w)$ such that the optimal cut $\mE(\LOPT, \overline{\LOPT})$ corresponds to them.

To obtain an integer solution for this problem, we consider the LP relaxation of this problem and let $x_P$ be a feasible solution.
We round $x_P$ to $y_P = \floor{x_P}$ for all $P \in \cP'$.
Once we obtain $y_P$, we can construct the corresponding cut $\mE(L^*, \overline{L^*})$.
We show that the cut $\mE(L^*, \overline{L^*})$ corresponding to the rounded solution $y_P$ does not violate the cost constraint too much, and its cut size is close to the value computed from $\bar{f}_t$ and $\bar{g}_t$.

Since $x_P$ is a feasible solution to the LP relaxation, the following inequalities hold:
\begin{align*}
  \rho(1-\zeta)                  & \le \sum_{P \in \cP'} (\Delta_P + \kappa) x_P \\
  \sum_{P \in \cP'} \Delta_P x_P & \le \rho(1+\zeta).
\end{align*}

Let $z_P := x_P - \floor{x_P}$ for each $P \in \cP'$.
We have the upper and lower bounds on the cost of the rounded solution as follows:
\begin{align*}
  \sum_{P \in \cP'} (\Delta_P + \kappa) y_P
   & = \sum_{P \in \cP'} (\Delta_P + \kappa)(x_P - z_P)                                                                  \\
   & = \sum_{P \in \cP'} \pare*{\Delta_P x_P + \kappa x_P - z_P(\Delta_P + \kappa)}                                      \\
   & \le \rho(1+\zeta) + \sum_{P \in \cP'} \kappa x_P \qquad (\because \sum_{P \in \cP'} \Delta_P x_P \le \rho(1+\zeta)) \\
   & \le \rho(1+\zeta) + \kappa n. \qquad (\because \sum_{P \in \cP'} x_P \le n)
\end{align*}
\begin{align*}
  \sum_{P \in \cP'} \Delta_P y_P
   & = \sum_{P \in \cP'} \pare*{(\Delta_P + \kappa)(x_P - z_P) - \kappa \floor{x_P}}                                                                 \\
   & = \sum_{P \in  \cP'} \pare*{(\Delta_P + \kappa) x_P - z_P(\Delta_P + \kappa) - \kappa \floor{x_P}}                                              \\
   & \ge \rho(1-\zeta) - \sum_{P \in \cP'}\pare*{(\Delta_P + \kappa) + \kappa \floor{x_P}} \qquad (\because 0 \le z_P \le 1)                         \\
   & \ge \rho(1-\zeta) - \abs{\cP'}n - \kappa \abs{\cP'} - \kappa n \qquad (\because \Delta_P \le n \text{ and } \sum_{P \in \cP'}\floor{x_P} \le n) \\
   & \ge \rho(1-\zeta) - 3\kappa n. \qquad (\because \abs{\cP'} < n, \abs{\cP'} < \kappa \text{ holds for sufficiently large } n)
\end{align*}

Since $\sum_{P \in \cP'}\Delta_P y_P \le c(L^*) \le \sum_{P \in \cP'} (\Delta_P + \kappa) y_P$ holds, by rounding $x_P$ down to $y_P$, we can ensure that $\rho(1-\zeta) - 3\kappa n \le c(L^*) \le \rho(1+\zeta) + \kappa n$.

We now verify that the cut size induced by $y_P$ is close to the value computed from $\bar{f}_t$ and $\bar{g}_t$.
Since $x_P - y_P \le 1$ for each $P \in \cP'$, $\sum_{P \in \cP', P \subseteq S_t}\pare*{x_P - y_P}$ is at most $\abs{\cP'} = 2^{2w}\ceil{\tfrac{n}{\kappa}}$.
With $\sum_{P \in \cP', P \subseteq S_t} x_P \ge \bar{f}_t$, we can ensure that $\bar{f}_t - \abs{\cP'} \le \sum_{P \in \cP', P \subseteq S_t} y_P \le \bar{f}_t + \nu$.
We have a similar bound for $\sum_{P \in \cP', P \subseteq T_t} (|P| - y_P)$.
Thus, we obtain the following bounds:
\begin{align*}
  \abs{\abs{S_t \cap L^*} - \bar{f}_t} \le \nu + 2^{2w}\ceil{\tfrac{n}{\kappa}} \le 2\nu            \\
  \abs{\abs{T_t \cap \overline{L^*}} - \bar{g}_t} \le \nu + 2^{2w}\ceil{\tfrac{n}{\kappa}} \le 2\nu \\
\end{align*}
for sufficiently large $n$.

These bounds imply:
\begin{align*}
  \abs{\sum_{t=1}^w d_t |S_t \cap L^*| |T_t \cap \overline{L^*}| - \sum_{t=1}^w d_t \bar{f}_t \bar{g}_t}
   & = \abs{\sum_{t=1}^w d_t \bigl((|S_t \cap L^*| - \bar{f}_t) |T_t \cap \overline{L^*}| + \bar{f}_t (|T_t \cap \overline{L^*}| - \bar{g}_t) \bigr) } \\
   & \le \sum_{t=1}^w \abs{d_t} (2\nu n + 2\nu n)                                                                                                      \\
   & \le 4n\nu \sum_{t=1}^w \abs{d_t}                                                                                                                  \\
   & \le 4\sqrt{27}n\nu w.
\end{align*}
Thus, the total error in the cut size $\abs{\mE(L^*, \overline{L^*})}$ from discretizing $f_t$ and $g_t$ and rounding is at most $7\sqrt{27}n\nu w \le \tfrac{\varepsilon_0 n^2}{10}$.

Finally, we can adjust the rounded solution $y_P$ to satisfy the cost constraint by moving vertices.
Since $\kappa n \le c_0 \tfrac{\min\set{\varepsilon_0, \delta}}{30}n^2$ and $\rho(1 + \zeta) - \rho(1-\zeta) = 2\rho\zeta = \Omega(n^2)$, we can adjust the cut to satisfy the cost constraint as follows:
We move vertices one by one in an arbitrary order until the cost constraint is satisfied, and let $\hat{L}$ be the final cut after moving vertices.
Since the cost of each vertex is at most $n$, and the length of the constraint interval is $\Omega(n^2)$, we can always find a feasible solution by this procedure.
For example, if $c(L^*) \le \rho(1-\zeta)$ holds, then for the final moved vertex $v$ and moved set $S$, we have:
\begin{equation*}
  c(L^*) + c(S) < \rho(1-\zeta) \le c(L^*) + c(S) + c_v
\end{equation*}

Since $\rho(1-\zeta) - 3\kappa n \le c(L^*)$ holds, we have $c(S) + c_v < 3 \kappa n + n$ and $\hat{L} = L^* \cup S \cup \set{v}$ is a feasible solution satisfying the cost constraint.
The increase in the cut size by moving one vertex is bounded by its degree, which is at most $\tfrac{1}{c_0} c_v$.
More generally, moving $T \subseteq L^*$ to $\overline{L^*}$ or moving $T \subseteq \overline{L^*}$ to $L^*$ changes the cut size by at most $\tfrac{1}{c_0}c(T)$.
Hence, the total increase in the cut size by the moving procedure is at most $\tfrac{1}{c_0}(3\kappa n + n)$.
Thus, the total error for the cut size is at most $\tfrac{1}{c_0}(3\kappa n + n) + \tfrac{\varepsilon_0 n^2}{10} \le \tfrac{\varepsilon_0 n^2}{10} + \tfrac{1}{c_0}n + \tfrac{\varepsilon_0 n^2}{10} \le \tfrac{1}{c_0}n + \frac{\varepsilon_0 n^2}{5}$.
The cut size of the final solution is at most $\OPT + \tfrac{1}{c_0}n + \frac{\varepsilon_0 n^2}{5} + \tfrac{2\varepsilon_0}{10}n^2 < \OPT + \varepsilon_0 n^2$ for sufficiently large $n$ ($\tfrac{2\varepsilon_0}{10}n^2$ is the error from the weak regularity lemma).

If we set $\varepsilon_0 = \alpha \varepsilon$ and run the above algorithm, the best cut among all $\bar{f}_t$ and $\bar{g}_t$ has size at most $\OPT + \varepsilon_0 n^2 \le (1 + \varepsilon)\OPT$ with probability $9/10$.
Repeating this process $O(\log{(1/\eta)})$ times and taking the best cut among them, the success probability increases to $1 - \eta$, which completes the proof of Theorem~\ref{thm:largeOPT}.\qed

\subsection{Small OPT Case for ConstrainedMinCut Problem}
In this section, we consider the case where the optimal cut size for the \textsc{ConstrainedMinCut} problem is $\OPT \le \alpha n^2$. The value of $\alpha$ is given explicitly in the analysis below.
We denote the upper and lower cost bounds for the given \textsc{ConstrainedMinCut} instance by $C_M = \rho(1+\zeta)$ and $C_m = \rho(1-\zeta)$, respectively. The main result in this section is as follows:

\begin{thm}
  \label{thm:smallOPT} For a dense instance of the \textsc{ConstrainedMinCut} problem with optimal cut size $\OPT \le \alpha n^{2}$, with high probability, we can find a cut $\mE(L, \overline{L})$ such that $c(L) \in [\rho(1-\zeta), \rho(1+\zeta)]$ and its cut size is at most $(1 + \varepsilon)\OPT$ in time $n^{O(1)}$.
\end{thm}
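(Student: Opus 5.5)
We follow the Arora--Karger--Karpinski sampling-and-estimation paradigm for small OPT, adding a rebalancing step. Throughout, $(\LOPT,\ROPT)$ is a fixed optimal cut with $\OPT\le\alpha n^2$.

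\emph{Structure of the optimal cut.} Since $\OPT\le \alpha n^2$ and every vertex has degree at least $\delta n$, at most $\OPT/(\delta n/4)\le 4\alpha n/\delta$ vertices can have more than a quarter of their neighbours on the opposite side. Call these vertices \emph{bad} and all others \emph{good}. The balancedness assumption $\rho'>\tfrac{c_0\varepsilon\delta^2}{2(1+\zeta)}$ means that both $\LOPT$ and $\ROPT$ contain $\Omega(\varepsilon n)$ vertices. Choosing $\alpha$ small relative to $\varepsilon\delta^2 c_0$ therefore makes the bad vertices a negligible fraction of each side.

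\emph{Sampling and classification.} Draw a uniform sample $U$ of $O(\log n/\delta^2)$ vertices. Enumerate all $2^{|U|}=n^{O(1)}$ labelings of $U$; one of them agrees with $(\LOPT,\ROPT)$. For that labeling, each $v$ gets the estimate $\hat{\ell}_v=|N(v)\cap U\cap \LOPT|\cdot n/|U|$ of $|N(v)\cap\LOPT|$. By Chernoff and a union bound, with high probability $|\hat{\ell}_v-|N(v)\cap\LOPT||\le \delta n/8$ for all $v$ simultaneously. Place $v$ in $L$ if $\hat\ell_v\ge (1-\tfrac{1}{4})d_v$, in $R$ if $\hat\ell_v\le \tfrac14 d_v$, and otherwise put $v$ in an uncertain set $X$. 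Every good vertex is then either classified correctly or lands in $X$, never on the wrong side. Every vertex in $X$ has at least $\delta n/8$ neighbours on each side of the optimal cut, so $|X|\cdot\delta n/8\le 2\OPT$, giving $|X|=O(\OPT/n)$. Misclassified vertices are all bad, and with the threshold gap and a second pass (reclassifying against $L$ and $R$ themselves) one gets $L\triangle \LOPT\subseteq X$.

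\emph{Rebalancing by exact DP (main obstacle).} Cost feasibility cannot be reached greedily: moving $\Theta(\varepsilon n)$ vertices costs $\Theta(\varepsilon n^2)$, which is not small relative to $\OPT$. Instead, fix $L_0=L\setminus X$ and $R_0=R\setminus X$ and decide only $X$. For $S\subseteq X$ the cut value is
\[|\mE(L_0\cup S,R_0\cup(X\setminus S))|=|\mE(L_0,R_0)|+\sum_{v\in S}|N(v)\cap R_0|+\sum_{v\in X\setminus S}|N(v)\cap L_0|+|\mE(S,X\setminus S)|.\]
The last term is at most $|X|^2=O(\OPT^2/n^2)\le O(\alpha)\OPT$, which is at most $\varepsilon\OPT/2$ for small $\alpha$. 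The remaining terms are linear in $S$, and the costs $c_v\le n$ are integers. A knapsack DP over $X$, indexed by total cost $c(S)\in[0,n^2]$, therefore minimises the linear part subject to $c(L_0)+c(S)\in[C_m,C_M]$ in time $n^{O(1)}$. Because $S=X\cap\LOPT$ is a feasible choice, the DP optimum is at most $\OPT+|X|^2\le(1+\varepsilon)\OPT$. This step is where I expect the difficulty: it needs the inclusion $L\triangle\LOPT\subseteq X$ and the bound $|X|=O(\OPT/n)$ to hold simultaneously. If some bad vertices are misclassified outside $X$, I would enlarge $X$ to include every vertex whose estimate lies within $\delta n/4$ of either threshold; this keeps $|X|=O(\OPT/(\delta n))$.

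\emph{Wrap-up.} Finally, return the best feasible cut over all labelings of $U$. Amplifying by repetition gives high probability, and the total running time is $n^{O(1)}$.
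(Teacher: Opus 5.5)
There is a genuine gap, and it sits exactly at the step you flagged. The inclusion $L\triangle\LOPT\subseteq X$ is false in general, and neither a second pass against $L,R$ nor widening the thresholds can enforce it. Because of the cost constraint, the optimal cut may have to place on one side vertices whose \emph{true} neighbourhoods lie almost entirely on the other side, so no local test can detect them.

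Concretely, take two disjoint cliques $A,B$ with $\abs{A}=n/3$, $\abs{B}=2n/3$, $c_v=n$, and choose $\rho$ so that $C_m=c(A)+n$. Then $\LOPT=A\cup\{v_0\}$ for some $v_0\in B$, and $\OPT=2n/3-1$. Under the correct labelling of $U$, every vertex of $B$ (including $v_0$) has estimate near $0$, far from both thresholds. You therefore get $L=A$, $R=B$, $X=\emptyset$, and reclassifying against $L,R$ changes nothing. Your DP over $X$ then has no feasible choice, since $c(L_0)=c(A)<C_m$.

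In general, write $U_L=L\cap\ROPT$ and $U_R=R\cap\LOPT$; these sets can lie outside $X$. Your comparison set $L_0\cup(X\cap\LOPT)$ then has cost $c(\LOPT)+c(U_L)-c(U_R)$, which need not lie in $[C_m,C_M]$. Its cut value is also not $\OPT$. So the bound ``DP optimum $\le\OPT+\abs{X}^2$'' is unsupported.

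The missing idea is to control the misclassified vertices rather than route them into $X$, and to let the rebalancing step move already-classified vertices too.

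\begin{itemize}
\item Every $v\in U_L\cup U_R$ has at least $(\tfrac12+a)d_v$ neighbours across the optimal cut. Hence $\vol(U_L),\vol(U_R)\le\OPT/(\tfrac12+a)$.
\item Besides $T\subseteq X$, the paper's DP chooses a set $S$ to move out of $L$ (when $c(L)+c(X)\ge C_M$) or out of $R$ (when $c(L)<C_m$). It imposes a degree budget $\sum_{u\in S}d_u\le\alpha n^2/(\tfrac12+a)$ and only the one-sided cost bound relevant to that case.
\item The comparator $(U_L,X_L)$, resp.\ $(U_R,X_R)$, yields $\LOPT\setminus U_R$, resp.\ $\LOPT\cup U_L$, and this satisfies that one-sided bound.
\item This comparator is not $\OPT$ itself. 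So one must also show that keeping $U_R$ (resp.\ $U_L$) on the side holding most of its neighbours raises the cut by only $O(\OPT^2/n^2)$. One must also show that cross terms such as $\abs{\mE(S,T)}$ are $O(\alpha\OPT)$.
\item The other side of the cost constraint then holds for every DP-feasible pair automatically, because $c(S)+c(X)=O(\alpha n^2/\delta)\le 2\rho\zeta$. This is why $\alpha$ must also be at most $6\rho'\zeta c_0\delta/35$, a $\zeta$-dependence your choice of $\alpha$ lacks.
\end{itemize}

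Only when $c(L)\ge C_m$ and $c(L)+c(X)<C_M$ does splitting $X$ alone suffice. Even there, the comparison with $\OPT$ goes through the majority-side estimate for $U_L,U_R$, not through $L\triangle\LOPT\subseteq X$.
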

Recall that $\rho' := \rho / c(V)$, which is $\Theta(1)$.

Algorithm~\ref{alg:small_case} presents the overall procedure for the small $\OPT$ case, and Algorithm~\ref{alg:adjust} presents the rebalancing procedure used in Algorithm~\ref{alg:small_case}.
The algorithm consists of the following two steps:
\begin{enumerate}
  \item Sampling and Classification.
  \item Rebalancing to satisfy the cost constraint.
\end{enumerate}
Despite being everywhere-$\delta$-dense, the graph admits an optimal cut of small size.
Hence, in the optimal cut $(L, R)$, it is possible that only a few edges cross between $L$ and $R$, and most of the edges are internal to $L$ or to $R$.
In Algorithm~\ref{alg:small_case}, we first sample a set $S$ with $\abs{S} = \Theta(\log{n})$ uniformly at random.
Since the graph is everywhere-$\delta$-dense, most of the vertices $v$ have sufficiently many neighbors in $S$.
Thus, for most vertices $v$, we can classify $v$ correctly into $L$ or $R$ by comparing the fraction of its neighbors in $S$ that lie in either $L$ or $R$.
However, due to the cost constraint, some vertices may be misclassified, and hence the resulting partition may violate the cost constraint.
Since the optimal cut size is small, the number of such vertices must be small.
Therefore, we can adjust the cut to satisfy the cost constraint without increasing the cut size by much.

\begin{algorithm}[h]
  \caption{ConstrainedMinCutSmall$(G = (V, E), c, \rho, \zeta, \varepsilon)$}
  \label{alg:small_case}
  \begin{algorithmic}[1]
    \State $\delta \gets$ the minimum degree of $G$ divided by $n$
    \State $c_0 \gets \min_{v \in V} c_v / n$
    \State $\rho' \gets \rho / c(V)$
    \State $a \gets 1/10$
    \State $\alpha \gets \min\set{\tfrac{\varepsilon}{22}(1/2 - 3a)^2 \delta^2, 6\rho' \zeta c_0 \delta /35}$
    \State $k \gets \max\set{\tfrac{2}{a^2\delta}, \tfrac{4}{\delta^2}}$
    \State Sample a sequence $S$ of $k\log{n}$ vertices uniformly at random from $V$ with replacement
    \State $L_0, R_0 \gets \emptyset, \emptyset$
    \ForAll {consistent partitions of the occurrences of $S$ into $S_L$ and $S_R$}
    \State $L, R, X \gets \emptyset, \emptyset, \emptyset$
    \ForAll {$v \in V$}
    \State Compute $\hat{p}_L(v) \gets \frac{\abs{N(v) \cap S_L}}{\abs{N(v) \cap S}}$ and $\hat{p}_R(v) \gets \frac{\abs{N(v) \cap S_R}}{\abs{N(v) \cap S}}$
    \If {$\hat{p}_L(v) > \frac 1 2 + 2a$}
    \State $L \gets L \cup \set{v}$
    \ElsIf {$\hat{p}_R(v) > \frac 1 2 + 2a$}
    \State $R \gets R \cup \set{v}$
    \Else
    \State $X \gets X \cup \set{v}$
    \EndIf
    \EndFor
    \State $(\hat{L}, \hat{R}) \gets$ Adjust$(L, R, X, \rho(1-\zeta), \rho(1+\zeta))$
    \If{$\abs{\mE(\hat{L}, \overline{\hat{L}})} < \abs{\mE(L_0, \overline{L_0})}$ or $L_0 = \emptyset$}
    \State $L_0, R_0 \gets \hat{L}, \hat{R}$
    \EndIf
    \EndFor
    \State \Return $(L_0, R_0)$
  \end{algorithmic}
\end{algorithm}

\begin{algorithm}[htbp]
  \caption{Adjust$(L, R, X, C_m, C_M)$}
  \label{alg:adjust}
  \begin{algorithmic}[1]
    \If{$c(L) + c(X) \ge C_M$}
    \State $V_u \gets |N(u) \cap L| - |N(u) \cap R|$ for all $u \in L$
    \State $V'_u \gets |N(u) \cap R| - |N(u) \cap L|$ for all $u \in X$
    \State Let $S, T$ be an optimal solution to the following optimization problem:
    \begin{equation*}
      \begin{gathered}
        \makebox[0pt][l]{\text{minimize}}\makebox[\linewidth][c]{\ensuremath{\displaystyle \sum_{u \in S} V_u + \sum_{u \in T} V'_u}}\\
        \makebox[0pt][l]{\text{s.t.}}\makebox[\linewidth][c]{\ensuremath{S \subseteq L, T \subseteq X}}\\
        \makebox[\linewidth][c]{\ensuremath{\sum_{u \in S} d_u \le \frac{\alpha n^2}{1/2 + a}}}\\
        \makebox[\linewidth][c]{\ensuremath{c(L) - c(S) + c(T) \le C_M}}
      \end{gathered}
    \end{equation*}
    \State \Return $(L \setminus S) \cup T, (R \cup S) \cup (X \setminus T)$
    \ElsIf {$c(L) < C_m$}
    \State $V_u \gets |N(u) \cap R| - |N(u) \cap L|$ for all $u \in R$
    \State $V'_u \gets |N(u) \cap L| - |N(u) \cap R|$ for all $u \in X$
    \State Let $S, T$ be an optimal solution to the following optimization problem:
    \begin{equation*}
      \begin{gathered}
        \makebox[0pt][l]{\text{minimize}}\makebox[\linewidth][c]{\ensuremath{\displaystyle \sum_{u \in S} V_u + \sum_{u \in T} V'_u}}\\
        \makebox[0pt][l]{\text{s.t.}}\makebox[\linewidth][c]{\ensuremath{S \subseteq R, T \subseteq X}}\\
        \makebox[\linewidth][c]{\ensuremath{\sum_{u \in S} d_u \le \frac{\alpha n^2}{1/2 + a}}}\\
        \makebox[\linewidth][c]{\ensuremath{C_m \le c(L) + c(S) + c(X \setminus T)}}
      \end{gathered}
    \end{equation*}

    \State \Return $(L \cup S) \cup (X \setminus T), (R \setminus S) \cup T$
    \Else
    \State $V_u \gets |N(u) \cap R| - |N(u) \cap L|$ for all $u \in X$
    \State Let $T$ be an optimal solution to the following optimization problem:
    \begin{equation*}
      \begin{gathered}
        \makebox[0pt][l]{\text{minimize}}\makebox[\linewidth][c]{\ensuremath{\displaystyle \sum_{u \in T} V_u}}\\
        \makebox[0pt][l]{\text{s.t.}}
        \makebox[\linewidth][c]{\ensuremath{T \subseteq X}}\\
      \end{gathered}
    \end{equation*}
    \State \Return $L \cup T, R \cup (X \setminus T)$
    \EndIf
  \end{algorithmic}
\end{algorithm}
\paragraph{Proof of Theorem~\ref{thm:smallOPT}:}
First of all, let $S=(s_1,\ldots,s_{k\log n})$ be the sampled sequence.
We split the occurrences of $S$ into two subsequences $S_L$ and $S_R$.
A partition $(S_L,S_R)$ of the occurrences of $S$ is called consistent if, for all $i,j$ with $s_i=s_j$, the two occurrences $s_i$ and $s_j$ are assigned to the same side.

Throughout this subsection, for a vertex set $U\subseteq V$ and a sampled sequence or subsequence $T$, the quantity $\abs{U\cap T}$ denotes the number of occurrences in $T$ whose vertex belongs to $U$.
In particular, if the same vertex appears multiple times in $T$, it is counted with its multiplicity.

Let $\hat{p}_L(v) = \frac{| N(v) \cap S_L |}{| N(v) \cap S |}$ and $p_L(v) = \frac{| N(v) \cap \LOPT |}{d_v}$ for each vertex $v \in V$ (similarly, we define $\hat{p}_R(v) := \tfrac{| N(v) \cap S_R |}{| N(v) \cap S |}$ and $p_R(v) := \tfrac{| N(v) \cap \ROPT |}{d_v}$).
The estimators $\hat{p}_L(v)$ and $\hat{p}_R(v)$ are therefore computed using the multiplicities of sampled neighbors in $S$.
We show that, when the consistent partition $(S_L,S_R)$ agrees with the optimal cut $(\LOPT,\ROPT)$ on the sampled vertices, with high probability, $\hat{p}_L(v)$ is close to $p_L(v)$ for all $v \in V$.

We show the following key lemma.
\begin{lem}
  If we take $k \ge \max(\frac{2}{a^2\delta}, \frac{4}{\delta^2})$ for a constant $a > 0$, then $| \hat{p}_L(v) - p_L(v) | \le a$ holds with high probability (over the choice of $S$) for all $v \in V$.
\end{lem}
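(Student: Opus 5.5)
The plan is a two-stage concentration argument for a fixed vertex $v$, followed by a union bound over all $n$ vertices. Throughout, I assume the consistent partition $(S_L,S_R)$ is the one that agrees with $(\LOPT,\ROPT)$ on the sampled vertices. Under that assumption, an occurrence lies in $S_L$ exactly when its vertex lies in $\LOPT$, so
\[
\hat{p}_L(v)=\frac{\abs{N(v)\cap \LOPT\cap S}}{\abs{N(v)\cap S}},
\]
with occurrences counted with multiplicity. Write $s=k\log n$ for the sample length. The $s$ draws are i.i.d.\ uniform on $V$.

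\emph{Step 1: enough sampled neighbours.} Let $m_v:=\abs{N(v)\cap S}$. This is a sum of $s$ i.i.d.\ Bernoulli variables with mean $d_v/n\ge\delta$, so $\E[m_v]\ge \delta s$. By Hoeffding's inequality with additive deviation $\delta/2$,
\[
\Pr[m_v<\tfrac{\delta}{2}s]\le \exp(-2(\delta/2)^2 s)=n^{-\delta^2 k/2}.
\]
Since $k\ge 4/\delta^2$, this is at most $n^{-2}$. In particular $m_v>0$, so the estimator is well defined.

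\emph{Step 2: concentration of the ratio.} Condition on the set $I\subseteq\{1,\dots,s\}$ of indices $i$ with $s_i\in N(v)$. Because the draws are independent and uniform on $V$, the draws $s_i$ for $i\in I$ are, conditionally, i.i.d.\ uniform on $N(v)$. Each such draw lies in $\LOPT$ with probability exactly $p_L(v)$. So conditionally on $I$ with $\abs{I}=m$, the quantity $\hat{p}_L(v)$ is an average of $m$ i.i.d.\ Bernoulli$(p_L(v))$ variables. Hoeffding's inequality gives
\[
\Pr\bigl[\abs{\hat{p}_L(v)-p_L(v)}>a \,\big|\, I\bigr]\le 2\exp(-2a^2 m).
\]
On the event $m\ge \tfrac{\delta}{2}s$ this is at most $2\exp(-a^2\delta k\log n)=2n^{-a^2\delta k}$. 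Since $k\ge 2/(a^2\delta)$, this is at most $2n^{-2}$. Averaging over all $I$ with $\abs{I}\ge\delta s/2$ and adding the Step 1 failure probability, the bad event for $v$ has probability at most $3n^{-2}$.

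\emph{Step 3: union bound.} A union bound over the $n$ vertices gives simultaneous success for all $v$ with probability at least $1-3/n$. Since $\hat{p}_R(v)=1-\hat{p}_L(v)$ and $p_R(v)=1-p_L(v)$, the same bound holds for $\hat{p}_R$.

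The only delicate point is the conditioning in Step 2. The number of sampled neighbours is itself random, and sampling is with replacement, so I would make the i.i.d.-uniform-on-$N(v)$ claim explicit as above. This avoids treating $S$ as a set and justifies applying Hoeffding with a random but lower-bounded sample size $m$. Counting with multiplicity, together with the consistency of $(S_L,S_R)$, is exactly what keeps this clean.
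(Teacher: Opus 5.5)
Your proposal is correct and follows essentially the same route as the paper: a Hoeffding bound showing $|N(v)\cap S|\ge \tfrac{\delta}{2}k\log n$ except with probability $n^{-k\delta^2/2}$, then a conditional Hoeffding bound on the i.i.d.\ uniform-on-$N(v)$ draws giving failure probability $2n^{-k\delta a^2}$, then a union bound over $V$. The differences are minor: you condition on the index set rather than only on the count, and you use a one-sided bound in Step 1, so your final failure probability is $3/n$ instead of the paper's $4/n$.
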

\begin{proof}
  First, we consider the conditional probability $\Pr[| \hat{p}_L(v) - p_L(v) | \ge a \mid | N(v) \cap S | = m]$.
  We define random variables $X_i, Y_i \ (i = 1, \ldots, m)$ as follows:
  \begin{equation*}
    Y_i \text{ is sampled uniformly at random from } N(v)
  \end{equation*}
  \begin{equation*}
    X_i =
    \begin{cases}
      1 & \text{$Y_i \in \LOPT$}     \\
      0 & \text{$Y_i \notin \LOPT$}.
    \end{cases}
  \end{equation*}
  Conditional on $\abs{N(v)\cap S}=m$, the $m$ occurrences of $S$ that lie in $N(v)$ have the same distribution as independent samples $Y_1,\ldots,Y_m$ drawn uniformly from $N(v)$ with replacement.
  Note that these occurrences form a multiset.

  Specifically, we have $\E[X_i] = \frac 1 {d_v} \sum_{u \in V} \bbm[u \in N(v) \cap \LOPT] =  \frac{|N(v) \cap \LOPT|} {d_v} = p_L(v)$.
  By Hoeffding's inequality, we have:
  \begin{equation*}
    \Pr[|\hat{p}_L(v) - p_L(v)|\ge a \mid | N(v) \cap S| = m]
    \le 2\exp(-2ma^2).
  \end{equation*}

  We denote $I = [(\frac {d_v} {n} - u)|S|, (\frac {d_v} {n} + u)|S|]$.
  Then we have:
  \begin{align*}
    \Pr[|\frac {|N(v) \cap S|}{|S|} - \frac{d_v}{n}| \ge u]
     & = \Pr[|N(v) \cap S| \notin I] \\
     & \le 2\exp(-2|S|u^2)           \\
     & = 2n^{-\frac{k\delta^2}{2}}.
  \end{align*}

  We set $u = \frac \delta 2$ and $I = [\frac{k\delta\log{n}}{2}, (1 + \frac{\delta}{2})k\log{n}]$. Then we have:
  \begin{align*}
    \max_{m \in I} \Pr[|\hat{p}_L(v) - p_L(v)| \ge a \mid | N(v) \cap S| = m]
     & \le 2\exp(-k\delta a^2\log{n}) \\
     & = 2n^{-k\delta a^2}.
  \end{align*}

  \begin{align*}
    \Pr[\abs{\hat{p}_L(v) - p_L(v)} \ge a]
     & = \sum_{m \in I} \Pr[\abs{\hat{p}_L(v) - p_L(v)} \ge a \mid | N(v) \cap S| = m] \Pr[| N(v) \cap S| = m]          \\
     & + \sum_{m \notin I} \Pr[\abs{\hat{p}_L(v) - p_L(v)} \ge a \mid | N(v) \cap S| = m] \Pr[| N(v) \cap S| = m]       \\
     & \le \max_{m \in I} \Pr[\abs{\hat{p}_L(v) - p_L(v)} \ge a \mid | N(v) \cap S| = m] + \Pr[| N(v) \cap S| \notin I] \\
     & \le 2n^{-k\delta a^2} + 2n^{-\frac{k\delta^2}{2}}                                                                \\
     & \le 4n^{-\min(k\delta a^2, \frac{k\delta^2}{2})}.
  \end{align*}

  By union bound, we have:
  \begin{align*}
    \Pr[^\exists v \in V, |\hat{p}_L(v) - p_L(v)| \ge a]
     & \le \sum_{v \in V} \Pr[|\hat{p}_L(v) - p_L(v)| \ge a] \\
     & \le 4n^{1 - \min(k\delta a^2, \frac{k\delta^2}{2})}.
  \end{align*}
  Thus, if we set $k \ge \max(\frac{2}{a^2\delta}, \frac{4}{\delta^2})$, $|\hat{p}_L(v) - p_L(v)| \le a$ holds with high probability for all $v \in V$.
\end{proof}
In what follows, we assume that the above lemma holds.

Let $L_0 = L \cap \LOPT$, $R_0 = R \cap \ROPT$, $U_L = L \cap \ROPT$, and $U_R = R \cap \LOPT$.
We obtain the following lemma concerning the volumes of $U_L$ and $U_R$, respectively.
\begin{clm}
  \label{clm:volume_ub}
  $\vol(U_L), \vol(U_R) \le \frac{\OPT}{1/2 + a}$ and $|U_L|, |U_R| \le \frac{\OPT}{(1/2 + a)\delta n}$ hold.
\end{clm}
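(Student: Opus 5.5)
We work in the setting just established: the consistent partition $(S_L,S_R)$ agrees with $(\LOPT,\ROPT)$ on the sample, and the key lemma holds, so $|\hat{p}_L(v)-p_L(v)|\le a$ for every $v\in V$. Since $\hat{p}_L(v)+\hat{p}_R(v)=1$ and $p_L(v)+p_R(v)=1$, we also get $|\hat{p}_R(v)-p_R(v)|\le a$. The plan is a per-vertex charging argument: each vertex of $U_L$ is shown to send many edges across the optimal cut, and the total of these is then compared with $\OPT$.

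Take $v\in U_L=L\cap\ROPT$. Because $v\in L$, the classification rule gives $\hat{p}_L(v)>\tfrac12+2a$. Hence
\[
p_L(v)\ \ge\ \hat{p}_L(v)-a\ >\ \tfrac12+a .
\]
As $v\in\ROPT$, its neighbors in $\LOPT$ are endpoints of edges in $\mE(\LOPT,\ROPT)$. This gives
\[
|N(v)\cap\LOPT| = p_L(v)\,d_v > (\tfrac12+a)\,d_v .
\]
Each edge of the optimal cut has exactly one endpoint in $\ROPT$, so it is counted at most once when we sum over $v\in U_L\subseteq\ROPT$. Therefore
\[
(\tfrac12+a)\vol(U_L) \le \sum_{v\in U_L}|N(v)\cap\LOPT| \le \OPT,
\]
which is $\vol(U_L)\le \OPT/(1/2+a)$.

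The argument for $U_R=R\cap\LOPT$ is symmetric. A vertex $v\in R$ was placed there because the first test failed and $\hat{p}_R(v)>\tfrac12+2a$. This gives $p_R(v)>\tfrac12+a$, so $v\in\LOPT$ has more than $(\tfrac12+a)d_v$ neighbors in $\ROPT$. Summing over $U_R\subseteq\LOPT$ counts each cut edge at most once.

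For the cardinality bounds, everywhere-$\delta$-density gives $d_v\ge\delta n$, so $\vol(U)\ge \delta n\,|U|$. Dividing the volume bounds by $\delta n$ yields $|U_L|,|U_R|\le \OPT/((1/2+a)\delta n)$.

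The only delicate point is the bookkeeping in the double counting. We must check that restricting the sum to a single side of the optimal cut ($U_L\subseteq\ROPT$, respectively $U_R\subseteq\LOPT$) prevents any cut edge from being counted twice. We must also use the fact that $\hat{p}_L,\hat{p}_R$ are computed on the true sample, with multiplicities, while the approximation is guaranteed by the lemma, so no further probabilistic argument is needed here.
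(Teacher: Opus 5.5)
Your proof is correct and follows the paper's argument. Each $v\in U_L$ has $p_L(v)>\tfrac12+a$ by the classification threshold and the key lemma, and summing $|N(v)\cap\LOPT|$ over $U_L\subseteq\ROPT$ counts distinct optimal cut edges, which gives $(\tfrac12+a)\vol(U_L)\le\OPT$. You also fill in details the paper leaves implicit: $|\hat{p}_R(v)-p_R(v)|\le a$ via $\hat{p}_L+\hat{p}_R=p_L+p_R=1$ for the symmetric $U_R$ case, and the cardinality bounds from $d_v\ge\delta n$.
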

\begin{proof}
  If $v \in U_L$, then $p_L(v) \ge \frac 1 2 + a$ holds.
  \begin{align*}
    \OPT
     & \ge \sum_{v \in U_L} |N(v) \cap \LOPT|  \\
     & \ge \sum_{v \in U_L} (\frac 1 2 + a)d_v \\
     & = (\frac 1 2 + a) \vol(U_L).
  \end{align*}
  Thus, we have $\vol(U_L) \le \frac{\OPT}{1/2 + a}$.
  $\vol(U_R)$ can be bounded in the same way.
\end{proof}
This claim implies that $L, R$ are close to $\LOPT, \ROPT$, respectively.
In addition, we can bound the volume of $X$ in a similar manner.
We denote $X_L = X \cap \LOPT$ and $X_R = X \cap \ROPT$.
\begin{clm}
  \label{clm:volume_x_ub}
  $\vol(X) \le \frac{2\OPT}{1/2 - 3a}$ and $|X| \le \frac{2\OPT}{(1/2 - 3a)\delta n}$ hold.
\end{clm}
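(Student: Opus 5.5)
We bound $\vol(X_L)$ and $\vol(X_R)$ separately, in the same way Claim~\ref{clm:volume_ub} bounds $\vol(U_L)$. Throughout we use the key lemma, i.e.\ $\abs{\hat{p}_L(v)-p_L(v)}\le a$ and $\abs{\hat{p}_R(v)-p_R(v)}\le a$ for all $v$, for the consistent partition $(S_L,S_R)$ that agrees with $(\LOPT,\ROPT)$ on the sample. The second inequality follows from the first because $\hat{p}_R=1-\hat{p}_L$ and $p_R=1-p_L$.

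First, take $v\in X_L=X\cap\LOPT$. Since $v\in X$, we have $\hat{p}_L(v)\le \frac12+2a$ and $\hat{p}_R(v)\le\frac12+2a$. The second condition gives $\hat{p}_L(v)=1-\hat{p}_R(v)\ge \frac12-2a$, so $p_L(v)\in[\frac12-3a,\frac12+3a]$. In particular $p_R(v)=1-p_L(v)\ge \frac12-3a$. This is where the constant $\frac12-3a$ (positive since $a=1/10$) comes from. Each such $v$ lies in $\LOPT$ and has at least $(\frac12-3a)d_v$ neighbours in $\ROPT$, and each of these edges crosses the optimal cut. Summing gives
\begin{equation*}
  \OPT\ \ge\ \sum_{v\in X_L}\abs{N(v)\cap\ROPT}\ \ge\ \bigl(\tfrac12-3a\bigr)\vol(X_L).
\end{equation*}
No edge is double counted, because each crossing edge has exactly one endpoint in $\LOPT$.

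Symmetrically, for $v\in X_R$ we have $p_L(v)\ge\frac12-3a$, and summing $\abs{N(v)\cap\LOPT}$ over $X_R$ gives $\OPT\ge(\frac12-3a)\vol(X_R)$. Adding the two bounds yields $\vol(X)=\vol(X_L)+\vol(X_R)\le \frac{2\OPT}{1/2-3a}$. The factor $2$ appears because the two sums may count the same crossing edges.

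Finally, every vertex has degree at least $\delta n$, so $\abs{X}\,\delta n\le\vol(X)$, which gives $\abs{X}\le\frac{2\OPT}{(1/2-3a)\delta n}$.

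The only step needing care is turning the classification thresholds into the two-sided bound on $p_L(v)$ using the estimation error $a$. Everything else mirrors Claim~\ref{clm:volume_ub}.
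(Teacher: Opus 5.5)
Your proposal is correct and follows the paper's proof essentially line for line. The classification thresholds plus the key lemma give $p_R(v)\ge\frac12-3a$ on $X_L$ and $p_L(v)\ge\frac12-3a$ on $X_R$; you bound each sum by $\OPT$, which is the same as the paper's single inequality against $2\OPT$; and the size bound then follows from the minimum degree $\delta n$.
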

\begin{proof}
  If $v\in X_L$, then $\hat{p}_L(v)\le \frac{1}{2}+2a$, and hence $p_L(v)\le \frac{1}{2}+3a$ by the key lemma. Therefore $p_R(v)\ge \frac{1}{2}-3a$.
  Similarly, if $v\in X_R$, then $p_L(v)\ge \frac{1}{2}-3a$.
  Thus, every vertex in $X$ contributes at least $(\frac{1}{2}-3a)d_v$ to the optimal cut, counted from its own side.

  \begin{align*}
    2\OPT
     & \ge \sum_{v \in X_L} \abs{N(v) \cap \ROPT} + \sum_{v \in X_R} \abs{N(v) \cap \LOPT} \\
     & \ge \sum_{v \in X_L} (\frac{1}{2} - 3a)d_v + \sum_{v \in X_R} (\frac{1}{2} - 3a)d_v \\
     & \ge (\frac{1}{2} - 3a) \vol(X).
  \end{align*}
  Thus, we have $\vol(X) \le \frac{2\OPT}{1/2 - 3a}$ and $|X| \le \frac{2\OPT}{(1/2 - 3a)\delta n}$.
\end{proof}
Intuitively, the vertices in $X$ make a large contribution to the optimal cut.
Since the optimal cut is bounded by $\alpha n^2$, their total volume and size cannot be too large.

We now show that, in the first case of Algorithm~\ref{alg:adjust}, where $c(L) + c(X) \ge C_M$, we can obtain a cut of size at most $(1 + \varepsilon)\OPT$.
\begin{lem}
  \label{lem:adjust}
  If $c(L) + c(X) \ge C_M$,
  we can find sets $S \subseteq L$ and $T \subseteq X$ in polynomial time such that
  \begin{equation*}
    \abs{\mE(S, L)} - \abs{\mE(S, R)} + \abs{\mE(T, R)} + \abs{\mE(X \setminus T, L)} \le \abs{\mE(U_L, L)} - \abs{\mE(U_L, R)} + \abs{\mE(X_L, R)} + \abs{\mE(X_R, L)}
  \end{equation*}
  holds and the adjusted cut $((L \setminus S) \cup T, (R \cup S) \cup (X \setminus T))$ is a feasible solution for the original \textsc{ConstrainedMinCut} instance.
\end{lem}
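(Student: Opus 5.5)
Our plan is to exhibit the pair $(S,T)=(U_L,X_L)$ as a feasible solution of the optimization problem in the first branch of Algorithm~\ref{alg:adjust}, and then to show that this problem can be solved exactly in polynomial time. The inequality of the lemma then follows from optimality. We first rewrite the objective. For $u\in L$ we have $V_u = \abs{N(u)\cap L}-\abs{N(u)\cap R}$. Since $\abs{\mE(S,L)}$ counts each pair $(u,w)$ with $u\in S$ and $w\in L$, we get $\sum_{u\in S}V_u=\abs{\mE(S,L)}-\abs{\mE(S,R)}$. For $u\in X$, $\sum_{u\in T}V'_u=\abs{\mE(T,R)}-\abs{\mE(T,L)}$. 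The terms $\abs{\mE(X\setminus T,L)}=\abs{\mE(X,L)}-\abs{\mE(T,L)}$ differ only by the constant $\abs{\mE(X,L)}$, which does not depend on the choice. Hence the left-hand side of the lemma equals the algorithm's objective plus a constant, and the right-hand side equals the same expression evaluated at $(S,T)=(U_L,X_L)$ (using $X\setminus X_L=X_R$). So it suffices to show that $(U_L,X_L)$ is feasible and that the optimum is attained.

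Feasibility of $(U_L,X_L)$ has three parts. First, $U_L\subseteq L$ and $X_L\subseteq X$ by definition. Second, Claim~\ref{clm:volume_ub} gives $\vol(U_L)\le \OPT/(1/2+a)\le \alpha n^2/(1/2+a)$. Third, for the cost constraint, $(L\setminus U_L)\cup X_L = L_0\cup X_L\subseteq \LOPT$, so $c(L)-c(U_L)+c(X_L)\le c(\LOPT)\le C_M$. Note that this uses $L_0=L\cap\LOPT$ and $U_L=L\setminus \LOPT$.

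The returned cut must also satisfy the lower bound $C_m$, and this is the main obstacle, since the optimization problem only enforces the upper bound. We plan to argue that any feasible $(S,T)$ satisfies
\[
c(L)-c(S)+c(T)\ \ge\ c(L)+c(X)-c(S)-c(X)\ \ge\ C_M - c(S)-c(X).
\]
We then bound $c(S)$ and $c(X)$. We have $c(S)\le \tfrac{n}{\delta n}\vol(S)$ because $c_u\le n\le d_u/\delta$, so $c(S)\le \alpha n^2/((1/2+a)\delta)$. By Claim~\ref{clm:volume_x_ub}, $c(X)\le n\abs{X}\le 2\alpha n^2/((1/2-3a)\delta)$. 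With $a=1/10$ and $\alpha\le 6\rho'\zeta c_0\delta/35$, the sum is at most $(5/3+10)\alpha n^2/\delta = \tfrac{35}{3}\alpha n^2/\delta\le 2\rho'\zeta c_0 n^2\le 2\zeta\rho = C_M-C_m$, where we use $c(V)\ge c_0 n^2$. This gives feasibility of the returned cut; we expect only the constants to need checking.

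Finally, the optimization problem is a two-constraint knapsack. Both $d_u$ and $c_u$ are integers bounded by $n$, so we solve it by dynamic programming over the items in $L\cup X$. The states are indexed by the total degree of $S$ (at most $n^2$) and the value $c(T)-c(S)$ (range $O(n^2)$), and each state stores the minimum objective. This runs in $n^{O(1)}$ time, and the DP returns an optimal $(S,T)$ whose objective is at most that of $(U_L,X_L)$, completing the proof.
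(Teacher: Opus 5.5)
Your proposal is correct and follows essentially the same route as the paper: you show $(U_L,X_L)$ is feasible for the auxiliary problem, rewrite the objective so that it differs from the lemma's left-hand side by the constant $\abs{\mE(X,L)}$, derive the lower cost bound from $\alpha \le 6\rho'\zeta c_0\delta/35$ with the same constants, and solve the problem by knapsack-style dynamic programming. The only difference is cosmetic. You use a single joint DP indexed by the degree of $S$ and the net cost $c(T)-c(S)$. The paper instead computes two separate knapsack tables, one over cost and degree for $S \subseteq L$ and one over cost for $T \subseteq X$, and then combines them.
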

\begin{proof}
  Consider the following auxiliary optimization problem:
  \begin{equation*}
    \begin{gathered}
      \makebox[0pt][l]{\text{minimize}}\makebox[\linewidth][c]{\ensuremath{\displaystyle \sum_{u \in S} V_u + \sum_{u \in T} V'_u}}\\
      \makebox[0pt][l]{\text{s.t.}}\makebox[\linewidth][c]{\ensuremath{S \subseteq L, T \subseteq X}}\\
      \makebox[\linewidth][c]{\ensuremath{\sum_{u \in S} d_u \le \frac{\alpha n^2}{1/2 + a}}}\\
      \makebox[\linewidth][c]{\ensuremath{c(L) - c(S) + c(T) \le C_M}}
    \end{gathered}
  \end{equation*}
  where $V_u = \abs{N(u) \cap L} - \abs{N(u) \cap R}$ for $u \in L$ and $V'_u = \abs{N(u) \cap R} - \abs{N(u) \cap L}$ for $u \in X$.

  We claim that $(S, T) = (U_L, X_L)$ is feasible for this problem.
  By Claim~\ref{clm:volume_ub}, we have
  \begin{equation*}
    \sum_{u\in U_L} d_u = \vol(U_L)
    \le \frac{\OPT}{1/2+a}
    \le \frac{\alpha n^2}{1/2+a}.
  \end{equation*}
  Moreover,
  \begin{equation*}
    c(L)-c(U_L)+c(X_L)
    = c(L_0)+c(X_L)
    = c(\LOPT)-c(U_R)
    \le c(\LOPT)
    \le C_M.
  \end{equation*}
  Thus $(U_L,X_L)$ is feasible for the auxiliary problem.

  We emphasize that the lower cost constraint is intentionally not imposed in this auxiliary problem.
  Indeed, the natural solution $(S, T) = (U_L, X_L)$ satisfies the upper constraint, but it may violate the lower constraint, since $c(L)-c(U_L)+c(X_L) = c(L_0)+c(X_L) = c(\LOPT)-c(U_R)$.
  Thus imposing the lower bound would prevent us from comparing the optimum with the feasible pair $(U_L,X_L)$.
  Instead, we show that every feasible solution of the auxiliary problem automatically satisfies the lower bound.

  Let $S^*, T^*$ be the optimal solution for this problem.
  Since
  \begin{align*}
    \abs{\mE(T, R)} + \abs{\mE(X \setminus T,L)}
     & = \sum_{u \in T} \abs{N(u) \cap R} + \sum_{u \in X \setminus T} \abs{N(u) \cap L} \\
     & = \sum_{u \in X} \abs{N(u) \cap L} + \sum_{u \in T}V'_u
  \end{align*}
  holds, we have
  \begin{align*}
    \abs{\mE(S^*, L)} - \abs{\mE(S^*, R)} & + \abs{\mE(T^*, R)} + \abs{\mE(X \setminus T^*,L)}                                             \\
                                          & \le \abs{\mE(U_L, L)} - \abs{\mE(U_L, R)} + \abs{\mE(X_L, R)} + \abs{\mE(X \setminus X_L, L)}.
  \end{align*}
  Here, we used the fact that $X = X_L \sqcup X_R$.

  Let $(S, T)$ be any feasible solution of the auxiliary problem.
  Since
  \begin{equation*}
    \sum_{u \in S} d_u \le \frac{\alpha n^2}{1/2 + a}
  \end{equation*}
  and every vertex has degree at least $\delta n$, we have
  \begin{equation*}
    \abs{S} \le \frac{\alpha n}{\delta(1/2 + a)},
    \qquad
    c(S) \le \frac{\alpha n^2}{\delta(1/2 + a)}.
  \end{equation*}
  Moreover, by Claim~\ref{clm:volume_x_ub},
  \begin{equation*}
    \abs{X} \le \frac{2\OPT}{(1/2 - 3a)\delta n}
    \le \frac{2\alpha n}{(1/2 - 3a)\delta},
  \end{equation*}
  and hence
  \begin{equation*}
    c(X) \le \frac{2\alpha n^2}{(1/2 - 3a)\delta}.
  \end{equation*}
  Using the assumption $c(L) + c(X) \ge C_M$, we obtain
  \begin{align*}
    c(L) - c(S) + c(T)
     & \ge c(L) - c(S)                                                                      \\
     & \ge c(L) + c(X) - c(S) - c(X)                                                        \\
     & \ge C_M - \frac{\alpha n^2}{\delta(1/2 + a)} - \frac{2\alpha n^2}{\delta(1/2 - 3a)}.
  \end{align*}
  By the choice $\alpha \le \tfrac{6\rho' \zeta c_0 \delta}{35}$, we have
  \begin{equation*}
    \frac{\alpha n^2}{\delta(1/2 + a)} + \frac{2\alpha n^2}{\delta(1/2 - 3a)}
    \le 2\rho' \zeta c_0 n^2
    \le C_M - C_m,
  \end{equation*}
  where the last inequality follows from $c(V) \ge c_0 n^2$.
  Therefore, $c(L) - c(S) + c(T) \ge C_m$.
  The upper bound $c(L) - c(S) + c(T) \le C_M$ is imposed explicitly in the auxiliary problem.
  Hence every feasible solution of the auxiliary problem yields a feasible cut $((L \setminus S) \cup T, (R \cup S) \cup (X \setminus T))$ for the original \textsc{ConstrainedMinCut} instance.

  Finally, we show that the above problem can be solved in polynomial time by dynamic programming.
  First, we solve the following two knapsack problems separately:
  \begin{equation*}
    \begin{gathered}
      \makebox[0pt][l]{\text{minimize}}\makebox[\linewidth][c]{\ensuremath{\displaystyle \sum_{u \in S} V_u}}\\
      \makebox[0pt][l]{\text{s.t.}}\makebox[\linewidth][c]{\ensuremath{S \subseteq L}}\\
      \makebox[\linewidth][c]{\ensuremath{c(S) = C}}\\
      \makebox[\linewidth][c]{\ensuremath{\sum_{u \in S} d_u \le \tfrac{\alpha n^2}{1/2 + a}}}
    \end{gathered}
  \end{equation*}
  \begin{equation*}
    \begin{gathered}
      \makebox[0pt][l]{\text{minimize}}\makebox[\linewidth][c]{\ensuremath{\displaystyle \sum_{u \in T} V'_u}}\\
      \makebox[0pt][l]{\text{s.t.}}\makebox[\linewidth][c]{\ensuremath{T \subseteq X}}\\
      \makebox[\linewidth][c]{\ensuremath{c(T) = C'}}
    \end{gathered}
  \end{equation*}
  Now, we compute these optimal solutions $\min\set{\sum_{u \in S} V_u \mid S \subseteq L, c(S) = C, \sum_{u \in S} d_u \le \tfrac{\alpha n^2}{1/2 + a}}$ and $\min\set{\sum_{u \in T} V'_u \mid T \subseteq X, c(T) = C'}$ for all $C = 0, \ldots, c(L)$ and $C' = 0, \ldots, c(X)$.
  The first minimization is solved by dynamic programming over both the cost and the degree.
  Next, we combine the two solutions to obtain the optimal solution to the auxiliary problem.
  \begin{equation*}
    \min_{\substack{C = 0, \ldots, c(L)\\ C' = 0, \ldots, c(X) \\  c(L) - C + C' \le C_M}} \left\{\min\set{\sum_{u \in S} V_u \mid S \subseteq L, c(S) = C, \sum_{u \in S} d_u \le \tfrac{\alpha n^2}{1/2 + a}} + \min\set{\sum_{u \in T} V'_u \mid T \subseteq X, c(T) = C'}\right\}
  \end{equation*}
  Since $c(L), c(X), \vol(L) \le n^2$, the dynamic programming procedures and the final minimization run in polynomial time.

  The dynamic program returns a pair attaining the above minimum, which is precisely an optimal solution to the auxiliary problem.
  Therefore, by comparing it with the feasible solution $(U_L,X_L)$, we obtain the desired inequality.
  Moreover, by the feasibility argument above, the adjusted cut $((L\setminus S^*)\cup T^*,(R\cup S^*)\cup (X\setminus T^*))$ satisfies the original cost constraint.
  This completes the proof.
\end{proof}

\begin{figure}[h]
  \begin{minipage}[b]{0.32\linewidth}
    \centering
    \includegraphics[keepaspectratio, scale=0.22]{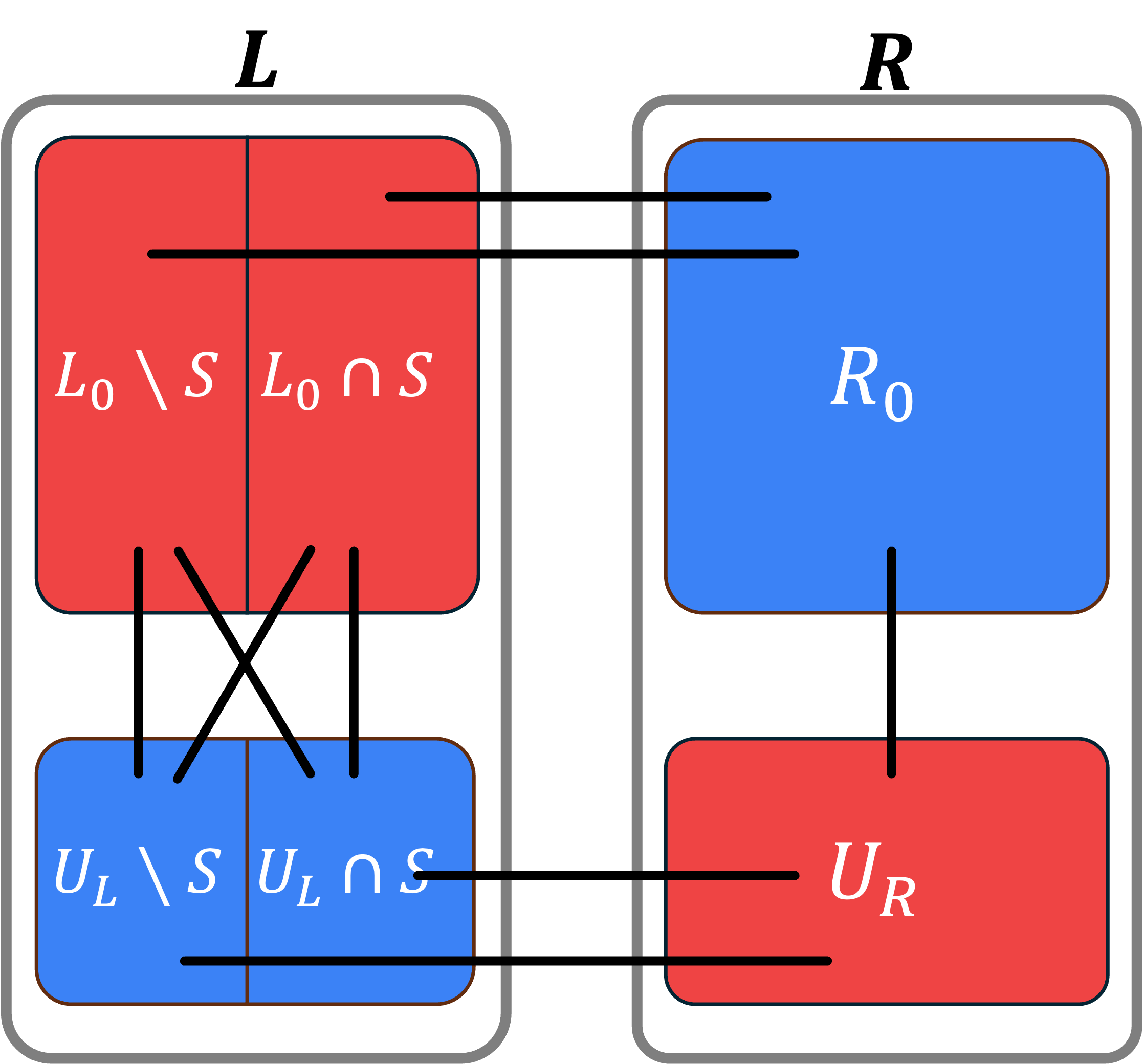}
    \subcaption{Optimal cut}
    \label{fig-opt}
  \end{minipage}
  \begin{minipage}[b]{0.32\linewidth}
    \centering
    \includegraphics[keepaspectratio, scale=0.22]{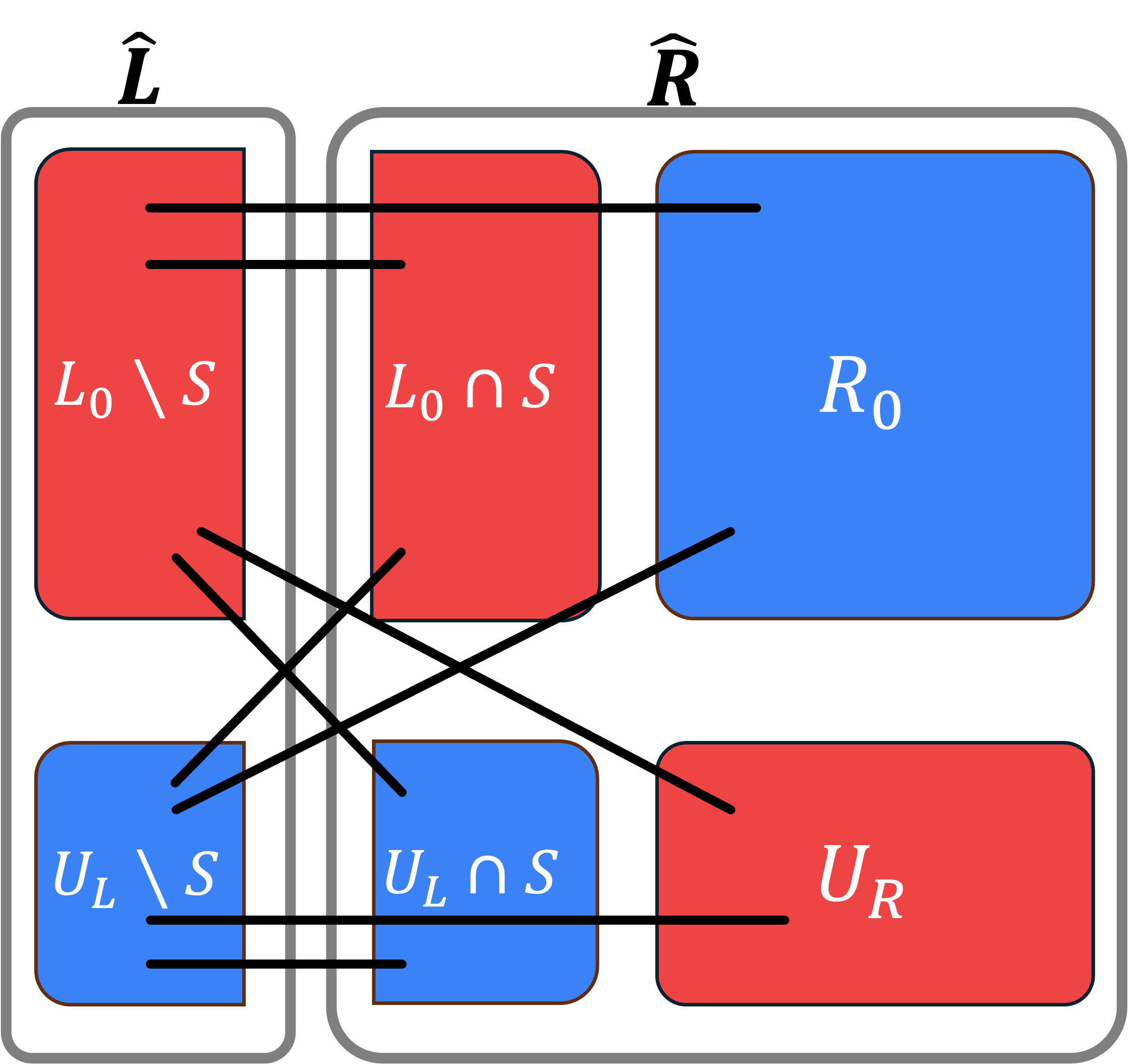}
    \subcaption{Adjusted cut}
    \label{fig-alg}
  \end{minipage}
  \begin{minipage}[b]{0.32\linewidth}
    \centering
    \includegraphics[keepaspectratio, scale=0.22]{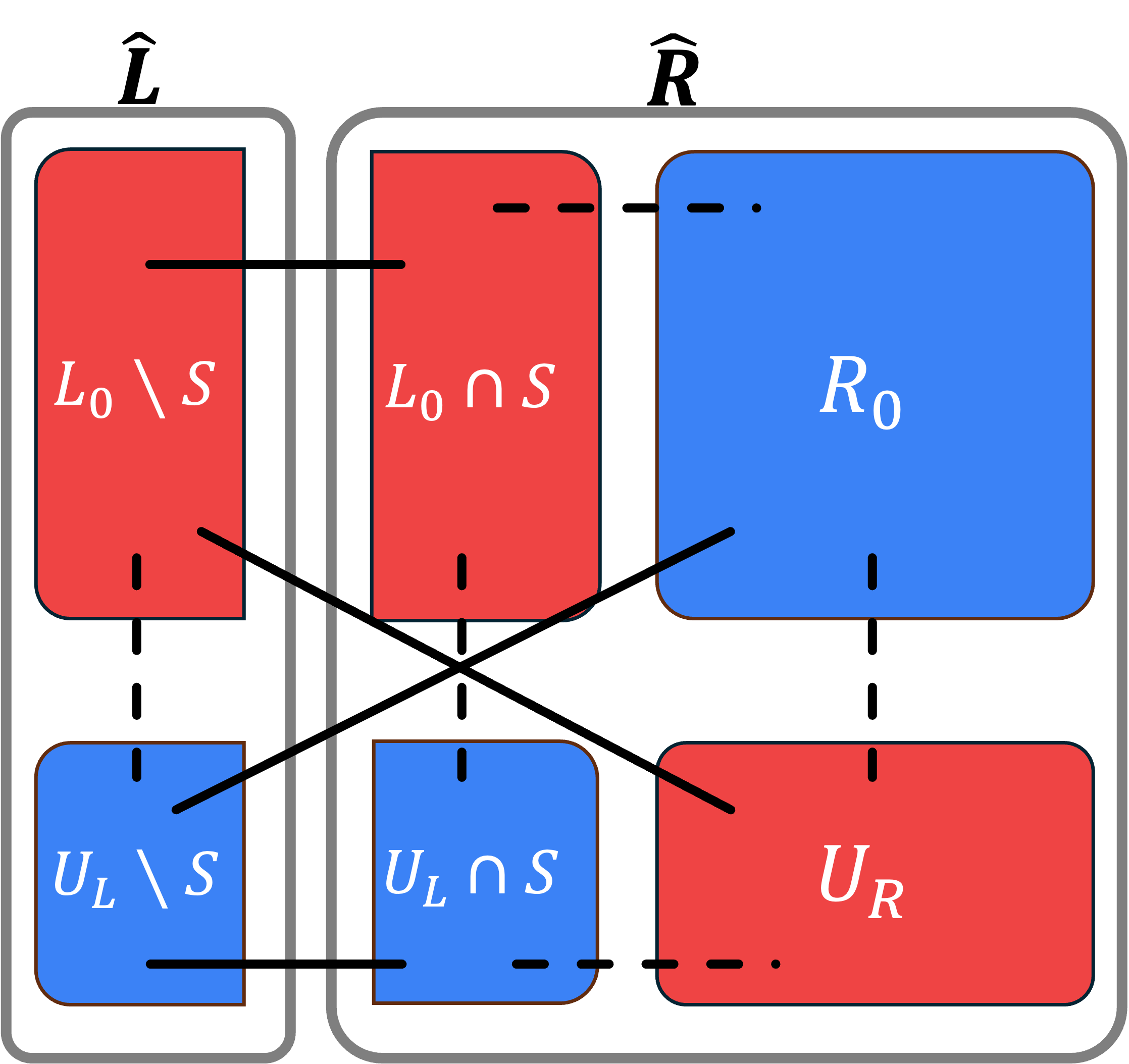}
    \subcaption{Difference}
    \label{fig-diff}
  \end{minipage}
  \caption{Evaluation of cut edges}
  \label{fig-eval-edges}
\end{figure}

Lemma~\ref{lem:adjust} shows that when $c(L) + c(X) \ge C_M$, we can find a set $S$ and move it from $L$ to $R$.
Figure~\ref{fig-eval-edges} shows the cut edges of the optimal cut, the adjusted cut by moving $S$ from $L$ to $R$, and the difference between them.
The red areas are subsets of $\LOPT$ and the blue areas are subsets of $\ROPT$.
In Figure~\ref{fig-diff}, the solid line indicates the actual cut edges and the dashed line indicates the optimal cut edges.
Let $\OPT[L \cup R]$ denote the number of optimal cut edges with both endpoints in $L \cup R$, i.e., $\OPT[L \cup R] := \abs{\mE(\LOPT \cap (L \cup R), \ROPT \cap (L \cup R))}$.
Thus, after adjusting $L$ and $R$ by Algorithm~\ref{alg:adjust}, the cut size of $(L \setminus S, R \cup S)$ is evaluated as follows:
\begin{align*}
  \abs{\mE(L \setminus S, R \cup S)}
   & = \OPT[L \cup R]                                                                                                                                \\
   & \quad + \abs{\mE(S, L\setminus S)} - (\abs{\mE(L_0 \cap S, R_0)} + \abs{\mE(U_L \cap S, U_R)}) - \abs{\mE(U_L, L_0)}                            \\
   & \quad + \abs{\mE(L_0 \setminus S, U_R)} - \abs{\mE(R_0, U_R)}                                                                                   \\
   & \quad + \abs{\mE(U_L \setminus S, R_0)}                                                                                                         \\
   & = \OPT[L \cup R]                                                                                                                                \\
   & \quad + \abs{\mE(S, L)} - (\abs{\mE(S, R)} - \abs{\mE(L_0 \cap S, U_R)} - \abs{\mE(U_L \cap S, R_0)}) - (\abs{\mE(U_L, L)} - \abs{\mE(U_L, R)}) \\
   & \quad -\abs{\mE(S, S)} + \abs{\mE(U_L, U_L)}                                                                                                    \\
   & \quad + \abs{\mE(L_0 \setminus S, U_R)} - \abs{\mE(R_0, U_R)}                                                                                   \\
   & \quad + \abs{\mE(U_L \setminus S, R_0)} - \abs{\mE(U_L, R)}                                                                                     \\
   & = \OPT[L \cup R]                                                                                                                                \\
   & \quad + \abs{\mE(S, L)} - (\abs{\mE(S, R)}) - (\abs{\mE(U_L, L)} - \abs{\mE(U_L, R)})                                                           \\
   & \quad -\abs{\mE(S, S)} + \abs{\mE(U_L, U_L)}                                                                                                    \\
   & \quad + \abs{\mE(L_0, U_R)} - \abs{\mE(R_0, U_R)}                                                                                               \\
   & \quad + \abs{\mE(U_L, R_0)} - \abs{\mE(U_L, R)}.
\end{align*}
We use the facts that $\abs{\mE(U_L, L_0)} = \abs{\mE(U_L, L)} - \abs{\mE(U_L, U_L)}$, $\abs{\mE(L_0 \setminus S, U_R)} + \abs{\mE(L_0 \cap S, U_R)} = \abs{\mE(L_0, U_R)}$, and $\abs{\mE(U_L \setminus S, R_0)} + \abs{\mE(U_L \cap S, R_0)} = \abs{\mE(U_L, R_0)}$.
Since $R_0 \subseteq R$, $\abs{\mE(U_L, R_0)} - \abs{\mE(U_L, R)} \le 0$ holds.

For $v \in U_R$, $p_L(v) \le (\frac{1}{2} - a)$ holds.
\begin{align*}
  \abs{N(v) \cap \LOPT}
   & = \abs{\mE(L_0, \set{v})} + \abs{\mE(\LOPT \setminus L_0, \set{v})}                         \\
   & \le (\frac {1} {2} - a) d_v \qquad (\because \text{definition of } p_L(v))                  \\
   & = (\frac{1}{2} - a)(\abs{\mE(L_0, \set{v})} + \abs{\mE(R_0, \set{v})}                       \\
   & \quad + \abs{\mE(\LOPT \setminus L_0, \set{v})} + \abs{\mE(\ROPT \setminus R_0, \set{v})}).
\end{align*}

By rearranging the above inequality, we have:
\begin{align*}
  \abs{\mE(L_0, \{v\})} - \abs{\mE(R_0, \{v\})}
   & \le -\abs{\mE(\LOPT \setminus L_0, \{v\})} + \abs{\mE(\ROPT \setminus R_0, \{v\})} - 2ad_v                         \\
   & \le |\ROPT \setminus R_0|. \qquad (\because \abs{\mE(\ROPT \setminus R_0, \set{v})} \le \abs{\ROPT \setminus R_0})
\end{align*}
\begin{align*}
  \abs{\mE(L_0, U_R)} - \abs{\mE(R_0, U_R)}
   & = \sum_{v \in U_R} (\abs{\mE(L_0, \{v\})} - \abs{\mE(R_0, \{v\})})                                               \\
   & \le \abs{\ROPT \setminus R_0} \abs{U_R} \qquad (\because \text{we use the above inequality for each } v \in U_R) \\
   & \le \abs{\ROPT \setminus R_0} \abs{\LOPT \setminus L_0} \qquad (\because U_R \subseteq \LOPT \setminus L_0)      \\
   & \le \frac {\OPT^2}{(1/2 - 3a)^2 \delta^2 n^2}.
\end{align*}
For $v \in \ROPT \setminus R_0$, $p_R(v) \le \tfrac{1}{2} + 3a$ (i.e., $p_L(v) \ge \tfrac 1 2 - 3a$) holds.
Thus, the last inequality is derived in the same way as in Claims~\ref{clm:volume_ub} and~\ref{clm:volume_x_ub}.
Thus, we have
\begin{align*}
  \abs{\mE(L \setminus S, R \cup S)}
   & \le \OPT[L \cup R] + |U_L|^2 + \frac {\OPT^2}{(1/2 - 3a)^2 \delta^2 n^2}       \\
   & + \abs{\mE(S, L)} - \abs{\mE(S, R)} - (\abs{\mE(U_L, L)} - \abs{\mE(U_L, R)})  \\
   & \le \OPT[L \cup R] + \frac {2\OPT^2}{(1/2 - 3a)^2 \delta^2 n^2}                \\
   & + \abs{\mE(S, L)} - \abs{\mE(S, R)} - (\abs{\mE(U_L, L)} - \abs{\mE(U_L, R)}).
\end{align*}

Let $\OPT_X$ be the number of cut edges in the optimal cut with at least one endpoint in $X$.
Next, we evaluate the contributions of $T, X\setminus T$ to the cut size.

We define $\mu := \tfrac{4\OPT^2}{(1/2 - 3a)^2\delta^2 n^2}$.
Note that for any two sets $A, B$ such that $\abs{A}, \abs{B} \le \tfrac{2\OPT}{(1/2 - 3a)\delta n}$, $\abs{\mE(A, B)} \le \abs{A}\abs{B} \le \mu$ holds.

Recall that $T \subseteq X$, $\abs{X} \le \tfrac{2\OPT}{(1/2 - 3a) \delta n}$, $\abs{S} \le \tfrac{\alpha n^2}{(1/2 + a)\delta n}$, and $\abs{U_L}, \abs{U_R} \le \tfrac{\OPT}{(1/2 + a)\delta n}$ hold with high probability.
Thus, we have the following evaluation:
\begin{align*}
   & \abs{\mE(T, R \cup S)} + \abs{\mE(X \setminus T, L \setminus S)} + \abs{\mE(T, X \setminus T)}                                                                                         \\
   & = \abs{\mE(T, R)} + \abs{\mE(T, S)} + \abs{\mE(X \setminus T, L)} - \abs{\mE(X \setminus T, S)} + \abs{\mE(T, X \setminus T)}                                                          \\
   & \le \abs{\mE(T, R)} + \abs{\mE(X \setminus T, L)} + \mu + \tfrac{2\alpha \OPT}{(1/2 -3a)^2\delta^2} - \abs{\mE(X \setminus T, S)}                                                      \\
   & \le (\abs{\mE(T, R)} + \abs{\mE(X \setminus T, L)} - \abs{\mE(X_L, R)} - \abs{\mE(X_R, L)})+ (\abs{\mE(X_L, R)} + \abs{\mE(X_R, L)}) + \mu + \tfrac{2\alpha \OPT}{(1/2 -3a)^2\delta^2} \\
   & \le (\abs{\mE(X_L, R_0)} + \abs{\mE(X_L, U_R)} + \abs{\mE(X_R, L_0)} + \abs{\mE(X_R, U_L)}) + \mu + \tfrac{2\alpha \OPT}{(1/2 -3a)^2\delta^2}                                          \\
   & + (\abs{\mE(T, R)} + \abs{\mE(X \setminus T, L)} - \abs{\mE(X_L, R)} - \abs{\mE(X_R, L)})                                                                                              \\
   & \le \OPT_X + \abs{\mE(X_L, U_R)} + \abs{\mE(X_R, U_L)} - (\abs{\mE(X_L, U_L)} + \abs{\mE(X_R, U_R)}) + \mu + \tfrac{2\alpha \OPT}{(1/2 -3a)^2\delta^2}                                 \\
   & + (\abs{\mE(T, R)} + \abs{\mE(X \setminus T, L)} - \abs{\mE(X_L, R)} - \abs{\mE(X_R, L)})                                                                                              \\
   & \le \OPT_X + 3\mu + \tfrac{2\alpha \OPT}{(1/2 -3a)^2\delta^2} + (\abs{\mE(T, R)} + \abs{\mE(X \setminus T, L)} - \abs{\mE(X_L, R)} - \abs{\mE(X_R, L)}).
\end{align*}

Therefore, the total cut size is evaluated as follows:
\begin{align*}
  \mathrm{CUT}
   & = \abs{\mE(L \setminus S, R \cup S)} + \abs{\mE(T, R \cup S)} + \abs{\mE(X \setminus T, L \setminus S)} + \abs{\mE(T, X \setminus T)}                 \\
   & \le \OPT[L \cup R] + \frac {2\OPT^2}{(1/2 - 3a)^2 \delta^2 n^2} + \abs{\mE(S, L)} - \abs{\mE(S, R)} - (\abs{\mE(U_L, L)} - \abs{\mE(U_L, R)})         \\
   & + \OPT_X + 3\mu + \tfrac{2\alpha \OPT}{(1/2 -3a)^2\delta^2} + (\abs{\mE(T, R)} + \abs{\mE(X \setminus T, L)} - \abs{\mE(X_L, R)} - \abs{\mE(X_R, L)}) \\
   & \le \OPT + 5\mu + \tfrac{2\alpha \OPT}{(1/2 -3a)^2\delta^2} \qquad (\because \text{Lemma}~\ref{lem:adjust})                                           \\
   & \le \OPT\pare*{1 + \frac{20\OPT + 2\alpha n^2}{(1/2 - 3a)^2 \delta^2 n^2}}                                                                            \\
   & < \OPT\pare*{1 + 22\frac{\alpha}{(1/2 - 3a)^2 \delta^2}}                                                                                              \\
   & \le \OPT(1 + \varepsilon).
\end{align*}
where $\alpha = \min\set{\frac{\varepsilon}{22}(1/2 - 3a)^2 \delta^2, 6 \rho' \zeta c_0\delta / 35}$. Moreover, under the balanced parameter assumption $\rho' > \tfrac{c_0 \varepsilon \delta^2}{2(1+\zeta)}$, we have $\alpha > \min\set{\frac{\varepsilon}{22}(1/2 - 3a)^2 \delta^2, \tfrac{3 c_0^2}{35}\varepsilon \tfrac{\zeta}{1 + \zeta}\delta^3} = \Theta(\varepsilon)$.
This completes the proof of the first case of Algorithm~\ref{alg:adjust}.
For the second case, where $c(L) < C_m$, we can proceed similarly.
For the third case of Algorithm~\ref{alg:adjust}, we can prove a lemma similar to Lemma~\ref{lem:adjust}.
\begin{lem}
  \label{lem:adjust3}
  If $C_m \le c(L) \le C_M - c(X)$,
  we can find a set $T \subseteq X$ in polynomial time such that
  \begin{equation*}
    \abs{\mE(T, R)} + \abs{\mE(X \setminus T,L)} \le \abs{\mE(X_L, R)} + \abs{\mE(X_R, L)}
  \end{equation*}
  holds and the adjusted cut $(L \cup T, R \cup (X \setminus T))$ is a feasible solution for the original \textsc{ConstrainedMinCut} instance.
\end{lem}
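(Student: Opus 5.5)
The plan is to mirror the proof of Lemma~\ref{lem:adjust}, but the third case is much simpler: the cost constraint becomes vacuous for every choice of $T \subseteq X$. Consequently the optimization over $T$ is unconstrained, and we can compare its optimum directly with the particular choice $T = X_L$.

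First, feasibility. For any $T \subseteq X$ we have $c(L) \le c(L \cup T) = c(L) + c(T) \le c(L) + c(X)$, because $L$ and $X$ are disjoint and costs are nonnegative. The hypothesis $C_m \le c(L) \le C_M - c(X)$ then gives $C_m \le c(L \cup T) \le C_M$. So every adjusted cut $(L \cup T, R \cup (X \setminus T))$ satisfies the cost constraint of the original instance. This also matches the ``else'' branch of Algorithm~\ref{alg:adjust}, which is entered exactly when $c(L) \ge C_m$ and $c(L) + c(X) < C_M$.

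Second, rewrite the objective as in the proof of Lemma~\ref{lem:adjust}:
\begin{equation*}
  \abs{\mE(T, R)} + \abs{\mE(X \setminus T, L)} = \sum_{u \in X} \abs{N(u) \cap L} + \sum_{u \in T} V_u,
\end{equation*}
where $V_u = \abs{N(u) \cap R} - \abs{N(u) \cap L}$ is the quantity defined in the third case of Algorithm~\ref{alg:adjust}. The first sum does not depend on $T$, and $T$ is constrained only by $T \subseteq X$. Hence the optimization is solved exactly, in linear time after computing all $V_u$, by taking $T^* = \set{u \in X \mid V_u < 0}$. No knapsack DP is needed here; in any case the DP of Lemma~\ref{lem:adjust}, with the cost constraint dropped, would also work.

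Finally, $X_L \subseteq X$ is a feasible choice, and by the identity above its objective value equals $\abs{\mE(X_L, R)} + \abs{\mE(X \setminus X_L, L)} = \abs{\mE(X_L, R)} + \abs{\mE(X_R, L)}$, using $X = X_L \sqcup X_R$. Optimality of $T^*$ therefore yields the claimed inequality.

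There is no real obstacle in this case. The only point to check is that the lower constraint $C_m$ is preserved, and this holds because adding vertices of $X$ never decreases $c(L)$. Unlike in Lemma~\ref{lem:adjust}, no bound on $\vol(X)$ from Claim~\ref{clm:volume_x_ub} is needed for feasibility.
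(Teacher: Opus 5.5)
Your proposal is correct and takes essentially the same approach as the paper. Feasibility holds for every $T \subseteq X$ because $C_m \le c(L) \le c(L)+c(T) \le c(L)+c(X) \le C_M$, and the objective, which you rewrite as $\sum_{u \in X}\abs{N(u)\cap L} + \sum_{u\in T} V_u$, is minimized by taking exactly the $u \in X$ with $V_u < 0$; its value is then at most that of the feasible choice $T = X_L$.
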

\begin{proof}
  For any $T \subseteq X$, the adjusted cut $(L \cup T, R \cup (X \setminus T))$ satisfies the cost constraint because the cost of $L$ is already between $C_m$ and $C_M - c(X)$.
  Since $(X_L, X_R)$ is a feasible solution for the optimization problem in Algorithm~\ref{alg:adjust}, the optimal solution $T^*$ satisfies
  \begin{equation*}
    \abs{\mE(T^*, R)} + \abs{\mE(X \setminus T^*,L)} \le \abs{\mE(X_L, R)} + \abs{\mE(X_R, L)}.
  \end{equation*}
  This can be solved in polynomial time by picking exactly those vertices $u \in X$ with $V_u = \abs{N(u) \cap R} - \abs{N(u) \cap L} < 0$.
\end{proof}
Using this lemma, we can evaluate the cut size as in the first case.
This completes the proof of Theorem~\ref{thm:smallOPT}.\qed

\medskip

Note that the assumption that the input graph is everywhere-$\delta$-dense is used in two places.
First, we use it to ensure that the sampling procedure works well.
If some vertices have very small degrees, the estimators $\hat{p}_L, \hat{p}_R$ may not be accurate.
Second, we use it to bound the sizes of $U_L, U_R$, and $X$.
Under this assumption, we can bound the number of vertices whose contribution to the optimal cut exceeds (roughly) half of their degree.

\subsection{Unified Algorithm for ConstrainedMinCut}
\label{subsec:unified}

The threshold $\OPT \ge \alpha n^2$ that separates the large-optimal-value and small-optimal-value cases is used only in the analysis.
The unified algorithm runs Algorithm~\ref{alg:largeOPT} and Algorithm~\ref{alg:small_case} in turn and returns whichever feasible cut is smaller.

\begin{algorithm}[H]
  \caption{ConstrainedMinCut$(G = (V, E), c, \rho, \zeta, \varepsilon)$}
  \label{alg:unified}
  \begin{algorithmic}[1]
    \State $(L_1, R_1)\gets \mathrm{ConstrainedMinCutLarge}(G, c, \rho, \zeta, \varepsilon)$ \Comment{Algorithm~\ref{alg:largeOPT}}
    \State $(L_2, R_2) \gets \mathrm{ConstrainedMinCutSmall}(G, c, \rho, \zeta, \varepsilon)$ \Comment{Algorithm~\ref{alg:small_case}}
    \State Let $\mathcal{F} \gets \set{L \in \set{L_1, L_2} \mid c(L) \in [\rho(1-\zeta), \rho(1+\zeta)]}$
    \State \Return $\arg\min_{L \in \mathcal{F}} \abs{\mE(L, \overline{L})}$
  \end{algorithmic}
\end{algorithm}

The randomized version of our main result, restated:
\eprasthm*

\begin{proof}[Proof of Theorem~\ref{thm:EPRAS}]
  Exactly one of $\OPT > \alpha n^2$ or $\OPT \le \alpha n^2$ holds.
  In the former case, by Theorem~\ref{thm:largeOPT}, $(L_1, R_1)$ is feasible and satisfies $\abs{\mE(L_1, \overline{L_1})} \le (1+\varepsilon)\OPT$ with probability at least $9/10$.
  In the latter case, by Theorem~\ref{thm:smallOPT}, $(L_2, R_2)$ is feasible and satisfies $\abs{\mE(L_2, \overline{L_2})} \le (1+\varepsilon)\OPT$ with high probability.
  In either case, $\mathcal{F}$ is nonempty with high probability, and the cut returned by Algorithm~\ref{alg:unified} is a $(1+\varepsilon)$-approximate feasible solution.

  Recall that $\alpha = \min\set{\frac{\varepsilon}{22}(1/2 - 3a)^2 \delta^2, 6 \rho' \zeta c_0\delta / 35}$.
  The running time is the sum of the running times of Algorithm~\ref{alg:largeOPT} and Algorithm~\ref{alg:small_case}, which is $n^{O(1)} + 2^{\tilde{O}(1/(\alpha\varepsilon)^2)} = n^{O(1)} + 2^{\tilde{O}(1/(\zeta^2\varepsilon^4))}$.
\end{proof}

\subsection{Derandomization (Proof of Theorem~\ref{thm:EPTAS})}
We can derandomize the above algorithm.
We use the randomized algorithm in two places:
\begin{enumerate}
  \item Obtain a partition of $V$ via the weak regularity lemma.
  \item Sample $S$ uniformly at random from $V$.
\end{enumerate}

Fox et al. showed that there is a deterministic algorithm to obtain a partition of $V$ with properties similar to those of the weak regularity lemma \cite{fox2019fast}.
\begin{thm}[Theorem 1.3 in \cite{fox2019fast}]
  \label{thm:deterministic_wrl}
  Suppose $A$ is an $n \times n$ matrix and suppose $\varepsilon > 0$.
  There exists a deterministic algorithm that runs in $\varepsilon^{-O(1)}n^2$ time, and finds a cut decomposition of $A$ of width $O(\varepsilon^{-16})$, coefficient length $O(1)$, and error at most $\varepsilon n^2$.
\end{thm}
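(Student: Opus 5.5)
This is Theorem~1.3 of \cite{fox2019fast}, which we only use as a black box. If one wanted to reprove it, I would follow the classical Frieze--Kannan energy-increment scheme, with the random cut-finding oracle replaced by a deterministic \emph{weak witness} procedure.

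\textbf{Step 1 (energy increment).} Set $W_0 = A$. At stage $t$, we seek $S_t, T_t$ with $\abs{\chi_{S_t}^T W_{t-1}\chi_{T_t}} \ge \theta n^2$, where $\theta = \varepsilon^{c}$ for a constant $c$ fixed in Step~2. Put $d_t = \chi_{S_t}^T W_{t-1}\chi_{T_t}/(\abs{S_t}\abs{T_t})$ and $W_t = W_{t-1} - d_t\chi_{S_t}\chi_{T_t}^T$.
\begin{itemize}
  \item This is orthogonal projection off the rank-one direction, so $\|W_t\|_F^2 = \|W_{t-1}\|_F^2 - d_t^2\abs{S_t}\abs{T_t} \le \|W_{t-1}\|_F^2 - \theta^2 n^2$.
  \item Since $\|A\|_F^2 \le n^2$, the process stops after at most $\theta^{-2}$ stages.
  \item To keep the coefficient length $O(1)$, I would re-fit all coefficients by least squares onto the span of the chosen rectangles, or argue as in Frieze--Kannan. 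Either way, $\sum_t d_t^2\abs{S_t}\abs{T_t} \le n^2$ together with $\abs{S_t}\abs{T_t} \ge \theta n^2$ gives a bound of the form $\sum_t d_t^2 = O(\theta^{-1})$. Some care is needed here to reach a genuine $O(1)$.
\end{itemize}

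\textbf{Step 2 (deterministic weak witness).} This is the main obstacle. We need a deterministic routine running in $\varepsilon^{-O(1)}n^2$ time that, whenever $\|W\|_C \ge \varepsilon n^2$, returns $S,T$ with $\abs{W(S,T)} \ge \theta n^2$. Exact cut-norm approximation via the Alon--Naor SDP is too slow, so the witness may lose a polynomial factor in $\varepsilon$. My plan:
\begin{itemize}
  \item From $\|W\|_C \ge \varepsilon n^2$, deduce that the spectral norm satisfies $\|W\| \ge \varepsilon n$.
  \item Run $O(\log(1/\varepsilon))$ rounds of a power-type iteration from explicitly chosen starting vectors, for example the all-ones vector and the sign patterns of a few heavy rows.
  \item Round the resulting vectors to sets by thresholding, keeping the better of the two sign classes.
  \item At each step $T \mapsto S = \set{i : W_i(T) > 0}$ (or its negative part), use best-response alternation. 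Each evaluation costs $O(n^2)$.
\end{itemize}
The loss in the rounding from vectors to sets should give $\theta = \varepsilon^{O(1)}$. With $\theta \approx \varepsilon^{8}$ the width is $\theta^{-2} = O(\varepsilon^{-16})$, matching the statement.

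\textbf{Step 3 (termination and running time).} When the witness routine fails, its guarantee gives $\|W_t\|_C < \varepsilon n^2$, which is exactly the required error bound. The total time is $O(\theta^{-2})$ witness calls, each costing $\varepsilon^{-O(1)}n^2$, so the whole algorithm runs in $\varepsilon^{-O(1)}n^2$.

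The step I am least sure about is the precise deterministic witness with only polynomial loss. That is where the actual contribution of \cite{fox2019fast} lies, and where I would defer to their argument.
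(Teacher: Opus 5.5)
The paper does not prove this statement. It imports it from \cite{fox2019fast} as a black box, exactly as your first sentence does, so as a citation your proposal matches the paper. The reconstruction you sketch, however, is not a proof, and its gaps are genuine.

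\textbf{Step~2 fails.} This step is the entire content of \cite{fox2019fast}, and the heuristic you describe (fixed start vectors, power iteration, thresholding, best-response alternation) has no guarantee; it can fail outright. Let $W$ have entries in $[-1,1]$ as follows. On $B\times C$, with $\abs{B}=\varepsilon n$ and $\abs{C}=n/2$, set $W=\frac12 xy^T$, where $x,y$ are zero-sum sign vectors. On $\overline{B}\times\overline{C}$, put a $\pm1$ block $R$ of spectral norm $O(\sqrt n)$ (e.g.\ a Hadamard submatrix). Put $0$ elsewhere. Then:
\begin{itemize}
\item Taking $S=\set{i\in B: x_i=1}$ and $T=\set{j\in C: y_j=1}$ gives $\nor{W}_C\ge\varepsilon n^2/16$.
\item $W\mathbf{1}$ vanishes on $B$ and $W^T\mathbf{1}$ vanishes on $C$.
\item The heaviest rows are rows of $R$ (squared norm $n/2$ versus $n/8$), and they are supported on $\overline{C}$.
\item $W_i(T)=0$ for every $i\in B$ once $T\subseteq\overline{C}$.
\end{itemize}
Consequently every power iterate, every thresholded set and every best-response step from your starts stays inside the $R$-block. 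There every value satisfies $\abs{W(S,T)}=O(n^{3/2})$, far below $\theta n^2$. The routine would therefore wrongly stop with error $\Omega(\varepsilon n^2)$. Testing every row as a start vector would cost $n^3$. Reaching $\varepsilon^{-O(1)}n^2$ deterministically therefore needs a genuinely different idea. Also, your $\theta\approx\varepsilon^{8}$ is read off from the target width $\varepsilon^{-16}$ rather than derived.

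\textbf{Step~1 does not reach coefficient length $O(1)$.} This assumes entries in $[-1,1]$, which the statement tacitly requires. There are three problems:
\begin{itemize}
\item Your bound $\sum_t d_t^2=O(\theta^{-1})$ only gives length $\varepsilon^{-O(1)}$.
\item The premise $\abs{S_t}\abs{T_t}\ge\theta n^2$ needs $W_{t-1}$ to be bounded by $1$ entrywise, which fails after the first subtraction.
\item A least-squares refit does not control individual coefficients when rectangles are nearly collinear.
\end{itemize}
This part is easy to repair. Since $W(S,T)=W(V,T)-W(\overline{S},T)$, any witness can be replaced by some $S'\in\set{S,\overline{S},V}$ and $T'\in\set{T,\overline{T},V}$ with $\abs{S'},\abs{T'}\ge n/2$ and $\abs{W(S',T')}\ge\theta n^2/4$. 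Then $d_t^2\le\frac{4}{n^2}\cdot W_{t-1}(S_t,T_t)^2/(\abs{S_t}\abs{T_t})$. Telescoping the Frobenius decrements $W_{t-1}(S_t,T_t)^2/(\abs{S_t}\abs{T_t})$ gives $\sum_t d_t^2\le 4\|A\|_F^2/n^2\le 4$ and width at most $16\theta^{-2}$.
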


We can derandomize the large-optimal-value case by replacing the randomized weak regularity lemma in Theorem~\ref{thm:weak_regularity} with Theorem~\ref{thm:deterministic_wrl}.
The increased width of the cut decomposition yields a running time of $2^{\tilde{O}(1/(\alpha \varepsilon)^{16})}$, which still falls within the EPTAS regime.

Next, we derandomize the sampling step.
Gillman showed a Chernoff bound for a random walk on a weighted graph in terms of the eigenvalue gap \cite{gillman1998chernoff}.
\begin{thm}[Theorem 2.1 in \cite{gillman1998chernoff}]
  \label{gillman}
  Let $G = (V, E)$ be a weighted graph with $n$ vertices.
  For $S \subseteq V$, let $t_S$ be the number of visits to $S$ in $t$ steps.
  Let $\lambda = 1 - \lambda_2$, where $\lambda_2$ is the second largest eigenvalue of the normalized adjacency matrix of $G$.
  Let $\pi$ be the stationary distribution of the random walk on $G$.
  Then, for all $u > 0$,
  \begin{equation*}
    \Pr\brac*{\abs*{\frac{t_S}{t} - \pi(S)} \ge u} \le 2(1 + u \lambda/10)\nor*{\frac{1}{\sqrt{\pi}}}_2 e^{-\frac{\lambda u^2 t}{20}}.
  \end{equation*}
\end{thm}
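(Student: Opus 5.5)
The plan is the standard moment-generating-function argument for Markov chains, in which the matrix power is controlled through the spectrum of a perturbed, symmetrized transition operator. Let $P$ be the transition matrix of the random walk on $G$, which is reversible with respect to $\pi$. Let $q$ be the starting distribution, and let $f=\bbm_S$ be the indicator of $S$, so that $t_S=\sum_{i=1}^{t} f(X_i)$. For a real parameter $r$, put $E_r=\mathrm{diag}(e^{r f(v)})$. Then
\begin{equation*}
  \E\bigl[e^{r t_S}\bigr] = q^T (P E_r)^{t}\mathbf{1}.
\end{equation*}
We work in $\ell^2(\pi)$. Reversibility makes $D_\pi^{1/2} P D_\pi^{-1/2}$ symmetric, and conjugating by $E_r^{1/2}$ shows that $P E_r$ is similar to the symmetric matrix $M_r=E_r^{1/2} D_\pi^{1/2} P D_\pi^{-1/2} E_r^{1/2}$. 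Cauchy--Schwarz then bounds the quantity above by $\nor{q/\sqrt{\pi}}_2\, e^{|r|}\,\beta(r)^{t}$, where $\beta(r)$ is the largest eigenvalue of $M_r$. For the walk started at a single vertex (or at any distribution), $\nor{q/\sqrt\pi}_2\le \nor{1/\sqrt\pi}_2$, which produces the prefactor in the statement. The factor $e^{|r|}$ and the constant $(1+u\lambda/10)$ come from bookkeeping at the first and last step, and I would absorb them at the end.

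The key step, and the main obstacle, is the eigenvalue perturbation bound. At $r=0$ the matrix $M_0$ has top eigenvalue $1$ with eigenvector $\sqrt\pi$, and all its other eigenvalues are at most $1-\lambda$. I want to show
\begin{equation*}
  \beta(r)\le \exp\bigl(r\,\pi(S) + C r^2/\lambda\bigr) \quad\text{for } |r|\le c\lambda .
\end{equation*}
Gillman obtains this through Kato's analytic perturbation theory. I would first try a direct variational argument. Take a unit top eigenvector $\phi$ of $M_r$ and decompose $\phi = a\sqrt\pi + \psi$ with $\psi\perp\sqrt\pi$. Write $M_r = M_0 + (E_r^{1/2}-I)M_0E_r^{1/2} + M_0(E_r^{1/2}-I)$, so the perturbation has norm $O(|r|)$. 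The eigenvalue equation projected onto $\sqrt\pi^{\perp}$, together with the gap $\lambda$, forces $\nor{\psi}=O(|r|/\lambda)$. Plugging this back into the Rayleigh quotient gives $\beta(r)\le 1 + r\pi(S) + O(r^2) + O(r^2/\lambda)$. The first-order term is exactly $\langle \sqrt\pi, (E_r-I)\sqrt\pi\rangle \approx r\pi(S)$.

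Finally, apply Markov's inequality. This gives
\begin{equation*}
  \Pr\bigl[t_S/t-\pi(S)\ge u\bigr]\le \nor{1/\sqrt\pi}_2\,e^{O(r)}\exp\bigl(-rut + C r^2 t/\lambda\bigr).
\end{equation*}
Choosing $r=\Theta(u\lambda)$, which is admissible since $u\le 1$, yields the exponent $-\lambda u^2 t/20$ once the constants are tracked. The lower tail follows symmetrically by replacing $f$ with $1-f$, or equivalently taking $r<0$. A union of the two tails gives the factor $2$. The only delicate part is pinning down the constants $1/20$ and $(1+u\lambda/10)$. To match them exactly I would follow Gillman's explicit Kato-series estimate for $\beta(r)$, rather than the cruder variational bound, at that step.
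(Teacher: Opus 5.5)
The paper gives no proof of this statement. It quotes Gillman's Theorem~2.1, and the displayed two-sided form follows from Gillman's one-sided bound for an initial distribution $q$: substitute $u=\gamma/t$, apply it to $S$ and to $\overline{S}$, and use $\|q/\sqrt{\pi}\|_2\le\|1/\sqrt{\pi}\|_2$, as you indicate. Your outline is Gillman's own architecture (moment generating function, similarity of $PE_r$ to the symmetric $M_r$, perturbation of the top eigenvalue, Markov), and those reductions are correct.

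One step needs justification. Cauchy--Schwarz gives you $\|M_r\|^t$, which involves the largest eigenvalue in \emph{absolute value}. The hypothesis controls only $\lambda_2$, and the smallest eigenvalue may be $-1$, e.g.\ for bipartite $G$. The fix is that $M_r$ is entrywise nonnegative, so $|x^TM_rx|\le|x|^TM_r|x|\le\beta(r)\|x\|_2^2$, hence $\|M_r\|=\beta(r)$. For the same reason the negative spectrum is harmless in your resolvent bound for $\psi$: $\beta(r)-\mu\ge\beta(r)-(1-\lambda)$ for every eigenvalue $\mu$ of $M_0$ on $\sqrt{\pi}^{\perp}$.

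The real shortfall is the constants. As written, your variational bound yields only $\exp(-c\lambda u^2t)$ for an unspecified $c$. Deferring to Gillman's Kato series for $1/20$ and $1+u\lambda/10$ amounts to citing the theorem, which is what the paper does. You can close this without Kato by sharpening your own argument:
\begin{itemize}
\item Put $z=e^r-1$ and $y=E_r^{1/2}x$. Then $\beta(r)=\max_{y\neq0}\,y^TAy/y^TE_r^{-1}y$, where $A=M_0$ and $E_r^{-1}=I-\frac{z}{1+z}D_S$.
\item For a unit vector $y=a\sqrt{\pi}+\psi$ with $\psi\perp\sqrt{\pi}$, use $y^TAy\le1-\lambda\|\psi\|^2$ and $\|D_Sy\|\le\sqrt{\pi(S)}+\|\psi\|$. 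Maximizing the resulting quadratic in $\|\psi\|$ gives $\beta(r)\le(1+z\pi(S))(1+4z^2\pi(S)/\lambda)$ for $0\le z\le\lambda/10$.
\item The boundary factor is $\|E_r^{1/2}\sqrt{\pi}\|_2\le\sqrt{1+z}$, and $r=\ln(1+z)\ge z-z^2/2$.
\item Choose $z=u\lambda/10$; this is where the factor $1+u\lambda/10$ comes from. The exponent per step is then at most $\lambda u^2\bigl(-\tfrac{1}{10}+\tfrac{4}{100}+\tfrac{1}{100}\bigr)=-\lambda u^2/20$. This uses $\lambda\le2$ and $\pi(S)+u\le1$ (otherwise the event is empty).
\end{itemize}
For the paper's use of the bound the constant does not matter anyway, since the derandomization simply takes $k$ large.
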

To apply Theorem~\ref{gillman}, we need an explicit constant-degree $d$-regular expander on exactly $n$ vertices with a constant spectral gap. 
Many explicit constructions of such expander families are known \cite{margulis1988explicit,lubotzky1986explicit,morgenstern1994existence,alon2021explicit}.
Fix an even constant $d \ge 4$. 
By Alon's construction of explicit expanders of every degree and size~\cite{alon2021explicit}, for every sufficiently large $n$, there is a deterministic polynomial-time construction of an $n$-vertex $d$-regular graph whose nontrivial adjacency eigenvalues are at most $2\sqrt{d-1}+\eta$ in absolute value, for any fixed constant $\eta>0$. 
For sufficiently small fixed $\eta$, this gives a constant spectral gap for the normalized adjacency matrix.
For derandomization, we enumerate all walks of length $t := k\log n$ on the expander over all choices of the initial vertex, where $k$ is a sufficiently large constant, and take the visited vertices of each walk as a multiset $S$.
Note that $S$ may contain duplicate vertices.
Since $\abs{N(v)\cap S} = \sum_{u \in S} \bbm[u \in N(v)]$ and $\abs{N(v) \cap S_L} = \sum_{u \in S} \bbm[u \in N(v) \cap \LOPT]$, we can derive the following concentration bounds:
\begin{align*}
  \Pr\brac*{\abs*{\frac{\abs{N(v) \cap S}}{\abs{S}} - \frac{d_v}{n}} \ge u}                     & \le 2(1 + u \lambda/10)\nor*{\frac{1}{\sqrt{\pi}}}_2 e^{-\frac{\lambda u^2 t}{20}}, \\
  \Pr\brac*{\abs*{\frac{\abs{N(v) \cap S_L}}{\abs{S}} - \frac{\abs{N(v) \cap \LOPT}}{n}} \ge u} & \le 2(1 + u \lambda/10)\nor*{\frac{1}{\sqrt{\pi}}}_2 e^{-\frac{\lambda u^2 t}{20}}.
\end{align*}
When $\abs*{\tfrac{\abs{N(v) \cap S}}{\abs{S}} - \tfrac{d_v}{n}} \le u$ and
$\abs*{\tfrac{\abs{N(v) \cap S_L}}{\abs{S}} - \tfrac{\abs{N(v) \cap \LOPT}}{n}} \le u$ hold, we can bound
\begin{align*}
  \abs*{\hat{p}_L(v) - p_L(v)}
   & = \abs*{\frac{\abs{N(v)\cap S_L}/\abs{S}}{\abs{N(v)\cap S}/\abs{S}} - \frac{\abs{N(v)\cap \LOPT}/n}{d_v/n}}                                 \\
   & = \bigg \lvert\frac{\abs{N(v)\cap S_L}/\abs{S}}{\abs{N(v)\cap S}/\abs{S}} - \frac{\abs{N(v)\cap \LOPT}/n}{\abs{N(v)\cap S}/\abs{S}}         \\
   & \quad + \frac{\abs{N(v)\cap \LOPT}/n}{\abs{N(v)\cap S}/\abs{S}} - \frac{\abs{N(v)\cap \LOPT}/n}{d_v/n} \bigg \rvert                         \\
   & \le \abs*{\frac{\abs{N(v)\cap S_L}/\abs{S}}{\abs{N(v)\cap S}/\abs{S}} - \frac{\abs{N(v)\cap \LOPT}/n}{\abs{N(v)\cap S}/\abs{S}}}            \\
   & \quad + \abs*{\frac{\abs{N(v)\cap \LOPT}/n}{\abs{N(v)\cap S}/\abs{S}} - \frac{\abs{N(v)\cap \LOPT}/n}{d_v/n}}                               \\
   & = \frac{\abs{\abs{N(v) \cap S_L}/\abs{S} - \abs{N(v) \cap \LOPT}/n}}{\abs{N(v) \cap S}/\abs{S}}                                             \\
   & \quad + \frac{\abs{N(v) \cap \LOPT}/n}{d_v/n} \cdot {\frac{\abs{\abs{N(v)\cap S}/\abs{S} - d_v/n}}{\abs{N(v)\cap S}/\abs{S}}}               \\
   & \le \frac{2u}{\abs{N(v) \cap S}/\abs{S}}                                                                                                    \\
   & \le \frac{2u}{\delta - u}. \qquad (\because d_v/n \ge \delta \text{ and } \abs*{\tfrac{\abs{N(v) \cap S}}{\abs{S}} - \tfrac{d_v}{n}} \le u)
\end{align*}
If we set $u = \tfrac{a\delta}{2 + a}$, we have $\abs{\hat{p}_L(v) - p_L(v)} \le a$ with probability at least $1 - Cn^{1 - \frac{\lambda u^2 k}{20}}$ for some constant $C > 0$.
By union bound, the above inequality holds for all $v \in V$ with probability at least $1 - Cn^{2 - \frac{\lambda u^2 k}{20}}$.
The number of all walks of length $O(\log{n})$ for a constant-degree expander graph is $n \cdot d^{O(\log{n})} = n^{O(1)}$.
Thus, by enumerating all such walks and running the subsequent classification and adjustment procedure on each, we are guaranteed that at least one of these walks satisfies the concentration bound $|\hat{p}_L(v) - p_L(v)| \le a$, ensuring that the minimum-sized valid cut we output across all iterations is a $(1+\varepsilon)$-approximate solution.
This completes the derandomization of the previous algorithm and the proof of Theorem~\ref{thm:EPTAS}, which in particular establishes the main theorem (Theorem~\ref{thm:eptas_con}).

Note that the same approach, applying the weak regularity lemma and derandomization, also works for \textsc{BalancedSeparator} as in \cite{ARORA1999193, frieze1996regularity}.
Therefore, \textsc{BalancedSeparator} admits an EPTAS on everywhere-$\delta$-dense graphs.

\section{MinQuotientCut and ProductSparsestCut}
\label{sec:red}

In this section, we present EPTASs for the \textsc{MinQuotientCut} problem and the \textsc{ProductSparsestCut} problem through a short reduction from the EPTAS for \textsc{ConstrainedMinCut}.
\reduction*
\corollaryred*

First, we confirm that the above four problems are special cases of \textsc{MinQuotientCut} and \textsc{ProductSparsestCut}.

For \textsc{EdgeExpansion} (resp. \textsc{UniformSparsestCut}), we set $c_v = n$ for every vertex $v \in V$, and then $c(S) = n\abs{S}$.
Similarly, for \textsc{Conductance} (resp. \textsc{NormalizedCut}), we set $c_v = d_v$ for every vertex $v \in V$, and then $c(S) = \vol(S)$.
In this setting, the form of the objective is the same except for a scaling factor of $n$ or $n^2$.

Below, we prove Theorem~\ref{thm:mqc_psc}.

\subsection{MinQuotientCut}
\subsubsection{When the Optimal Cut Is Unbalanced}
\begin{lem}
  Let $G$ be an everywhere-$\delta$-dense graph.
  For all cuts $\mE(S, \overline{S})$ with $\abs{S} < \varepsilon \delta n$, the quotient cut $q(S)$ can be approximated by $\tfrac{\vol(S)}{c(S)}$ within $1 \pm \varepsilon$ multiplicative error and $\tfrac{\vol(S)}{c(S)}$ can be minimized in $O(n)$ time.
\end{lem}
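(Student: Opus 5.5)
The plan is to show separately that the denominator of $q(S)$ is $c(S)$ and that the numerator $\abs{\mE(S,\overline{S})}$ is within a $(1-\varepsilon)$ factor of $\vol(S)$. The minimization claim then follows from an elementary mediant argument.

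\emph{Denominator.} Let $\abs{S} < \varepsilon\delta n$ and assume $\varepsilon$ is small enough that $\varepsilon\delta \le c_0(1-\varepsilon\delta)$, which is harmless since we may shrink $\varepsilon$ by a constant factor. Since $c_v \le n$, we get $c(S) \le n\abs{S} < \varepsilon\delta n^2$. On the other hand, $c(\overline{S}) \ge c_0 n\abs{\overline{S}} > c_0(1-\varepsilon\delta)n^2$. Hence $c(S) \le c(\overline{S})$, so $\min(c(S),c(\overline{S})) = c(S)$ and $q(S) = \abs{\mE(S,\overline{S})}/c(S)$.

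\emph{Numerator.} I repeat the computation from the unbalanced case at the start of Section~\ref{sec:main}. Each vertex $v\in S$ has at most $\abs{S}$ neighbors inside $S$, so
\begin{equation*}
\vol(S) - \abs{S}^2 \le \abs{\mE(S,\overline{S})} \le \vol(S).
\end{equation*}
Since every degree is at least $\delta n$,
\begin{equation*}
\abs{S}^2 < \abs{S}\,\varepsilon\delta n \le \varepsilon\vol(S).
\end{equation*}
Therefore $(1-\varepsilon)\vol(S) \le \abs{\mE(S,\overline{S})} \le \vol(S)$. Dividing by $c(S)$ gives $(1-\varepsilon)\tfrac{\vol(S)}{c(S)} \le q(S) \le \tfrac{\vol(S)}{c(S)}$, which is the claimed $1\pm\varepsilon$ approximation.

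\emph{Minimization.} We have
\begin{equation*}
\frac{\vol(S)}{c(S)} = \frac{\sum_{v\in S} d_v}{\sum_{v\in S} c_v}.
\end{equation*}
By the mediant inequality, a ratio of sums with positive denominators is at least $\min_{v\in S} d_v/c_v$; here $c_v \ge c_0 n > 0$. So the minimum over all nonempty $S$ is attained by a singleton $\set{v^*}$ with $v^* = \arg\min_v d_v/c_v$. Finding $v^*$ takes one scan over the vertices, i.e.\ $O(n)$ time given the degrees. For sufficiently large $n$ we have $1 < \varepsilon\delta n$, so this singleton itself lies in the range $\abs{S} < \varepsilon\delta n$. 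It therefore also minimizes $\vol(S)/c(S)$ over the restricted family, and by the sandwich above its quotient cut is within $1/(1-\varepsilon) = 1+O(\varepsilon)$ of the best $q(S)$ among small cuts.

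The only mildly delicate point is the denominator step: it uses both $c_v \le n$ and $c_v \ge c_0 n$, and needs $\varepsilon$ small relative to $c_0/\delta$. Everything else is routine.
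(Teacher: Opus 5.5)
Your proposal is correct and follows the paper's proof essentially step for step. It uses the same comparison $c(S)\le n\abs{S}\le\varepsilon\delta n^2<(1-\varepsilon\delta)c_0n^2\le c(\overline{S})$ for small $\varepsilon$, and the same sandwich $\vol(S)-\abs{S}^2\le\abs{\mE(S,\overline{S})}\le\vol(S)$ with $\abs{S}^2\le\varepsilon\vol(S)$. It also uses the same reduction of minimizing $\vol(S)/c(S)$ to choosing the best singleton, while additionally making explicit two points the paper leaves implicit: the singleton itself satisfies $\abs{S}<\varepsilon\delta n$ for large $n$, and this yields the $1/(1-\varepsilon)$ guarantee.
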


\begin{proof}
  Since $\abs{S} < \varepsilon \delta n$, for sufficiently small $\varepsilon$, $c(S) \le \abs{S}n \le \varepsilon \delta n^2 < (1-\varepsilon \delta) c_0 n^2 \le c(\overline{S})$ holds.
  Hence, we can assume that $\min\set{c(S), c(\overline{S})} = c(S)$.

  We have:
  \begin{align*}
    (1-\varepsilon)\frac{\vol(S)}{c(S)}\le \frac{\vol(S) - \abs{S}^2}{c(S)}\le q(S) = \frac{\abs{\mE(S, \overline{S})}}{c(S)} & \le \frac{\vol(S)}{c(S)}. \\
  \end{align*}
  Thus, for all cuts $\mE(S, \overline{S})$ with $\abs{S} < \varepsilon \delta n$,
  \begin{equation*}
    (1-\varepsilon)\frac{\vol(S)}{c(S)} \le q(S) \le (1+\varepsilon)\frac{\vol(S)}{c(S)}.
  \end{equation*}
  Since $\min_{\emptyset \ne S \subseteq V} \tfrac{\vol(S)}{c(S)} \ge \min_{v \in V} \tfrac{\vol(\set{v})}{c_v}$ holds, we can minimize $\tfrac{\vol(S)}{c(S)}$ by choosing a single vertex that minimizes $\tfrac{\vol(\set{v})}{c_v}$.
\end{proof}
Thus, when the optimal cut $S^*$ has $\abs{S^*} < \varepsilon \delta n$, we obtain a $(1 \pm \varepsilon)$-approximation of $q(S^*)$ in $O(n)$ time.

\subsubsection{When the Optimal Cut Is Balanced}

We now consider the case where $\abs{S^*}$ is large.
We apply an EPTAS for \textsc{ConstrainedMinCut} as a subroutine to approximate the \textsc{MinQuotientCut} problem.
The value of $c(S^*)$ for the optimal balanced cut lies in $\set{\ceil{c_0\varepsilon \delta n^2}, \ldots, \floor{\tfrac{c(V)}{2}}}$ for \textsc{MinQuotientCut}.
Thus, we divide possible values of $c(S^*)$ into segments, run the EPTAS for \textsc{ConstrainedMinCut} for each parameter setting, and return the best solution among them.
Let $S^*$ denote the optimal cut for the \textsc{MinQuotientCut} problem.

\begin{lem}
  \label{lem:parameter_setting_mqc}
  For a given $\varepsilon > 0$, there exist a constant $\zeta$ and a sequence of $\rho_0, \rho_1, \ldots$ such that the following hold:
  \begin{itemize}
    \item $\rho_0(1-\zeta) = c_0\varepsilon \delta n^2$.
    \item $\rho_i(1+\zeta) = \rho_{i+1}(1 - \zeta)$.
    \item The minimum index $k$ such that $\rho_k(1+\zeta) \ge \tfrac{c(V)}{2}$ is at most $O\pare*{\frac{1}{\varepsilon}\log{\frac{1}{\varepsilon}}}$.
    \item $\tfrac{1+\zeta}{1-\zeta} = 1 + \varepsilon$.
  \end{itemize}
\end{lem}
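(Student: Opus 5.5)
Since the lemma only asks for the existence of $\zeta$ and $\rho_i$, I will construct them explicitly. First fix $\zeta$ from the last condition: solving $\tfrac{1+\zeta}{1-\zeta} = 1+\varepsilon$ gives $\zeta = \tfrac{\varepsilon}{2+\varepsilon}$. This is a positive constant depending only on $\varepsilon$, and $\zeta < 1$. Next set $\rho_0 := \tfrac{c_0\varepsilon\delta n^2}{1-\zeta}$, which gives the first condition. Then define the sequence recursively by $\rho_{i+1} := \rho_i \tfrac{1+\zeta}{1-\zeta} = (1+\varepsilon)\rho_i$, which gives the second condition. Equivalently, $\rho_i = (1+\varepsilon)^i \rho_0$, so the intervals $[\rho_i(1-\zeta), \rho_i(1+\zeta)] = [(1+\varepsilon)^i c_0\varepsilon\delta n^2, (1+\varepsilon)^{i+1} c_0\varepsilon\delta n^2]$ tile the range geometrically, with consecutive intervals sharing endpoints.

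For the third condition, note that $\rho_k(1+\zeta) = (1+\varepsilon)^{k+1} c_0\varepsilon\delta n^2$. We need this to be at least $c(V)/2$. Since $c_v \le n$, we have $c(V) \le n^2$, so it suffices that $(1+\varepsilon)^{k+1} \ge \tfrac{1}{2c_0\varepsilon\delta}$. The minimal such $k$ is therefore at most $\ceil{\log(1/(2c_0\varepsilon\delta))/\log(1+\varepsilon)}$. Using $\log(1+\varepsilon) \ge \varepsilon/2$ for $\varepsilon \le 1$, and treating $c_0$ and $\delta$ as constants as the paper does throughout, this is $O\pare*{\tfrac{1}{\varepsilon}\log\tfrac{1}{\varepsilon}}$.

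There is no real obstacle here; the only points to watch are small ones. If $c_0\varepsilon\delta n^2 \ge c(V)/2$ already, then $k=0$ and the bound holds trivially. We should also check the side conditions that the \textsc{ConstrainedMinCut} definition places on each used parameter: $\rho_i \ge 1$ is clear, and $\rho_i \le c(V)/2$ can be enforced by stopping at index $k$ (or capping the last $\rho_k$ at $c(V)/2$), which does not affect the covering property. Finally, the fact that the constant $\zeta = \Theta(\varepsilon)$ enters the running time of the subroutine as $2^{\tilde O(1/(\zeta^{16}\varepsilon^{32}))}$ is consistent with the claimed $2^{\tilde O(1/\varepsilon^{48})}$ bound.
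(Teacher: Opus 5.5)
Your proposal is correct and follows the paper's proof essentially verbatim: the same choice $\zeta = \varepsilon/(2+\varepsilon)$, the same geometric sequence $\rho_i = (1+\varepsilon)^i c_0\varepsilon\delta n^2/(1-\zeta)$, and the same bound on $k$ via $c(V) \le n^2$ and $(1+\varepsilon)^{k+1} \ge 1/(2c_0\varepsilon\delta)$. Your side remarks on the trivial case $k=0$ and on capping the last $\rho_k$ match the paper's clamping remark that follows the lemma.
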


\begin{proof}
  For a given $\varepsilon > 0$, we fix $\zeta = \tfrac{\varepsilon}{2 + \varepsilon}$, which satisfies $\tfrac{1 + \zeta}{1 - \zeta} = 1 + \varepsilon$,
  and we define $\rho_i$ as $\left(\tfrac{1 + \zeta}{1 - \zeta}\right)^i\tfrac{c_0\varepsilon \delta n^2}{1 - \zeta} = (1 + \varepsilon)^i \cdot \tfrac{c_0\varepsilon \delta n^2}{1 - \zeta}$.

  $\rho_0(1-\zeta) = c_0\varepsilon \delta n^2$ and $\rho_{i+1}(1-\zeta) = \rho_i(1+\zeta)$ hold by definition.
  The minimum index $k$ that satisfies $\rho_{k}(1+\zeta) \ge \tfrac{c(V)}{2}$ can be obtained as follows:
  \begin{align*}
    \rho_k(1+\zeta)
          & = (1 + \varepsilon)^{k+1} c_0\varepsilon \delta n^2  \ge n^2/2 \ge c(V)/2                                                               \\
    (1 + \varepsilon)^{k+1}
          & \ge \frac{1}{2c_0\varepsilon \delta}                                                                                                    \\
    k + 1 & \ge \frac{\log{\frac{1}{2c_0\varepsilon \delta}}}{\log{(1 + \varepsilon)}} = O\pare*{\frac{1}{\varepsilon}\log{\frac{1}{\varepsilon}}}.
  \end{align*}
  This completes the proof.
\end{proof}

\begin{rem}
  \label{rem:clamp_mqc}
  By the definition of $k$, $\rho_k(1+\zeta)$ may exceed $c(V)/2$.
  Since the optimal cut $S^*$ for \textsc{MinQuotientCut} satisfies $\min\set{c(S^*), c(\overline{S^*})} \le c(V)/2$, we may assume without loss of generality that $c(S^*) \le c(V)/2$.
  Therefore, for the largest index $k$, we replace $\rho_k$ with the clamped value $\rho_k' := \tfrac{c(V)}{2(1+\zeta)}$ so that the cost interval at index $k$ becomes $[\rho_k'(1-\zeta), c(V)/2]$.
  This clamp does not create a gap in the coverage of $[\rho_0(1-\zeta), c(V)/2]$ because $\rho_k'(1-\zeta) = \tfrac{c(V)(1-\zeta)}{2(1+\zeta)} \le \rho_k(1-\zeta)$ holds by $\rho_k(1+\zeta) \ge c(V)/2$.
  Since $\zeta$ is unchanged, the runtime of the subroutine is not worsened.

  Consequently, there always exists $j \in \set{0, \ldots, k}$ such that $c(S^*) \in [\rho_j(1-\zeta), \rho_j(1+\zeta)]$ and $\rho_j(1+\zeta) \le c(V)/2$, which justifies the bound $\min\set{c(S^*), c(\overline{S^*})} \le \rho_j(1+\zeta)$ in the analysis below.
\end{rem}

We run \textsc{ConstrainedMinCut}$(G, c, \rho_j, \zeta, \varepsilon/3)$ with $\rho_j, \zeta$ from Lemma~\ref{lem:parameter_setting_mqc} (with the clamp at $j = k$ in Remark~\ref{rem:clamp_mqc}) for $j \in \set{0, \ldots, k}$, and denote the output cut by $S_j$.
We take the best cut among them, which is shown to be a $(1 + \varepsilon)$-approximate solution for \textsc{MinQuotientCut} as follows:
\begin{align*}
  (1 + \varepsilon/3)q(S^*)
   & = (1 + \varepsilon/3)\frac{\abs{\mE(S^*, \overline{S^*})}}{\min\set{c(S^*), c(\overline{S^*})}} \\
   & \ge \min_{0 \le j \le k} \frac{\abs{\mE(S_j, \overline{S_j})}}{(1+\zeta)\rho_j}                 \\
   & \ge \frac{1-\zeta}{1+\zeta}\frac{\abs{\mE(S_{j^*}, \overline{S_{j^*}})}}{c(S_{j^*})}            \\
   & = \frac{1}{1+\varepsilon/3}q(S_{j^*}).
\end{align*}
Here, $j^*$ is an index of the best cut among $S_j$.
The first inequality follows by running the EPTAS for \textsc{ConstrainedMinCut} with parameters $\rho_j, \zeta$ for an index $j$ such that $c(S^*) \in [\rho_j(1-\zeta), \rho_j(1 + \zeta)]$ with precision parameter $\varepsilon/3$.
Its running time is $\varepsilon^{-O(1)}n^{O(1)} + 2^{\tilde{O}(1/\varepsilon^{48})}$, since we run the EPTAS for \textsc{ConstrainedMinCut} $\tilde{O}(1/\varepsilon)$ times, each with $\zeta = \Theta(\varepsilon)$ and precision parameter $\varepsilon/3$.

\subsection{ProductSparsestCut}
\subsubsection{When the Optimal Cut Is Unbalanced}
\begin{lem}
  Let $G$ be an everywhere-$\delta$-dense graph.
  For all cuts $\mE(S, \overline{S})$ with $\abs{S} \le \min\set{\varepsilon\delta n, \tfrac{\varepsilon}{2}c_0n}$, the sparsity $\Phi^\times(S)$ can be approximated by $\tfrac{\vol(S)}{c(V)c(S)}$ within $1 \pm \varepsilon$ multiplicative error and $\tfrac{\vol(S)}{c(V)c(S)}$ can be minimized in $O(n)$ time.
\end{lem}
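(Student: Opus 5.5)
We mirror the proof of the corresponding lemma for \textsc{MinQuotientCut}, with two multiplicative approximations instead of one: the numerator $\abs{\mE(S,\overline{S})}$ by $\vol(S)$, and the demand factor $c(\overline{S})$ by $c(V)$. Fix $S$ with $\abs{S} \le \min\set{\varepsilon\delta n, \tfrac{\varepsilon}{2}c_0 n}$.

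\emph{Numerator.} Exactly as before, $\vol(S) - \abs{S}^2 \le \abs{\mE(S,\overline{S})} \le \vol(S)$. Since $\abs{S} \le \varepsilon\delta n$ and $d_v \ge \delta n$, we have $\abs{S}^2 \le \varepsilon\delta n\abs{S} \le \varepsilon\vol(S)$. Hence $(1-\varepsilon)\vol(S) \le \abs{\mE(S,\overline{S})} \le \vol(S)$.

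\emph{Demand.} Write $c(\overline{S}) = c(V) - c(S)$. Because $c_v \le n$, we get $c(S) \le \abs{S} n \le \tfrac{\varepsilon}{2}c_0 n^2$. Because $c_v \ge c_0 n$, we get $c(V) \ge c_0 n^2$. Therefore $(1-\tfrac{\varepsilon}{2})c(V) \le c(\overline{S}) \le c(V)$.

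\emph{Combining.} Substituting both estimates into $\Phi^\times(S) = \abs{\mE(S,\overline{S})}/(c(S)c(\overline{S}))$ gives
\begin{equation*}
  (1-\varepsilon)\frac{\vol(S)}{c(V)c(S)} \le \Phi^\times(S) \le \frac{1}{1-\varepsilon/2}\cdot\frac{\vol(S)}{c(V)c(S)} \le (1+\varepsilon)\frac{\vol(S)}{c(V)c(S)},
\end{equation*}
where the last step uses $\varepsilon \le 1$. The factor $\tfrac{\varepsilon}{2}$ in the size bound is what lets the demand error be absorbed into $1+\varepsilon$.

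\emph{Minimization.} The quantity $\vol(S)/(c(V)c(S))$ equals $\tfrac{1}{c(V)}$ times the ratio of sums $\sum_{v\in S} d_v \big/ \sum_{v\in S} c_v$. A ratio of sums of positive terms is at least the smallest termwise ratio, since it is a weighted average of the $d_v/c_v$. So over nonempty $S$ it is minimized by a single vertex $v$ minimizing $d_v/c_v$. That singleton satisfies $\abs{S}=1$, which lies within the size bound for large $n$, and it can be found by one $O(n)$ scan, given that $c(V)$ and the degrees are precomputed.

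The only delicate point is the demand estimate. It relies on $c(S)$ being tiny compared with $c(V)$, which is exactly where density of the vertex weights ($c_v \ge c_0 n$, so $c(V) \ge c_0 n^2$) enters. The remaining steps are routine.
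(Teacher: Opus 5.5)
Your proof is correct and takes essentially the same route as the paper: the paper's one-line chain sandwiches $\Phi^\times(S)$ between $\frac{\vol(S)-\abs{S}^2}{c(S)(c(V)-c(S))}$ and $\frac{\vol(S)}{c(S)c(V)(1-c(S)/c(V))}$, and then minimizes by taking the single vertex minimizing $d_v/c_v$, exactly as you do. You also spell out steps the paper leaves implicit, namely $c(S)\le \tfrac{\varepsilon}{2}c_0n^2\le\tfrac{\varepsilon}{2}c(V)$, the bound $\tfrac{1}{1-\varepsilon/2}\le 1+\varepsilon$ for $\varepsilon\le 1$, and the weighted-average argument for why a singleton attains the minimum.
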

\begin{proof}
  We have:
  \begin{align*}
    (1-\varepsilon)\frac{\vol(S)}{c(V)c(S)}\le \frac{\vol(S) - \abs{S}^2}{c(S)(c(V) - c(S))} \le \Phi^\times(S) = \frac{\abs{\mE(S, \overline{S})}}{c(S)c(\overline{S})} & \le \frac{\vol(S)}{c(S)c(V)(1 - c(S)/c(V))} \le \frac{\vol(S)}{c(S)c(V)} (1 + \varepsilon). \\
  \end{align*}
  Thus, for all cuts $\mE(S, \overline{S})$ with $\abs{S} \le \min\set{\varepsilon\delta n, \tfrac{\varepsilon}{2}c_0n}$, we have
  \begin{equation*}
    (1-\varepsilon)\frac{\vol(S)}{c(V)c(S)} \le \Phi^\times(S) \le (1+\varepsilon)\frac{\vol(S)}{c(V)c(S)}.
  \end{equation*}
\end{proof}
As in the previous section, $\frac{\vol(S)}{c(V)c(S)}$ can be minimized in $O(n)$ time.

\subsubsection{When the Optimal Cut Is Balanced}
\begin{lem}
  \label{lem:parameter_setting_psc}
  For a given $\varepsilon > 0$, there exist a constant $\zeta$ and a sequence of $\rho_0, \rho_1, \ldots$ such that the following hold:
  \begin{itemize}
    \item $\rho_0(1-\zeta) \le \min\set{\varepsilon\delta c_0 n^2, \tfrac{\varepsilon}{2}c_0^2 n^2}$.
    \item $\rho_{i+1}(1-\zeta) = \rho_i(1+\zeta)$.
    \item The minimum index $k$ such that $\rho_{k}(1+\zeta) \ge c(V)/2$ is at most $O\pare*{\frac{1}{\varepsilon}\log{\frac{1}{\varepsilon}}}$.
    \item For $^\forall i \in \set{0, 1, \ldots, k}$, $\tfrac{1 + \zeta}{1-\zeta}\tfrac{c(V) -\rho_i(1 -\zeta)}{c(V) - \rho_i(1 + \zeta)} \le 1 + \varepsilon$.
  \end{itemize}
\end{lem}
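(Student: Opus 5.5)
We will mirror the construction used for Lemma~\ref{lem:parameter_setting_mqc}, but with a smaller interval parameter. The extra factor $\tfrac{c(V)-\rho_i(1-\zeta)}{c(V)-\rho_i(1+\zeta)}$ must be absorbed, so we choose $\zeta = \Theta(\varepsilon)$ small enough that both factors are close to $1$.

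First we define the sequence. Set $\beta := \min\set{\varepsilon\delta c_0 n^2, \tfrac{\varepsilon}{2}c_0^2 n^2}$, and put
\begin{equation*}
  \rho_i := \left(\tfrac{1+\zeta}{1-\zeta}\right)^i \tfrac{\beta}{1-\zeta}.
\end{equation*}
Then $\rho_0(1-\zeta) = \beta$, and the relation $\rho_{i+1}(1-\zeta) = \rho_i(1+\zeta)$ holds by construction, exactly as before.

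Second, we bound the number of intervals. Write $r := \tfrac{1+\zeta}{1-\zeta}$. Since $c(V) \le n^2$, we have $\rho_k(1+\zeta) = r^{k+1}\beta$. Hence it suffices that $r^{k+1} \ge n^2/(2\beta)$, which is $\Theta(1/\varepsilon)$ because $\delta$ and $c_0$ are constants. With $\log r = \Theta(\zeta) = \Theta(\varepsilon)$, this gives $k = O\pare*{\tfrac{1}{\varepsilon}\log\tfrac{1}{\varepsilon}}$.

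Third, we verify the product condition. The only indices that matter are $i \le k$. By minimality of $k$ we have $\rho_{k-1}(1+\zeta) < c(V)/2$, and therefore $\rho_k(1+\zeta) = r\,\rho_{k-1}(1+\zeta) < r\,c(V)/2 \le \tfrac{3}{4}c(V)$ once $\zeta$ is small. Consequently, for every $i \le k$,
\begin{equation*}
  \frac{c(V)-\rho_i(1-\zeta)}{c(V)-\rho_i(1+\zeta)} = 1 + \frac{2\zeta\rho_i}{c(V)-\rho_i(1+\zeta)} \le 1 + \frac{2\zeta\rho_i}{c(V)/4} \le 1 + O(\zeta),
\end{equation*}
where the final step uses $\rho_i \le c(V)$. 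Multiplying by $r \le 1 + 3\zeta$ (valid for $\zeta \le 1/3$) gives a total factor of $1 + O(\zeta)$. Choosing, say, $\zeta = \varepsilon/20$ and tracking constants makes the product at most $1+\varepsilon$.

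The main obstacle is this third step, specifically the last index $k$. If the final interval is allowed to extend far past $c(V)/2$, the denominator $c(V)-\rho_k(1+\zeta)$ could approach $0$. Minimality of $k$ prevents this, since it caps $\rho_k(1+\zeta)$ at $r\,c(V)/2$. Alternatively, one can clamp the last interval as in Remark~\ref{rem:clamp_mqc} so that $\rho_k(1+\zeta) \le c(V)/2$, and then the ratio is immediately at most $1+4\zeta$.
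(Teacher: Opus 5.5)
Your proposal is correct and follows essentially the paper's approach. You use the same geometric sequence $\rho_i = \bigl(\tfrac{1+\zeta}{1-\zeta}\bigr)^i \tfrac{\beta}{1-\zeta}$, the same count for $k$, and the same minimality-of-$k$ argument to keep the relevant $\rho_i$ away from $c(V)$. The only difference is bookkeeping in the last bullet: the paper fixes $\zeta = \frac{\varepsilon/2}{2+3\varepsilon/2}$ and rearranges the condition exactly into $2\zeta c(V) + \varepsilon\rho_i(1-\zeta^2) \le \varepsilon(1-\zeta)c(V)$ using $\rho_i(1-\zeta) < c(V)/2$, whereas you lower-bound the denominator by $c(V)/4$ and absorb a $1+O(\zeta)$ factor with $\zeta = \varepsilon/20$, which works equally well.
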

\begin{proof}
  We fix $\zeta = \tfrac{\varepsilon/2}{2 + 3\varepsilon/2}$ and define $\rho_i$ as $\pare*{\tfrac{1 + \zeta}{1 - \zeta}}^i\tfrac{\min\set{\varepsilon\delta c_0 n^2, \tfrac{\varepsilon}{2}c_0^2 n^2}}{1 - \zeta}$.
  $\rho_0(1-\zeta) = \min\set{\varepsilon\delta c_0 n^2, \tfrac{\varepsilon}{2}c_0^2 n^2}$ and $\rho_{i+1}(1-\zeta) = \rho_i(1+\zeta)$ hold by definition.
  The minimum index $k$ that satisfies $\rho_{k}(1+\zeta) \ge c(V)/2$ is obtained as follows:
  \begin{equation*}
    \rho_{k}(1+\zeta)
    = (1 + \frac{\varepsilon}{2 +\varepsilon})^{k+1} \cdot \min\set{\varepsilon\delta c_0 n^2, \tfrac{\varepsilon}{2}c_0^2 n^2} \ge n^2 \ge c(V)/2
  \end{equation*}
  \begin{equation*}
    (1 + \frac{\varepsilon}{2 +\varepsilon})^{k+1}
    \ge \frac{2}{\varepsilon c_0 \min\set{\delta, c_0}}
  \end{equation*}
  \begin{equation*}
    k + 1 \ge \frac{\log{\frac{2}{\varepsilon c_0 \min\set{\delta, c_0}}}}{\log{(1 + \varepsilon/(2 + \varepsilon))}} = O\pare*{\frac{1}{\varepsilon}\log{\frac{1}{\varepsilon}}}.
  \end{equation*}

  Finally, we prove the last condition.
  We can assume $\rho_i(1 - \zeta) < c(V)/2$ for all $i \in \set{0, 1, \ldots, k}$ because $\rho_{k-1}(1 + \zeta) = \rho_k(1 - \zeta) < c(V)/2$ holds by the definition of $k$.
  The last condition is equivalent to the following:
  \begin{equation*}
    2\zeta c(V) + \varepsilon \rho_i (1 -\zeta^2) \le \varepsilon(1-\zeta)c(V).
  \end{equation*}
  Since $\rho_i(1-\zeta)<c(V)/2$, we have
  \begin{equation*}
    \varepsilon\rho_i(1-\zeta^2)
    = \varepsilon\rho_i(1-\zeta)(1+\zeta)
    < \frac{\varepsilon(1+\zeta)}{2}c(V).
  \end{equation*}
  Thus, it suffices to show
  \begin{equation*}
    2\zeta c(V)+\frac{\varepsilon(1+\zeta)}{2}c(V) \le \varepsilon(1-\zeta)c(V).
  \end{equation*}
  This holds by the choice
  \begin{equation*}
    \zeta=\frac{\varepsilon/2}{2+3\varepsilon/2}.
  \end{equation*}

  This completes the proof.
\end{proof}

\begin{rem}
  \label{rem:clamp_psc}
  As in Remark~\ref{rem:clamp_mqc}, when $\rho_k > c(V)/2$ we replace $\rho_k$ with the clamped value $\rho_k' := \tfrac{c(V)}{2(1+\zeta)}$ so that the input $\rho$ to \textsc{ConstrainedMinCut} satisfies $\rho \le c(V)/2$ as required.
\end{rem}

We run \textsc{ConstrainedMinCut}$(G, c, \rho_j, \zeta, \varepsilon/3)$ with $\rho_j, \zeta$ from Lemma~\ref{lem:parameter_setting_psc} (instantiated with $\varepsilon/3$ in place of $\varepsilon$, and with the clamp at $j = k$ in Remark~\ref{rem:clamp_psc}) for $j \in \set{0, \ldots, k}$, and denote the output cut by $S_j$.
We take the best cut among them, which is shown to be a $(1 + \varepsilon)$-approximate solution for \textsc{ProductSparsestCut} as follows:
\begin{align*}
  (1 + \varepsilon/3)\Phi^\times(S^*)
   & = (1 + \varepsilon/3)\frac{\abs{\mE(S^*, \overline{S^*})}}{c(S^*)c(\overline{S^*})}                                                                                    \\
   & \ge \min_{0 \le j \le k} \frac{\abs{\mE(S_j, \overline{S_j})}}{\rho_j(1 + \zeta)(c(V) - \rho_j(1 - \zeta))}                                                            \\
   & \ge \min_{0 \le j \le k} \frac{1-\zeta}{1 + \zeta}\frac{c(V) - \rho_j(1+\zeta)}{c(V) - \rho_j(1-\zeta)} \frac{\abs{\mE(S_j, \overline{S_j})}}{c(S_j)c(\overline{S_j})} \\
   & \ge \frac{1}{1 + \varepsilon/3} \Phi^\times(S_{j^*}).
\end{align*}
Here, $j^*$ is an index of the best cut among $S_j$.

Thus, we can find a cut $\mE(S, \overline{S})$ such that $\Phi^\times(S) \le (1+\varepsilon/3)^2\Phi^\times(S^*) \le (1 + \varepsilon)\Phi^\times(S^*)$ (we can assume $\varepsilon < 1$ without loss of generality).
Its running time is $\varepsilon^{-O(1)}n^{O(1)} + 2^{\tilde{O}(1/\varepsilon^{48})}$, as in the previous section.

\begin{algorithm}[H]
  \caption{\textsc{MinQuotientCut}$(G, c, \varepsilon)$}
  \label{alg:cond}
  \begin{algorithmic}[1]
    \State $\zeta, \set{\rho_0, \ldots, \rho_k}$ $\gets$ parameters in Lemma~\ref{lem:parameter_setting_mqc} with error parameter $\varepsilon/3$
    \State $S$ $\gets$ the minimizer for $\vol(\set{v})/c(\set{v})$ for $v \in V$
    \For {$j$ in $0$..$k$}
    \State $S_j \gets$ \textsc{ConstrainedMinCut}$(G, c, \rho_j, \zeta, \varepsilon/3)$
    \If {$q(S_j) < q(S)$}
    \State $S \gets S_j$
    \EndIf
    \EndFor
    \State \Return $S$
  \end{algorithmic}
\end{algorithm}

\begin{algorithm}[H]
  \caption{\textsc{ProductSparsestCut}$(G, c, \varepsilon)$}
  \label{alg:ncut}
  \begin{algorithmic}[1]
    \State $\zeta, \set{\rho_0, \ldots, \rho_k}$ $\gets$ parameters in Lemma~\ref{lem:parameter_setting_psc} with error parameter $\varepsilon/3$
    \State $S$ $\gets$ the minimizer for $\vol(\set{v})/c(\set{v})$ for $v \in V$
    \For {$j$ in $0$..$k$}
    \State $S_j \gets$ \textsc{ConstrainedMinCut}$(G, c, \rho_j, \zeta, \varepsilon/3)$
    \If {$\Phi^\times(S_j) < \Phi^\times(S)$}
    \State $S \gets S_j$
    \EndIf
    \EndFor
    \State \Return $S$
  \end{algorithmic}
\end{algorithm}

\bibliography{ref}

@article{frieze1999quick,
  title     = {Quick approximation to matrices and applications},
  author    = {Frieze, Alan and Kannan, Ravi},
  journal   = {Combinatorica},
  volume    = {19},
  number    = {2},
  pages     = {175--220},
  year      = {1999},
  publisher = {Springer}
}

@article{ARORA1999193,
  title    = {Polynomial Time Approximation Schemes for Dense Instances of NP-Hard Problems},
  journal  = {Journal of Computer and System Sciences},
  volume   = {58},
  number   = {1},
  pages    = {193-210},
  year     = {1999},
  issn     = {0022-0000},
  doi      = {https://doi.org/10.1006/jcss.1998.1605},
  url      = {https://www.sciencedirect.com/science/article/pii/S0022000098916051},
  author   = {Sanjeev Arora and David Karger and Marek Karpinski}
}

@inproceedings{guruswami2011lasserre,
  title        = {Lasserre hierarchy, higher eigenvalues, and approximation schemes for graph partitioning and quadratic integer programming with PSD objectives},
  author       = {Guruswami, Venkatesan and Sinop, Ali Kemal},
  booktitle    = {2011 IEEE 52nd Annual Symposium on Foundations of Computer Science},
  pages        = {482--491},
  year         = {2011},
  organization = {IEEE}
}

@inproceedings{oveis2013new,
  title        = {A new regularity lemma and faster approximation algorithms for low threshold rank graphs},
  author       = {Oveis Gharan, Shayan and Trevisan, Luca},
  booktitle    = {International Workshop on Approximation Algorithms for Combinatorial Optimization},
  pages        = {303--316},
  year         = {2013},
  organization = {Springer}
}

@inproceedings{yoshida2014approximation,
  title     = {Approximation schemes via Sherali-Adams hierarchy for dense constraint satisfaction problems and assignment problems},
  author    = {Yoshida, Yuichi and Zhou, Yuan},
  booktitle = {Proceedings of the 5th conference on Innovations in theoretical computer science},
  pages     = {423--438},
  year      = {2014}
}

@article{arora1998proof,
  title     = {Proof verification and the hardness of approximation problems},
  author    = {Arora, Sanjeev and Lund, Carsten and Motwani, Rajeev and Sudan, Madhu and Szegedy, Mario},
  journal   = {Journal of the ACM (JACM)},
  volume    = {45},
  number    = {3},
  pages     = {501--555},
  year      = {1998},
  publisher = {ACM New York, NY, USA}
}

@inproceedings{frieze1996regularity,
  title        = {The regularity lemma and approximation schemes for dense problems},
  author       = {Frieze, Alan and Kannan, Ravi},
  booktitle    = {Proceedings of 37th conference on foundations of computer science},
  pages        = {12--20},
  year         = {1996},
  organization = {IEEE}
}

@article{goldreich1998property,
  title     = {Property testing and its connection to learning and approximation},
  author    = {Goldreich, Oded and Goldwasser, Shari and Ron, Dana},
  journal   = {Journal of the ACM (JACM)},
  volume    = {45},
  number    = {4},
  pages     = {653--750},
  year      = {1998},
  publisher = {ACM New York, NY, USA}
}

@inproceedings{mathieu2008yet,
  title     = {Yet another algorithm for dense max cut: go greedy.},
  author    = {Mathieu, Claire and Schudy, Warren},
  booktitle = {SODA},
  volume    = {8},
  pages     = {176--182},
  year      = {2008}
}

@book{de2000polynomial,
  title     = {Polynomial time approximation of dense weighted instances of MAX-CUT},
  author    = {de la Vega, Wenceslas Fernandez and Karpinski, Marek},
  year      = {2000},
  publisher = {Inst. f{\"u}r Informatik}
}

@article{karpinski2001polynomial,
  title     = {Polynomial time approximation schemes for some dense instances of NP-hard optimization problems},
  author    = {Karpinski, Marek},
  journal   = {Algorithmica},
  volume    = {30},
  number    = {3},
  pages     = {386--397},
  year      = {2001},
  publisher = {Springer}
}

@article{fox2019fast,
  title     = {A fast new algorithm for weak graph regularity},
  author    = {Fox, Jacob and Lov{\'a}sz, L{\'a}szl{\'o} Mikl{\'o}s and Zhao, Yufei},
  journal   = {Combinatorics, Probability and Computing},
  volume    = {28},
  number    = {5},
  pages     = {777--790},
  year      = {2019},
  publisher = {Cambridge University Press}
}

@article{gillman1998chernoff,
  title     = {A Chernoff bound for random walks on expander graphs},
  author    = {Gillman, David},
  journal   = {SIAM Journal on Computing},
  volume    = {27},
  number    = {4},
  pages     = {1203--1220},
  year      = {1998},
  publisher = {SIAM}
}

@inproceedings{raghavendra2010graph,
  title     = {Graph expansion and the unique games conjecture},
  author    = {Raghavendra, Prasad and Steurer, David},
  booktitle = {Proceedings of the forty-second ACM symposium on Theory of computing},
  pages     = {755--764},
  year      = {2010}
}

@inproceedings{austrin2012inapproximability,
  title        = {Inapproximability of treewidth, one-shot pebbling, and related layout problems},
  author       = {Austrin, Per and Pitassi, Toniann and Wu, Yu},
  booktitle    = {International Workshop on Approximation Algorithms for Combinatorial Optimization},
  pages        = {13--24},
  year         = {2012},
  organization = {Springer}
}

@article{manurangsi2018inapproximability,
  title     = {Inapproximability of maximum biclique problems, minimum k-cut and densest at-least-k-subgraph from the small set expansion hypothesis},
  author    = {Manurangsi, Pasin},
  journal   = {Algorithms},
  volume    = {11},
  number    = {1},
  pages     = {10},
  year      = {2018},
  publisher = {MDPI}
}

@inproceedings{gandhi2014set,
  title        = {On set expansion problems and the small set expansion conjecture},
  author       = {Gandhi, Rajiv and Kortsarz, Guy},
  booktitle    = {International Workshop on Graph-Theoretic Concepts in Computer Science},
  pages        = {189--200},
  year         = {2014},
  organization = {Springer}
}

@article{khot2007optimal,
  title     = {Optimal inapproximability results for MAX-CUT and other 2-variable CSPs?},
  author    = {Khot, Subhash and Kindler, Guy and Mossel, Elchanan and O’Donnell, Ryan},
  journal   = {SIAM Journal on Computing},
  volume    = {37},
  number    = {1},
  pages     = {319--357},
  year      = {2007},
  publisher = {SIAM}
}

@article{khot2008vertex,
  title     = {Vertex cover might be hard to approximate to within 2- $\varepsilon$},
  author    = {Khot, Subhash and Regev, Oded},
  journal   = {Journal of Computer and System Sciences},
  volume    = {74},
  number    = {3},
  pages     = {335--349},
  year      = {2008},
  publisher = {Elsevier}
}

@inproceedings{raghavendra2008optimal,
  title     = {Optimal algorithms and inapproximability results for every CSP?},
  author    = {Raghavendra, Prasad},
  booktitle = {Proceedings of the fortieth annual ACM symposium on Theory of computing},
  pages     = {245--254},
  year      = {2008}
}

@inproceedings{karpinski2009linear,
  title     = {Linear time approximation schemes for the Gale-Berlekamp game and related minimization problems},
  author    = {Karpinski, Marek and Schudy, Warren},
  booktitle = {Proceedings of the forty-first annual ACM symposium on Theory of computing},
  pages     = {313--322},
  year      = {2009}
}

@article{bazgan2003polynomial,
  title     = {Polynomial time approximation schemes for dense instances of minimum constraint satisfaction},
  author    = {Bazgan, Cristina and Fernandez de La Vega, W and Karpinski, Marek},
  journal   = {Random Structures \& Algorithms},
  volume    = {23},
  number    = {1},
  pages     = {73--91},
  year      = {2003},
  publisher = {Wiley Online Library}
}

@inproceedings{bazgan2002approximability,
  title        = {Approximability of dense instances of NEAREST CODEWORD problem},
  author       = {Bazgan, Cristina and de la Vega, W Fernandez and Karpinski, Marek},
  booktitle    = {Scandinavian Workshop on Algorithm Theory},
  pages        = {298--307},
  year         = {2002},
  organization = {Springer}
}

@article{giotis2005correlation,
  title   = {Correlation clustering with a fixed number of clusters},
  author  = {Giotis, Ioannis and Guruswami, Venkatesan},
  journal = {arXiv preprint cs/0504023},
  year    = {2005}
}

@article{margulis1988explicit,
  title={Explicit group-theoretical constructions of combinatorial schemes and their application to the design of expanders and concentrators},
  author={Margulis, Grigorii Aleksandrovich},
  journal={Problemy peredachi informatsii},
  volume={24},
  number={1},
  pages={51--60},
  year={1988},
  publisher={Russian Academy of Sciences, Branch of Informatics, Computer Equipment and~…}
}

@inproceedings{lubotzky1986explicit,
  title={Explicit expanders and the Ramanujan conjectures},
  author={Lubotzky, Alexander and Phillips, Ralph and Sarnak, Peter},
  booktitle={Proceedings of the eighteenth annual ACM symposium on Theory of computing},
  pages={240--246},
  year={1986}
}

@article{morgenstern1994existence,
  title={Existence and explicit constructions of q+ 1 regular Ramanujan graphs for every prime power q},
  author={Morgenstern, Moshe},
  journal={Journal of Combinatorial Theory, Series B},
  volume={62},
  number={1},
  pages={44--62},
  year={1994},
  publisher={Elsevier}
}

@article{alon2021explicit,
  title={Explicit expanders of every degree and size},
  author={Alon, Noga},
  journal={Combinatorica},
  volume={41},
  number={4},
  pages={447--463},
  year={2021},
  publisher={Springer}
}
\bibliographystyle{plain}
\end{document}